\documentclass[11pt]{article}

\usepackage{fullpage}
\usepackage{hdb_macros}
\usepackage{tikz}
\usepackage{pgfplots}
\usepackage{pgfplotstable}
\usetikzlibrary{positioning}

\usepackage[draft,multiuser,inline,nomargin]{fixme}
\fxusetheme{color}
\FXRegisterAuthor{h}{eh}{\color{blue}Huck}
\FXRegisterAuthor{k}{ek}{\color{purple}Karthik}
\FXRegisterAuthor{p}{ep}{\color{teal}Philip}
\FXRegisterAuthor{e}{ee}{\color{teal}Evelyn}
\FXRegisterAuthor{s}{es}{\color{orange}Sasha}

\renewcommand{\C}{\mathcal{C}}
\newcommand{\Cperp}{\mathcal{C}^{\perp}}
\newcommand{\dualMMexp}{\ensuremath{\omega^{\perp}}}
\newcommand{\cR}{{\mathcal{R}}}

\newcommand{\QP}{\cc{QP}}

\newcommand{\full}[1]{#1} %

\title{Matrix Multiplication Verification Using Coding Theory}
\author{
    Huck Bennett\thanks{University of Colorado Boulder. Email: \email{huckbennett@gmail.com}. Supported by NSF Grant CCF-2312297. Most of this work was done while the author was at Oregon State University.}
    \and
    Karthik Gajulapalli\thanks{Georgetown University. Email: \email{kg816@georgetown.edu}. Supported by NSF grant CCF-2338730.}
     \and
 	Alexander Golovnev\thanks{Georgetown University. Email: \email{alexgolovnev@gmail.com}. Supported by NSF grant CCF-2338730.}
 	\and
        Evelyn Warton\thanks{Oregon State University. Email: \email{wartonp@oregonstate.edu}.}
 }
\date{}
\sloppy
\pgfplotsset{compat=1.18} 
\begin{document}
\pagenumbering{roman}
\maketitle
\listoffixmes

\begin{abstract}
We study the \emph{Matrix Multiplication Verification Problem} (MMV) where the goal is, given three $n \times n$ matrices $A$, $B$, and $C$ as input, to decide whether $AB = C$.
A classic randomized algorithm by Freivalds (MFCS, 1979) solves MMV in $\Ot(n^2)$ time, and a longstanding challenge is to (partially) derandomize it while still running in faster than matrix multiplication time (i.e., in $o(n^{\omega})$ time).

To that end, we give two algorithms for MMV in the case where $AB - C$ is \emph{sparse}. Specifically, when $AB - C$ has at most $O(n^{\delta})$ non-zero entries for a constant $0 \leq \delta < 2$, we give (1) a deterministic $O(n^{\omega - \eps})$-time algorithm for constant $\eps = \eps(\delta) > 0$, and (2) a randomized $\Ot(n^2)$-time algorithm using $\delta/2 \cdot \log_2 n + O(1)$ random bits. 
The former algorithm is faster than the deterministic algorithm of K\"{u}nnemann (ESA, 2018) when $\delta \geq 1.056$, and the latter algorithm uses fewer random bits than the algorithm of Kimbrel and Sinha (IPL, 1993), which runs in the same time and uses $\log_2 n + O(1)$ random bits (in turn fewer than Freivalds's algorithm).

Our algorithms are simple and use techniques from coding theory. Let $H$ be a parity-check matrix of a Maximum Distance Separable (MDS) code, and let $G = (I \stbar G')$ be a generator matrix of a (possibly different) MDS code in systematic form.
Our deterministic algorithm uses fast rectangular matrix multiplication to check whether $HAB = HC$ and $H(AB)^T = H(C^T)$, and our randomized algorithm samples a uniformly random row $\vec{g}'$ from $G'$ and checks whether $\vec{g}' AB  = \vec{g}' C$ and $\vec{g}' (AB)^T =  \vec{g}' C^T$.

We additionally study the \emph{complexity} of MMV.
We first show that all algorithms in a natural class of deterministic linear algebraic algorithms for MMV (including ours) require $\Omega(n^{\omega})$ time.
We also show a barrier to proving a super-quadratic running time lower bound for matrix multiplication (and hence MMV) under the Strong Exponential Time Hypothesis~(SETH). 
Finally, we study relationships between natural variants and special cases of MMV (with respect to deterministic $\Ot(n^2)$-time reductions).
\end{abstract}

\thispagestyle{empty}
\newpage

\tableofcontents
\newpage
\pagenumbering{arabic}

\section{Introduction}
\label{sec:introduction}

The goal of the \emph{Matrix Multiplication Problem} (MM) is to compute the product $AB$ of two $n \times n$ matrices $A$ and $B$ given as input. Matrix multiplication has many practical and theoretical applications, and because of this has been studied by an extensive line of work. The primary goal of this work has been to determine the running time $O(n^{\omega})$ of the fastest algorithms for MM, which is captured by the matrix multiplication exponent $\omega$.%
\footnote{Formally, $\omega$ is defined as the infimum over $\omega'$ such that the product of two $n \times n$ matrices can be computed in $O(n^{\omega'})$ time. So, MM algorithms are actually only guaranteed to run in $O(n^{\omega + \eps})$ time for any constant $\eps > 0$.}
The best upper bounds on $\omega$ and related quantities continue to improve~\cite{gall2018improved,alman2021refined,duan2022faster,gall2023faster,williams2024new}, and~\cite{williams2024new} recently showed the current best known bound of $\omega \leq 2.371552$.
The dream of this line of work is to show that $\omega = 2$, and this in fact holds under certain plausible combinatorial and group-theoretic conjectures (see~\cite[Conjecture~4.7 and Conjecture~3.4]{cohn2005group}).
Nevertheless, showing that $\omega = 2$ seems very challenging for the time being.

In this work, we consider a variant of matrix multiplication where the goal is to \emph{verify} that the product of two matrices is equal to a third matrix. Specifically, we study the \emph{Matrix Multiplication Verification Problem} (MMV) where, given three $n \times n$ matrices $A$, $B$, and $C$ as input, the goal is to decide whether $AB = C$.  MMV is clearly no harder than matrix multiplication---it can be solved in $O(n^{\omega})$ time by computing the product $AB$ and then comparing the product entry-wise against $C$---but it is natural to ask whether it is possible to do better. In what became classic work, Freivalds~\cite{freivalds1979fast} answered this question in the affirmative and gave a simple, randomized algorithm that solves MMV in $\Ot(n^2)$ time. 
This $\Ot(n^2)$ running time bound is essentially the best possible, and so, unlike matrix multiplication, the complexity of MMV is relatively well understood. 

However, it is in turn natural to ask whether it is possible to \emph{derandomize} Freivalds's algorithm partially or completely. More specifically, it is natural to ask whether it is possible to give a  \emph{deterministic} algorithm for MMV running in $\Ot(n^2)$ time or at least $O(n^{\omega - \eps})$ time for constant $\eps > 0$.%

\footnote{We use the notation $\Ot(f(n))$ to mean $f(n) \cdot \poly(\log f(n))$. Freivalds's algorithm uses $O(n^2)$ arithmetic operations, each of which takes $\poly(\log n)$ time when working over integer matrices with entries bounded in magnitude by $\poly(n)$; we assume this setting in the introduction. \\
\indent Of course, it is only possible for such a $O(n^{\omega - \eps})$-time algorithm to exist if $\omega > 2$. We assume that this is the case throughout the introduction.}
Or, if it is not possible to give a deterministic algorithm for MMV with these running times, it is natural to ask whether it is possible to use fewer random bits than Freivalds's algorithm, which uses $n$ random bits.
Trying to answer these questions has become a key goal for derandomization efforts, and has received substantial study~\cite{alon-1990,kimbrel1993probabilistic,korec2014deterministic,kunnemann2018nondeterministic}.

\begin{table}
    \centering
    \bgroup
    \def\arraystretch{1.5}%
    \begin{tabular}{|c|c|c|c|c|}
        \hline
         \textbf{Algorithm} & \textbf{Asymptotic Runtime} & \textbf{Bits of Randomness}\\
         \hline
         Matrix Multiplication & $n^{\omega + \eps}$ & 0 \\
         \hline
         Random Entry Sampling (folklore) & $n^{3-\delta}$ & $2n^{2-\delta}  \cdot \log_2(n) + O(1)$ \\
         \hline
         Freivalds's Algorithm \cite{freivalds1979fast} & $n^2$ & $n$ \\
         \hline
        Vandermonde Mat. Sampling \cite{kimbrel1993probabilistic} & $n^2$ & $\log_2(n) + O(1)$ \\
         \hline
         Multipoint Poly. Evaluation \cite{kunnemann2018nondeterministic} & $n^2+n^{1 + \delta}$ & 0 \\
        \hline
        Cauchy Bound \cite{korec2014deterministic} & $n^3$ ($n^2$ in Integer RAM) & 0 \\
         \hline
            \textbf{Parity Check/Fast RMM (Thm.~\ref{thm:rmm-deterministic-intro})} & $n^{\omega(1, 1, \delta/2) + \eps}$ & 0 \\
         \hline 
            \textbf{Cauchy Mat. Sampling (Thm.~\ref{thm:fast-random-intro})} & $n^2$ & $\frac{\delta}{2} \cdot \log_2(n) + O(1)$ \\
         \hline
    \end{tabular}
    \egroup
    \caption{
    Algorithms for MMV on matrices $A,B,C \in \mathbb{Z}^{n \times n}$ with entries of magnitude at most $\poly(n)$ and such that $AB - C$ has at most $n^\delta$ non-zero entries for $0 \leq \delta \leq 2$. Our new algorithms are shown in bold.
        We list asymptotic running times, with $\poly(\log n)$ factors suppressed for readability, and the number of random bits used to achieve success probability $1/2$. (Each of the three listed randomized algorithms has one-sided error, so this probability is meaningful.) Here $\omega(\cdot, \cdot, \cdot)$ is the rectangular matrix multiplication exponent, $\omega = \omega(1, 1, 1)$ is the (square) matrix multiplication exponent, and $\eps > 0$ is an arbitrarily small positive constant.}
   \label{tbl:mmv-algs}
\end{table}

\pgfplotsset{compat=1.18, width = 10cm, height=10cm}

\begin{figure}[t]
\begin{center}
\begin{tikzpicture}[scale=0.8]
\begin{axis}[
    xlabel={$\delta$},
    ylabel={Running Time Exponent},
    xmin=1.00, xmax=2.00,
    ymin=2.00, ymax=2.50,
    xtick={1.00, 1.25, 1.50, 1.75, 2.00},
    ytick={2.0, 2.1, 2.2, 2.3, 2.4, 2.5},
    legend pos=north east,
    ymajorgrids=true,
    grid style=dashed,
    legend cell align=left,
    legend pos=north east,
    scatter/classes={%
    a={mark=*,fill=blue}}
]

\addplot[
    color=green,
    mark=circle,
    style=thick
    ]
    coordinates {
    (1, 2.371)(1.055322, 2.371)(1.1, 2.371)(1.2, 2.371)(1.3, 2.371)(1.4, 2.371)(1.5, 2.371)(1.6, 2.371)(1.7, 2.371)(1.8, 2.371)(1.9, 2.371)(2.0,2.371)
    };
    \addlegendentry{MM \cite{williams2024new}}

\addplot[
    color=red,
    mark=circle,
    style=thick
    ]
    coordinates {
    (1, 2.0)(1.055322, 2.055322)(1.1, 2.1)(1.2, 2.2)(1.3, 2.3)(1.4, 2.4)(1.5, 2.5)(1.6, 2.6)(1.7, 2.7)(1.8, 2.8)(1.9, 2.9)(2.0,3.0)
    };
    \addlegendentry{\cite{kunnemann2018nondeterministic}}

\addplot[scatter,
    scatter src=explicit symbolic,
    color=blue,
    mark=circle,
    style=thick
    ]
    table[meta=label] {
    x y label
1 2.043 a
1.055322 2.055322 a
1.1 2.066 a 
1.2 2.093 a
1.3 2.122 a
1.4 2.153 a
1.5 2.186 a
1.6 2.221 a
1.7 2.257 a
1.8 2.294 a
1.9 2.332 a
2.0 2.371 a
    };
    
    \addlegendentry{\cref{thm:rmm-deterministic-intro}}

\end{axis}
\end{tikzpicture}
\end{center}

\caption{
Running times of deterministic algorithms for MMV when $AB - C$ is $O(n^{\delta})$-sparse for $1 \leq \delta \leq 2$.
Our algorithm from \cref{thm:rmm-deterministic-intro} is faster than the best known algorithms for matrix multiplication~\cite{williams2024new} and faster than K\"{u}nnemann's algorithm~\cite{kunnemann2018nondeterministic} for all $1.056 \leq \delta < 2$. The plotted blue points corresponding to the running time of the algorithm in \cref{thm:rmm-deterministic-intro} are derived from the bounds on $\omega(1, 1, \delta/2)$ in~\cite[Table 1]{williams2024new}. The line segments connecting them are justified by the fact that $\omega(1, 1, \cdot)$ is a convex function.}

\label{fig:alg-running-times}
\end{figure}

\subsection{Our Results}
\label{sec:our-results}

Our main results are two new algorithms for the Matrix Multiplication Verification Problem in the \emph{sparse} regime, i.e., in the case where $AB - C$ is promised to have few non-zero entries (if any). 
See \cref{tbl:mmv-algs} for a summary of our algorithms and how they compare to other known algorithms for MMV.
Additionally, we give a barrier for giving a fast algorithm for MMV using a broad class of linear algebraic techniques, a barrier to showing hardness of MMV, and reductions between variants of MMV.

\subsubsection{Algorithms}
\label{sec:summary-algorithms}

Besides being inherently interesting, MMV in the sparse regime is the natural decision version of the well-studied \emph{Output-Sensitive Matrix Multiplication Problem} (OSMM). It is also motivated by the following scenario. Suppose that Alice wants to compute the product $AB$ of two large matrices $A$ and $B$, but has restricted computational resources. So, she sends $A$ and $B$ to Bob, who has more extensive computational resources. Bob computes the product $AB$, and sends the result back to Alice over a noisy channel (without error-correction, which increases the size of the message), from which Alice receives a matrix $C$. Alice knows that either $C = AB$ as desired, or that $C$ is corrupted but (with high probability) only differs from $AB$ in a few entries. She wants to check which is the case efficiently.

We define $\norm{\vec{v}}_0$ (respectively, $\norm{M}_0$) to be the number of non-zero entries in (i.e., Hamming weight of) a vector $\vec{v}$ (respectively, matrix $M$). We call a vector $\vec{v}$ (respectively, matrix $M$) \emph{$t$-sparse} if $\norm{\vec{v}}_0 \leq t$ (respectively, if $\norm{M}_0 \leq t$).

Our first algorithm is deterministic, and
uses fast rectangular matrix multiplication. For $\alpha, \beta, \gamma \in [0, 1]$, let the rectangular matrix multiplication exponent $\omega(\alpha, \beta, \gamma)$ be the infimum over values $\omega' > 0$ such that the product of a $n^{\alpha} \times n^{\beta}$ matrix and a $n^{\beta} \times n^{\gamma}$ matrix can be computed using $O(n^{\omega'})$ arithmetic operations. Note that $\omega = \omega(1, 1, 1)$ is the standard (square) matrix multiplication exponent.

\begin{theorem}[Fast deterministic MMV for sparse matrices, informal]
\label{thm:rmm-deterministic-intro}
Let $A, B, C \in \Z^{n \times n}$ be matrices satisfying $\max_{i,j} \set{\abs{A_{i,j}}, \abs{B_{i,j}}, \abs{C_{i,j}}} \leq n^c$ for some constant $c > 0$ and 
satisfying ${\norm{AB - C}_0 \leq n^{\delta}}$ for $0 \leq \delta \leq 2$.
Then for any constant $\eps > 0$, there is a deterministic algorithm for MMV on input $A, B, C$ that runs in $O(n^{\omega(1, 1, \delta/2) + \eps})$ time.
\end{theorem}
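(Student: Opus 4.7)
My plan is to instantiate the parity-check matrix idea sketched in the abstract with a Reed--Solomon-type MDS code of minimum distance $d \approx n^{\delta/2}$, and verify $AB = C$ by checking the two conditions $HAB = HC$ and $H(AB)^T = HC^T$ using fast rectangular matrix multiplication. The shape $n^{\delta/2} \times n$ for $H$ is dictated by wanting rectangular products of cost exactly $n^{\omega(1,1,\delta/2)+\eps}$, while simultaneously keeping the MDS distance large enough to detect any sparse discrepancy.

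First I would fix a prime $p = \Theta(n^{2c+1})$ so that every entry of $AB - C$ lies in $(-p/2, p/2)$, which guarantees that $AB = C$ over $\Z$ iff $AB \equiv C \pmod{p}$; Bertrand's postulate plus a deterministic primality test produces $p$ in $\poly(\log n)$ time. Since $p \geq n$, an $[n, n - d + 1, d]$ generalized Reed--Solomon code over $\mathbb{F}_p$ exists for $d = \lfloor n^{\delta/2} \rfloor + 1$, and I would take $H \in \mathbb{F}_p^{(d-1) \times n}$ to be its explicit Vandermonde-type parity-check matrix, which is constructible in $\Ot(n^{1 + \delta/2})$ time. The algorithm then computes $HA$ (an $n^{\delta/2} \times n$ by $n \times n$ product), then $(HA) B$, and separately $HC$; symmetrically, it computes $HB^T$, then $(HB^T) A^T$, and $HC^T$. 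Each of these six multiplications has rectangular shape $n^{\delta/2} \times n$ by $n \times n$, costing $O(n^{\omega(1,1,\delta/2) + \eps})$ arithmetic operations over $\mathbb{F}_p$; the $\poly(\log p)$ bit overhead is absorbed into the $n^\eps$ slack. If both comparisons succeed, output $AB = C$; otherwise output $AB \neq C$.

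For correctness, set $M := AB - C \pmod{p}$ and suppose $M \neq 0$ but both $HM = 0$ and $HM^T = 0$. The MDS property forces every non-zero codeword to have Hamming weight at least $d$, so every non-zero column of $M$ has at least $d$ non-zero entries and, by the second check, every non-zero row of $M$ also has at least $d$ non-zero entries. If $M$ has $k$ non-zero columns and $\ell$ non-zero rows, then the non-zero entries of every non-zero column lie entirely within the non-zero rows, forcing $\ell \geq d$, and symmetrically $k \geq d$. Counting entries row by row gives $\norm{M}_0 \geq \ell \cdot d \geq d^2 > n^{\delta}$, contradicting the sparsity hypothesis.

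The crux of the argument, and what motivates checking a \emph{pair} of products rather than a single one, is this two-sided support bound. A lone check $HM = 0$ does not suffice, since $M$ could concentrate its $n^{\delta}$ non-zeros into a small number of very heavy columns that each happen to be codewords. Requiring $HM = 0$ and $HM^T = 0$ simultaneously forces both the row and column supports of $M$ to be heavy, and the resulting $d \cdot d$ lower bound on $\norm{M}_0$ just beats the sparsity budget precisely when $d = \Theta(n^{\delta/2})$; this same $d$ also minimizes the rectangular multiplication cost, so the choice is tight at both ends. I expect the arithmetic bookkeeping (choice of $p$, construction of $H$, absorbing $\poly(\log p)$ factors into $n^{\eps}$) to be entirely routine rather than a real obstacle.
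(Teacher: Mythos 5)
Your core algorithm --- checking $HAB = HC$ and $H(AB)^T = HC^T$ with $H$ a Vandermonde parity-check of an MDS/Reed--Solomon code of block length $n$ and distance $d \approx n^{\delta/2}$, at cost dominated by rectangular products of shape $n^{\delta/2}\times n$ by $n\times n$ --- is exactly the paper's algorithm (Theorem~\ref{thm:rmm-deterministic-formal} and Corollary~\ref{cor:rmm-deterministic-formal-integers}). Your correctness argument is the contrapositive of the paper's \cref{lem:mat-sparse-implies-rowcol-sparse}: they say a $t$-sparse non-zero $M$ must have a $\sqrt{t}$-sparse row or column, hence one check fails, while you say if both checks pass then $\|M\|_0 \geq d^2 > n^\delta$. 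These are the same double-counting argument, and your two-sidedness discussion of why a single check fails is correct.

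The genuine gap is in your handling of the integer case. You pick a prime $p = \Theta(n^{2c+1})$ so that $AB \equiv C \pmod{p}$ certifies $AB = C$ over $\Z$, and you claim this prime ``is produced in $\poly(\log n)$ time by Bertrand's postulate plus a deterministic primality test.'' That is not right: Bertrand's postulate only guarantees existence, and \emph{deterministically finding} a prime near $N$ is a well-known open problem --- unconditionally the best bound on prime gaps near $N$ is roughly $N^{0.525}$, so a search-plus-AKS loop takes about $N^{0.525 + o(1)}$ time, and even sieving takes $\widetilde{O}(N)$ time. For $N = n^{2c+1}$ with $c$ an arbitrary constant this can overshoot your $O(n^{\omega(1,1,\delta/2)+\eps})$ budget badly (e.g.\ $c = 5$, $\delta$ small forces a search over a $\Theta(n^{11})$-sized range while the target time is barely above $n^2$). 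The paper sidesteps this entirely: in \cref{cor:rmm-deterministic-formal-integers} they choose a \emph{small} prime $p \in [n, 2n)$ found by the Sieve of Eratosthenes in $\widetilde{O}(n)$ time, compute $H(AB-C)$ over $\Z$ (with $H$ still reduced mod $p$), and then observe that $H\vec{c} = \vec{0}$ over $\Z$ for a sparse non-zero column $\vec{c}$ would, after factoring out the highest power of $p$ dividing all entries of $\vec{c}$, yield a non-zero, equally sparse vector $\vec{c}'$ over $\F_p$ with $H\vec{c}' = \vec{0}$, contradicting the MDS property. This ``factor out $p$-powers'' step is the missing idea that lets you use a cheaply-constructible small prime rather than a large one. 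The rest of your argument is fine once this is patched.
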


We note that $\omega(1, 1, \beta) < \omega$ for all $\beta < 1$ (assuming $\omega > 2$; see \cref{thm:rectMM-ub}), and so our algorithm is faster than matrix multiplication when $AB - C$ is promised to be $O(n^{\delta})$-sparse for constant $\delta < 2$. Furthermore, it is faster than K\"{u}nnemann's algorithm~\cite{kunnemann2018nondeterministic}, which is also for MMV in the regime where $AB - C$ is sparse, when $\omega(1, 1, \delta/2) < 1 + \delta$. 
The equation $\omega(1, 1, \delta/2) = 1 + \delta$ whose unique solution corresponds to the crossover point at which our algorithm becomes faster than K\"{u}nnemann's turns out to be relevant in other contexts too~\cite{journals/jacm/Zwick02}, and~\cite{gall2018improved,williams2024new} both provide bounds on its solution. Specifically,~\cite{williams2024new} shows that the solution $\delta$ to this equation satisfies $\delta \leq 1.056$, and so our algorithm in \cref{thm:rmm-deterministic-intro} is (strictly) faster than any previously known deterministic algorithm for MMV when $1.056 \leq \delta < 2$. See \cref{fig:alg-running-times}.

Additional bounds on $\omega(1, \delta/2, 1) = \omega(1, 1, \delta/2)$---and hence the running time of the algorithm in \cref{thm:rmm-deterministic-intro}---appear in~\cite[Table 1]{williams2024new}. For example, that table shows that $\omega(1, 1, 0.55) < 2.067$ and $\omega(1, 1, 0.95) < 2.333$ (which correspond to $\delta = 1.1$ and $\delta = 1.9$, respectively). We again refer the reader to \cref{fig:alg-running-times}.
We also note that our algorithm runs in essentially optimal $\Ot(n^2)$ time when $\delta \leq 0.642 \leq 2 \omega^{\perp}$, where $\omega^{\perp} := \sup \set{\omega' > 0 : \omega(1, 1, \omega') = 2} \geq 0.321$ is the dual matrix multiplication exponent~\cite{williams2024new}, but that K\"{u}nnemann's algorithm~\cite{kunnemann2018nondeterministic} runs in $\Ot(n^2)$ time for any $\delta \leq 1$.

Our second algorithm runs in $\Ot(n^2)$ time, but is randomized. It uses few bits of randomness when $AB - C$ is sparse.

\begin{theorem}[Fast randomized MMV for sparse matrices, informal] \label{thm:fast-random-intro}
Let $c > 0$ be a constant, let $A, B, C \in \Z^{n \times n}$ be matrices 
satisfying $\max_{i,j} \set{\abs{A_{i,j}}, \abs{B_{i,j}}, \abs{C_{i,j}}} \leq n^c$ and 
satisfying ${\norm{AB - C}_0 \leq n^{\delta}}$ for $0 \leq \delta \leq 2$, and let $\eps = \eps(n) \geq 1/n$.
Then there is a randomized algorithm for MMV on input $A, B, C$ that runs in $\Ot(n^2)$ time, succeeds with probability $1 - \eps$, and uses at most $\lceil \delta/2 \cdot \log_2(n) + \log_2(1/\eps) \rceil$ bits of randomness.
\end{theorem}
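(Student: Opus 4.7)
The plan is to use a Cauchy matrix $G' \in \mathbb{F}_p^{k \times n}$ with $G'_{i,j} = 1/(x_i - y_j)$ for distinct field elements $x_1, \ldots, x_k, y_1, \ldots, y_n \in \mathbb{F}_p$, where $p$ is a prime of size $\poly(n)$ chosen large enough that $D := AB - C$ has the same zero pattern over $\mathbb{F}_p$ as over $\Z$ and all $k + n$ Cauchy parameters are distinct modulo $p$. Set $k := \lceil n^{\delta/2}/\eps \rceil$. The algorithm samples $i \in [k]$ uniformly at random, using $\lceil \log_2 k \rceil \leq \lceil \delta/2 \cdot \log_2 n + \log_2(1/\eps) \rceil$ bits, sets $\vec{g}' := G'_{i,*}$, and checks whether $\vec{g}' AB = \vec{g}' C$ and $\vec{g}' (AB)^T = \vec{g}' C^T$. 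Each check is a pair of matrix-vector products and runs in $\Ot(n^2)$ bit operations.

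Completeness is immediate. For soundness, assume $AB \neq C$, let $r_\star$ and $c_\star$ denote the numbers of non-zero rows and non-zero columns of $D$, and set $m := \norm{D}_0 \leq n^\delta$. The goal is to show that the number of \emph{bad} rows of $G'$---those $\vec{g}'$ for which both $\vec{g}' D = 0$ and $D \vec{g}'^T = 0$---is at most $\sqrt{m}$, which immediately gives failure probability at most $\sqrt{m}/k \leq n^{\delta/2}/k \leq \eps$. The main step is a polynomial argument. Choose a non-zero row index $r_0$ of $D$ minimizing the row support size $t := |\{c : D_{r_0, c} \neq 0\}|$; by averaging, $t \leq m/r_\star$. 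The single constraint $(D\vec{g}'^T)_{r_0} = 0$ becomes $\sum_{c : D_{r_0, c} \neq 0} D_{r_0, c}/(x_i - y_c) = 0$, and clearing denominators rewrites it as $\tilde N(x_i) = 0$ for
\[
\tilde N(x) \;:=\; \sum_{c\,:\, D_{r_0,c}\neq 0} D_{r_0,c} \prod_{\substack{c' \neq c \\ D_{r_0,c'} \neq 0}}(x - y_{c'}),
\]
which has degree at most $t - 1$. Evaluating $\tilde N$ at $y_{c_0}$ for any $c_0$ in the support of row $r_0$ kills every term except $c = c_0$ and yields $D_{r_0, c_0} \prod_{c' \neq c_0}(y_{c_0} - y_{c'}) \neq 0$, so $\tilde N \not\equiv 0$; hence it has at most $t - 1$ roots and at most $t - 1 \leq m/r_\star - 1$ rows of $G'$ satisfy the column check.

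The symmetric argument (take a minimum-support column of $D$ and apply it to the row check $\vec{g}' D = 0$) bounds the number of rows satisfying the row check by $m/c_\star - 1$. A bad row must satisfy both, so the number of bad rows is at most $\min(m/r_\star, m/c_\star) - 1 = m/\max(r_\star, c_\star) - 1$. Since every non-zero entry of $D$ lies in the $r_\star \times c_\star$ submatrix indexed by its non-zero rows and columns, $r_\star \cdot c_\star \geq m$, and hence $\max(r_\star, c_\star) \geq \sqrt{r_\star \cdot c_\star} \geq \sqrt{m}$. Therefore the number of bad rows is at most $\sqrt{m} - 1 < n^{\delta/2}$, as required. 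The main bookkeeping that will need care is the choice of $p$ so that the zero pattern of $D$ is preserved modulo $p$ (it suffices to take $p > 2\max_{i,j}|D_{i,j}| = \poly(n)$) and the verification that $\mathbb{F}_p$-arithmetic keeps the total cost at $\Ot(n^2)$; both are standard.
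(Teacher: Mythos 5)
Your algorithm, parameter choices, and random-bit accounting match the paper's proof (Theorem~\ref{thm:randomized} and Corollary~\ref{cor:randomized}), and your soundness argument is essentially correct: you re-derive the key root-counting bound by an explicit Cauchy polynomial argument rather than invoking the abstract $k$-regularity property, and you replace the sparse-row-or-column step (\cref{lem:mat-sparse-implies-rowcol-sparse}) with an averaging plus $r_\star c_\star \geq m$ observation, which is a nice, slightly tighter variant.

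However, there is a genuine gap hiding in what you dismiss as ``standard bookkeeping.'' You reduce $D = AB - C$ modulo $p$ and do all the arithmetic over $\F_p$, which forces $p > 2\max_{i,j}|D_{i,j}| = \Theta(n^{2c+1})$ so that the mod-$p$ support of $D$ equals its integer support (this is exactly what makes $\tilde N(y_{c_0}) \neq 0$ in your argument). But your algorithm must deterministically \emph{find} such a prime $p$. The Sieve of Eratosthenes costs $\Ot(n^{2c+1})$, which already exceeds the $\Ot(n^2)$ budget for any constant $c > 1/2$, and even the best unconditional deterministic prime-finding algorithms (roughly $N^{1/2+o(1)}$ for a prime near $N$) exceed the budget once $c$ is large enough; sampling $p$ at random would, of course, blow up the random-bit count you are trying to control. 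The paper sidesteps this entirely: it picks a prime $p \in [k+n, 2(k+n)] = O(n^2)$ (just large enough to make the $k+n$ Cauchy parameters distinct, found in $\Ot(n^2)$ by the Sieve), performs the row/column checks \emph{over $\Z$} with the Cauchy entries read as integers in $\set{0,\dots,p-1}$, and argues soundness via determinants: if an $\lceil n^{\delta/2}\rceil \times \lceil n^{\delta/2}\rceil$ Cauchy submatrix $S|_{I,J}$ annihilated the non-zero integer vector $\vec{r}|_I$ over $\Z$, then $\det(S|_{I,J}) = 0$ over $\Z$ and hence over $\F_p$, contradicting super-regularity. That argument never needs the entries of $D$ to survive reduction mod~$p$, so the small prime suffices. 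You could repair your version either by switching to this integer-determinant phrasing of the root-counting bound, or by working over a constant number of $O(n^2)$-size primes whose product exceeds $\max|D_{i,j}|$, accepting only if every run accepts and reusing the same random index across primes.
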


\cref{thm:fast-random-intro} improves on the number of random bits used by the algorithm of Kimbrel and Sinha~\cite{kimbrel1993probabilistic} when $\delta < 2$ (which uses $\log_2(n) + \log_2(1/\eps) + O(1)$ random bits regardless of the sparsity of $AB - C$), and matches the number of random bits used by their algorithm when $\delta = 2$. The algorithms both run in $\Ot(n^2)$ time.
In fact, one may think of the algorithm summarized in \cref{thm:fast-random-intro} as a natural extension of the algorithm in~\cite{kimbrel1993probabilistic} to handle the sparse case more efficiently, although it requires additional techniques to implement. (Our algorithm requires matrices with a stronger pseudorandomness property than theirs; see the ``algorithmic techniques'' section below.)

We note that \cref{thm:fast-random-intro} only improves on known algorithms when $1 < \delta < 2$, and only by a factor of $\delta/2$. Indeed, as mentioned above, when $\delta \leq 1$ K\"{u}nnemann's algorithm~\cite{kunnemann2018nondeterministic} solves MMV \emph{deterministically} in $\Ot(n^2)$ time, and when $\delta = 2$ our algorithm matches the number of random bits used by Kimbrel and Sinha's algorithm. 
Although seemingly modest, this constant-factor improvement is not surprising: any super-constant improvement on the number of bits used by~\cite{kimbrel1993probabilistic} (i.e., an MMV algorithm using $o(\log n)$ random bits) could be turned into a deterministic algorithm for MMV with only a sub-polynomial (i.e., $n^{o(1)}$) multiplicative increase in running time.

\paragraph{Algorithmic techniques.}
Here we briefly summarize the techniques that we use for the MMV algorithms corresponding to \cref{thm:rmm-deterministic-intro,thm:fast-random-intro}.
We start by remarking that \cref{thm:rmm-deterministic-intro,thm:fast-random-intro} hold not just for matrices over $\Z$ with entries of polynomial magnitude, but also for matrices over all finite fields $\F_q$ with $q \leq \poly(n)$.%
\footnote{The algorithms also work over larger finite fields, but with slower running times due to the increased bit complexity of performing arithmetic operations over those fields.}
In fact, our algorithms work ``natively'' in the finite field setting---i.e., on $n \times n$ matrices $A, B, C$ over finite fields $\F_q$---which is directly amenable to using techniques from coding theory. We assume this setting in the description below.
Furthermore, there is a linear-time, sparsity-preserving reduction from $\MMV$ to the special case of $\MMV$ where $C$ is fixed as $C = 0$ and the goal is to decide whether $AB = 0$ for input matrices $A, B$; see \cref{def:all-zeroes,thm:mmv-all-zeroes-reduction}. We will also generally assume this setting in the introduction.

For both of our algorithms, we will use the observation that if $AB$ is non-zero and $t$-sparse then at least one row or column of $AB$ must be non-zero and $k$-sparse for $k := \lfloor \sqrt{t} \rfloor$. A similar observation appears in~\cite{wu2023correcting}. %

Our first, deterministic algorithm (\cref{thm:rmm-deterministic-intro}) uses a matrix $H$ over $\F_q$ such that any $k$ columns of $H$ are linearly independent. 
Equivalently, we require a matrix $H \in \F_q^{m \times n}$ such that for all non-zero vectors $\vec{x} \in \F_q^n$ with $\norm{\vec{x}}_0 \leq k$ (corresponding to a sparse, non-zero column or row of $AB$), $H\vec{x} \neq \vec{0}$. 
This is exactly the property guaranteed by the parity-check matrix $H$ of an error correcting code $\mathcal{C} \subseteq \F_q^n$ with minimum distance $d > k$. Moreover, if a code $\C$ with minimum distance $d = k + 1$ is a so-called Maximum Distance Separable (MDS) code, then it has a $k \times n$ parity-check matrix $H$.
MDS codes with useful parameters exist and have efficiently constructible parity-check matrices. In particular, (generalized) Reed-Solomon codes are MDS codes, and exist when $k \leq n \leq q$ (see, e.g.,~\cite{hall-coding}). Their parity-check matrices $H$ are Vandermonde matrices, which are constructible in $k n \cdot \poly(\log q) \leq n^2 \cdot \poly(\log q)$ time.

Our algorithm then uses fast rectangular matrix multiplication to compute $HAB = (HA)B$ and $H(AB)^T = (HB^T) A^T$ using roughly $n^{\omega(1, 1, \delta/2)}$ arithmetic operations, where $0 \leq \delta \leq 2$ is such that $t \leq n^{\delta}$. If $AB = 0$, then $HAB = H(AB)^T = 0$. On the other hand, if $AB \neq 0$ then $AB$ is $t$-sparse and therefore has a $k$-sparse row or column. So, at least one of the expressions $HAB$ and $H(AB)^T$ is non-zero.

Our second, randomized algorithm (\cref{thm:fast-random-intro}) uses a matrix $S \in \F_q^{m \times n}$ with the property that all of its $k \times k$ submatrices are non-singular.
Matrices $S$ with this property are called \emph{$k$-regular}, and matrices $S$ all of whose square submatrices (of any size) are non-singular are called \emph{super regular} (see also the definitions in \cref{sec:regular-matrices}).
We note that $k$-regularity is stronger than the property we require for $H$ in the first algorithm. In particular, if a matrix $S \in \F_q^{m \times n}$ is $k$-regular and $0 < \norm{\vec{x}}_0 \leq k$, then $\norm{S\vec{x}}_0 \geq m - k + 1$. I.e., $S$ being $k$-regular implies not only that $S\vec{x}$ is non-zero, but that $S\vec{x}$ has relatively high Hamming weight for such $\vec{x}$.
This property is useful because it implies that $\Pr[\iprod{\vec{s}, \vec{x}} \neq 0] \geq (m - k + 1)/m$, where $\vec{s}$ is a random row of $S$. 
Indeed, this observation leads to our second algorithm: we sample a random row $\vec{s}$ from a $k$-regular matrix $S \in \F_q^{m \times n}$ and check whether $\vec{s}AB = \vec{0}$ and $\vec{s}(AB)^T = \vec{0}$. 
Setting, e.g., $m = 2k$, we get that this algorithm succeeds with probability at least $(2k - k + 1)/(2k) > 1/2$.

It remains to construct (rows of) $k$-regular matrices $S$ efficiently.
Although a priori it is not even obvious that $k$-regular matrices exist for arbitrary $k$, in fact \emph{super regular} matrices exist and are efficiently constructible. Specifically, we use a family of super regular (and hence $k$-regular) matrices called \emph{Cauchy matrices}; the entries of such a matrix $S$ are defined as $S_{i,j} = 1/(x_i - y_j)$, where $x_1, \ldots, x_m, y_1, \ldots, y_n$ are distinct elements of $\F_q$.
In fact, as follows from their definition, given a (random) index $1 \leq i \leq m$, it is even possible to construct the $i$th row of a Cauchy matrix $S$ efficiently without computing the rest of the matrix, as needed.%

Finally, we remark that there is a deep connection between MDS codes and super regular matrices (and between generalized Reed-Solomon codes, Vandermonde matrices, and Cauchy matrices). Specifically, if $G = (I \stbar S)$ is the generator matrix of an MDS code in systematic form, then $S$ is a super regular matrix~\cite{roth1989mds}.
Moreover, if such a matrix $G$ is the generator matrix of a generalized Reed-Solomon code, then $S$ is a Cauchy matrix~\cite{roth1989mds}. See also \cref{rem:mds-vandermonde-cauchy}.

\subsubsection{Barriers}
\label{sec:summary-barriers}

The dream for the line of work described in this paper is to give a deterministic, $\Ot(n^2)$-time algorithm for MMV on arbitrary matrices. 
However, achieving this goal has proven to be very difficult despite a substantial amount of work towards it. So, it is natural to ask whether perhaps no such algorithm exists, i.e., whether MMV is in some sense \emph{hard}. We first show a result in this direction, and then show a barrier result to showing SETH hardness of MMV (and even MM).%
\footnote{More properly, our first result is a barrier to giving a fast algorithm for MMV, and our second result is a barrier to showing hardness of MMV (i.e., it ``gives a barrier to giving a barrier'' for a fast MMV algorithm).}

\paragraph{Linear algebraic algorithms barrier.} 
We first prove that a natural class of deterministic linear algebraic algorithms for MMV based on multiplying smaller matrices---including the algorithm in \cref{thm:rmm-deterministic-intro}---cannot run in less than $n^{\omega}$ time using when instantiated with a matrix multiplication subroutine running in worst-case rectangular matrix multiplication time and when performing all multiplications independently. Specifically, the $\Omega(n^{\omega})$ lower bound holds if for all $\alpha, \beta \geq 0$, the subroutine requires $\Omega(n^{\omega(1, 1, \alpha)})$ to compute the product of an $n \times n^{\alpha}$ matrix and an $n \times n$, and $\Omega(n^{\omega(1, 1, \beta)})$ time to compute the product of an $n \times n$ matrix and an $n \times n^{\beta}$ matrix.

The idea is that natural algorithms for verifying that $AB = C$ for $n \times n$ matrices $A, B, C$ including ours amount to performing $k$ ``zero tests.''
\footnote{Here we describe our barrier for MMV, but in \cref{sec:linealgbar} we present it for the closely related All Zeroes Problem.}
More specifically, the $i$th such test checks that $L_i (AB - C) R_i = 0$ for some fixed $n^{\alpha_i} \times n$ matrix $L_i$ and $n \times n^{\beta_i}$ matrix $R_i$, where $\alpha_i, \beta_i \in [0, 1]$. We observe that the conditions $L_i (AB - C) R_i = 0$ for $i = 1, \ldots, k$ together correspond to a homogeneous system of $\sum_{i=1}^k n^{\alpha_i + \beta_i}$ linear equations in the $n^2$ variables corresponding to the entries of $X = AB - C$ for $1 \leq i, j \leq n$. So, for this system to have $X_{i,j} = 0$ for $1 \leq i, j \leq n$ as its unique solution, it must be the case that $\sum_{i=1}^k n^{\alpha_i + \beta_i} \geq n^2$, which we show implies that $\sum_{i=1}^k n^{\omega(1, 1, \min(\alpha_i, \beta_i))} \geq \sum_{i=1}^k n^{\omega(\alpha_i, 1, \beta_i)} \geq n^{\omega}$.
Therefore, an algorithm that independently computes each product $L_i AB R_i$ in time $\Omega(n^{\omega(1, 1, \min(\alpha_i, \beta_i))})$ uses $\Omega(n^{\omega})$ time.
This barrier appears in \cref{thm:lin-alg-barrier,cor:lin-alg-algo-barrier}.

\paragraph{A barrier to SETH-hardness of MM.}
While under certain reasonable conjectures, the matrix multiplication exponent $\omega=2$ (see \cite[Conjecture~4.7 and Conjecture~3.4]{cohn2005group}), the best provable upper bound we have is $\omega < 2.371552$ by \cite{williams2024new}. 
Nevertheless, given the apparent difficulty of showing $\omega \approx 2$, it is natural to ask whether MM is in fact \emph{hard}.
To that end, we study showing its hardness under the \emph{Strong Exponential Time Hypothesis} (SETH).
However, rather than showing SETH-hardness of MM, we show a \emph{barrier} to proving $n^{\gamma}$-hardness of MM for constant $\gamma > 2$ under SETH. (Because MMV is trivially reducible to MM, our hardness barrier result also applies to MMV.)

We informally define several concepts used in the statement of our result.
SETH says that for large constant $k$, $k$-SAT instances on $n$ variables take nearly $2^n$ time to solve, and the \emph{Nondeterministic Strong Exponential Time Hypothesis} (NSETH) says that certifying that such $k$-SAT formulas are \emph{not} satisfiable takes nearly $2^n$ time even for nondeterministic algorithms.
We call a matrix \emph{rigid} if the Hamming distance between it and all low-rank matrices is high (the Hamming distance and rank are quantified by two parameters).
Rigid matrices have many connections to complexity theory and other areas, and a key goal is to find explicit, deterministic constructions of such matrices.

Intuitively, NSETH rules out showing hardness of problems with non-trivial co-nondeterministic algorithms under SETH. Somewhat more precisely, assuming NSETH, problems contained in $\cc{coTIME}[f(n)]$ (but perhaps only known to be in $\cc{TIME}[g(n)]$ for $g(n) = \omega(f(n))$), cannot be shown to be $\Omega(f(n)^{1+\eps})$-hard under SETH. (See \cref{def:seth-hardness} for a formal definition of SETH-hardness.)
K\"{u}nnemann~\cite{kunnemann2018nondeterministic} noted that, because Freivald's algorithm shows that MMV is in $\cc{coTIME}[n^2 \cdot \poly(\log n)]$, NSETH rules out showing $\Omega(n^{\gamma})$ hardness of MMV under SETH for constant $\gamma > 2$.

In this work, we extend this observation and give a barrier not only to showing SETH-hardness of MMV but to showing hardness of MM.
Our barrier result says that, if there exists a constant $\gamma > 2$ and a reduction from $k$-SAT to MM such that a $O(n^{\gamma - \eps})$-time algorithm for MM for any constant $\eps > 0$ breaks SETH, then either (1) the \emph{Nondeterministic Strong Exponential Time Hypothesis} (NSETH) is false, or (2) a new non-randomized algorithm for computing (arbitrarily large) rigid matrices exists.
We also note that, by known results, falsifying NSETH implies a new circuit lower bound as a consequence.
In short, our barrier result says that showing $n^{\gamma}$-hardness of MM under SETH for $\gamma > 2$ would lead to major progress on important questions in complexity theory.
See \cref{thm:non-seth-hardness} for a formal statement.

A key idea that we use for proving our result is that it is possible to compute the product of two \emph{non-rigid} matrices efficiently using a \emph{nondeterministic} algorithm. This follows from two facts. First, by definition, a non-rigid matrix is the sum of a low-rank matrix $L$ and a sparse matrix~$S$, and using nondeterminism it is possible to guess $L$ and $S$ efficiently. Second, it is possible to compute the product of two sparse matrices or a low-rank matrix and another matrix efficiently.
(In fact, we also use nondeterminism to guess a \emph{rank factorization} of $L$, and this factorization is what allows for fast multiplication by $L$.) 

Very roughly, we prove the barrier result as follows. We first suppose that there is a reduction from $k$-SAT to (potentially multiple instances of) matrix multiplication. In particular, such a reduction outputs several
pairs of matrices to be multiplied.
We then analyze three cases:
\begin{enumerate}
\item If the matrices output by this reduction always have small dimension (as a function of $n$), then we can compute the product of each pair quickly using standard matrix multiplication algorithms (even using na\"{i}ve, cubic-time matrix multiplication). This leads to a fast, deterministic algorithm for $k$-SAT, which refutes SETH (and hence NSETH).

\item \label{item:refuting-NSETH} If the matrices output by this reduction are always \emph{not} rigid, then we can compute the product of each pair quickly using the nondeterministic algorithm sketched above. This leads to a fast, nondeterministic algorithm for showing that $k$-SAT formulas are not satisfiable, which refutes NSETH.

\item \label{item:new-rigid-mat-alg}
Finally, if neither of the above cases holds, then the reduction must sometimes output rigid matrices with large dimension as a function of $n$.
So, we obtain an algorithm for generating arbitrarily large rigid matrices using an $\NP$ oracle: iterate through all $k$-SAT formulas $\varphi$ with at most a given number of variables, apply the reduction from $k$-SAT to MM to each formula, and then use the $\NP$ oracle to check whether each large matrix output by the reduction is rigid.
\end{enumerate}

We remark that although NSETH is a strong and not necessarily widely believed conjecture,~\cite{jahanjou2015local,carmosino2016nondeterministic} showed that refuting it (as in \cref{item:refuting-NSETH} above) would nevertheless imply an interesting new circuit lower bound. Specifically, they showed that if NSETH is false, then the complexity class $\cc{E}^{\NP}$ requires series-parallel circuits of size $\omega(n)$.

Additionally, we remark that despite how slow the ``iterate through all sufficiently large $k$-SAT formulas and apply the $k$-SAT-to-MM reduction to each one'' algorithm described in \cref{item:new-rigid-mat-alg} seems, it would still substantially improve on state-of-the-art non-randomized algorithms for generating rigid matrices. This is also true despite the fact that the algorithm uses an $\NP$ oracle. See \cref{sec:rigidity-comparison} for a more thorough discussion.

\subsubsection{Reductions}
\label{sec:summary-reductions}

Again, motivated by the apparent challenge of fully derandomizing Freivalds's algorithm, we study relationships between variants of MMV with the goal of understanding what makes the problem hard to solve deterministically in $\Ot(n^2)$ time but easy to solve in $\Ot(n^2)$ time using randomness (in contrast to MM).
More specifically, we study which variants are potentially easier than MMV (i.e., reducible to MMV, but not obviously solvable deterministically in $\Ot(n^2)$ time using known techniques), equivalent to MMV, and potentially harder than MMV (i.e., variants to which MMV is reducible, but which are not obviously as hard as MM). 
We study these questions by looking at deterministic $\Ot(n^2)$-time reductions between variants. 
See \cref{fig:web-of-reductions} for a summary of our results.

First, we show that two apparently special cases of MMV are in fact equivalent to MMV.  These special cases are: (1) the \emph{Inverse Verification Problem}, where the goal is to verify that $B = A^{-1}$ for input matrices $A$ and $B$ (equivalently, the special case of MMV where $C = I_n$), and (2) the \emph{Symmetric MMV Problem}, where the input matrices $A$ and $B$ (but not necessarily $C$) are symmetric. See \cref{thm:mmv_to_symMmv}. These reductions are relatively simple, and complement the (also simple) reduction of \cite{kunnemann2018nondeterministic}, who showed that the All Zeroes Problem (i.e., the special case of MMV where $C = 0$) is MMV-complete (see \cref{thm:mmv-all-zeroes-reduction}).

Second, we identify two problems that are $\Ot(n^2)$-time reducible to MMV, but are not clearly solvable in $\Ot(n^2)$ time or equivalent to MMV. These problems are: (1) the \emph{Strong Symmetric MMV Problem}, where all three of the input matrices $A$, $B$, and $C$ are symmetric, and (2) the Monochromatic All Pairs Orthogonal Vectors Problem, where the goal is, given vectors $\vec{a}_1, \ldots, \vec{a}_n$ to decide whether $\iprod{\vec{a}_i, \vec{a}_j} = 0$ for all $i \neq j$. 

Third, we identify two problems for which there are $\Ot(n^2)$-time reductions from MMV and that are $\Ot(n^2)$-time reducible to MM. These ``MMV/MM-intermediate problems'' are: (1) the Matrix Product Sparsity Problem (MPS), in which the goal is, given matrices $A$ and $B$ and $r \geq 0$ as input, to 
decide whether $\norm{AB}_0 \leq r$, and (2) the $k$-MMV problem, in which given matrices $A_1, \ldots, A_k, C$ as input, the goal is to decide whether $\prod_{i=1}^k A_i = C$. 
We note that MPS is equivalent to the counting version of the Orthogonal Vectors Problem ($\#\problem{OV}$).%
\footnote{Indeed, Monochromatic All Pairs Orthogonal Vectors is no harder than MMV (and not obviously equivalent), Bichromatic All Pairs Orthogonal Vectors is equivalent to the All Zeroes Problem and is therefore equivalent to MMV, and MPS/$\#\problem{OV}$ is at least as hard as MMV. In the fine-grained complexity setting OV variants are usually considered with $n$ vectors in dimension $d = \poly(\log n)$; here we are considering the regime where $d = n$.}
We additionally show that $k$-MMV is equivalent to the $k$-All Zeroes problem, i.e., $k$-MMV where $C$ is fixed to be $0$.

\subsection{Related Work}
\label{sec:related-work}

We next discuss other algorithms for MMV and related problems on $n \times n$ integer matrices $A$, $B$, and $C$. We summarize these algorithms, as well as ours, in \cref{tbl:mmv-algs}. 
We start by noting that it suffices to consider the special case of MMV where $C = 0$ (i.e., where the goal is to decide whether $AB = 0$), which is called the All Zeroes Problem. Indeed, a result from~\cite{kunnemann2018nondeterministic} (which we include as \cref{thm:mmv-all-zeroes-reduction}) shows that there is a simple $O(n^2)$-time reduction from MMV on $n \times n$ matrices $A, B, C$ to the All Zeroes problem on $2n \times 2n$ matrices $A', B'$ with the property that $\norm{AB - C}_0 = \norm{A' B'}_0$.
So, for this section we consider the All Zeroes Problem without loss of generality.

Perhaps the most closely related works to ours are \cite{journals/ipl/IwenS09,abboud2024time}, which use fast rectangular matrix multiplication for the \emph{Output-Sensitive Matrix Multiplication Problem} (OSMM). In $t$-OSMM, the goal is, given matrices $A, B$ as input, to compute the product $AB$ when it is promised to be $t$-sparse. There is a trivial reduction from MMV when the output is promised to be $t$-sparse to $t$-OSMM---compute $AB$ and check whether it is equal to $0$. Indeed, OSMM is essentially the search version of sparse MMV. 
However, it is not clear that the measurement matrix $M$ in~\cite{journals/ipl/IwenS09} is deterministically computable in $\Ot(n^2)$ time, and so the algorithm in~\cite{journals/ipl/IwenS09} is a non-uniform algorithm as described. There are other candidate measurement matrices with deterministic constructions that may work for a similar purpose~\cite{Iwen-Personal-23}, but the exact tradeoffs do not seem to have been analyzed and it is not clear that it is possible to get a (uniform) algorithm with the same parameters.
Additionally,~\cite{journals/ipl/IwenS09} only handles the case when all columns or rows of $AB$ are promised to have a given sparsity, rather than the case where there is a ``global bound'' of $t$ on the sparsity of the matrix product itself. 

The main algorithm in~\cite{abboud2024time} for OSMM (summarized in~\cite[Theorem 1.4]{abboud2024time}) runs in randomized time $O(n^{1.3459 \delta})$ when both the input matrices $A, B$ and their product $AB$ are $n^{\delta}$-sparse.%
\footnote{A more general version of this theorem, which gives an algorithm whose running time depends both on the sparsity of the input matrices $A, B$ and of their product $AB$, appears as~\cite[Theorem 1.7]{abboud2024time}.}
For the special case when all entries in $A, B$ are non-negative, ~\cite{abboud2024time} gives a deterministic algorithm with the same running time as their randomized algorithm.
We note that~\cite{abboud2024time} was written independently and concurrently with this work.

Besides simply using matrix multiplication, perhaps the most natural idea for an algorithm for the All Zeroes problem is to compute a random entry $(AB)_{i,j}$ of $AB$ and check whether it is non-zero. If $\norm{AB}_0 \geq n^{\delta}$, then sampling, say, $10 n^{2 - \delta}$ random entries of $AB$ independently will find a non-zero entry with good constant probability. Because computing each such entry amounts to computing an inner product, and sampling indices $i, j \sim \set{1, \ldots, n}$ takes roughly $2 \log_2 n$ random bits, this algorithm overall takes $\Ot(n^{3 - \delta})$ time and $O(n^{2-\delta} \log n)$ random bits.
So, this algorithm is relatively efficient and succeeds with good probability in the case when $AB$ is dense, but even then requires a relatively large number of random bits. We also note the somewhat odd fact that this algorithm is most efficient when $AB$ is dense, whereas our algorithms are most efficient when $AB$ is sparse. 

Freivalds's algorithm~\cite{freivalds1979fast} works by sampling a uniformly random vector $\vec{x} \sim \bit^n$, and outputting ``YES'' if $AB \vec{x} = \vec{0}$ and ``NO'' otherwise. If $AB = 0$, then this algorithm is always correct, and if $AB \neq 0$ then it fails with probability at most $1/2$.%
\footnote{To see this, note that in the latter case some row $\vec{s}^T$ of $AB$ must be non-zero, and let $j^*$ be the index of the last non-zero entry in $\vec{s}$. Then for uniformly random $\vec{x} \sim \bit^n$, $\Pr[AB\vec{x} = \vec{0}] \leq \Pr[\iprod{\vec{s}, \vec{x}} = 0] = \Pr[s_{j^*} x_{j^*} = -\sum_{k = 1}^{j^* - 1} s_k x_k] \leq 1/2$. Moreover, this holds for matrices $A, B$ over any ring $R$, and so Freivalds's algorithm works for MMV over any ring $R$.}
In particular, Freivalds's algorithm has one-sided error with no false negatives (i.e., it is a $\coRP$ algorithm).

A key idea for subsequent algorithms was to reduce MMV to a question about polynomials. The main idea is the following.
Define $\vec{x} := (1, x, x^2, \ldots, x^{n-1})^T$, where $x$ is an indeterminate, and define $p_i(x) := (AB \vec{x})_i = \sum_{j = 1}^n (AB)_{i,j} \cdot x^{j-1}$. Note that $AB = 0$ if and only if the polynomials $p_i(x)$ are identically zero (as formal polynomials) for all $i \in \set{1, \ldots, n}$. Furthermore, if the $i$th row of $AB$ is non-zero then $p_i(x)$ is a non-zero polynomial of degree at most $n - 1$, and therefore has at most $n - 1$ distinct complex (and hence integral) roots.
So, for such $p_i(x)$ and a non-empty set 
$S \subset \Z$, $\Pr_{\alpha \sim S}[p(\alpha) = 0] \leq (n-1)/\card{S}$, which is less than $1/2$ when $\card{S} \geq 2n$.
This observation leads to the following algorithm for MMV, which forms the basis for Kimbrel and Sinha's algorithm~\cite{kimbrel1993probabilistic}. Sample $\alpha \sim \set{1, \ldots, 2n}$, and output ``YES'' if and only if $AB \vec{\alpha} = \vec{0}$ for $\vec{\alpha} := (1, \alpha, \alpha^2, \ldots, \alpha^{n-1})^T$. Using associativity, it is possible to compute this product as $A(B \vec{\alpha})$ using $O(n^2)$ arithmetic operations.

However, there is an issue with this algorithm: it requires computing powers of $\alpha$ up to $\alpha^{n-1}$. These powers require $\Omega(n)$ bits to represent for any integer $\alpha \geq 2$, and so performing arithmetic operations with them takes $\Omega(n)$ time.
To solve this, Kimbrel and Sinha instead consider the ``test vector'' $\vec{\alpha}$ modulo an (arbitrary) prime $2n \leq q \leq 4n$, which they can find deterministically in $O(n^2)$ time.
They show that their algorithm is still correct with good probability (over the choice of $\alpha$) with this modification.

Korec and Wiedermann~\cite{korec2014deterministic} showed how to \emph{deterministically} find a good $\alpha$ for the above test---that is, a value $\alpha$ such that $p_i(\alpha) \neq 0$ if $p_i$ is not identically zero---using \emph{Cauchy's bound}, which gives an upper bound on the magnitude of the largest root of a polynomial as a function of the polynomial's coefficients.
Namely, they just choose $\alpha$ larger than Cauchy's bound. (They note that the maximum magnitude of an entry in $AB$---and hence of a coefficient in any of the polynomials $p_i(x)$---is at most $n \mu^2$, where $\mu$ is the maximum magnitude of an entry in $A$ or $B$.)
Their algorithm uses only $O(n^2)$ arithmetic operations, but again requires computing powers of $\alpha$ up to $\alpha^{n-1}$, and therefore the algorithm has bit complexity $\Omega(n^3)$.

Additionally, we mention the work of K\"{u}nnemann~\cite{kunnemann2018nondeterministic}, which works for MMV over finite fields $\F_q$ with $q > n^2$ (he reduces MMV over the integers to MMV over such fields). His algorithm works by considering the bivariate polynomial $f(x, y) = f_{A, B}(x, y) := \vec{x}^T AB \vec{y}$ for $\vec{x} = (1, x, x^2, \ldots, x^{n-1})$, $\vec{y} = (1, y, y^2, \ldots, y^{n-1})$, where $x$ and $y$ are indeterminates, and the corresponding univariate polynomial $g(x) = g_{A, B}(x) := f(x, x^n)$. The coefficient of $x^{(i-1) + (j-1)n}$ in $g(x)$ (and of $x^{i-1} y^{j-1}$ in $f(x, y)$) is equal to $(AB)_{i,j}$, and so to decide whether $AB = 0$ it suffices to decide whether $g(x)$ (or $f(x, y)$) is identically zero as a formal polynomial.%
\footnote{Indeed,~\cite{kunnemann2018nondeterministic} notes that this mapping from $A, B$ to $g(x)$ is a reduction from the All Zeroes Problem to Univariate Polynomial Identity Testing (UPIT).}
He shows that to do this it in turn suffices to decide whether $g(\alpha^i) = 0$ for all $i \in \set{0, \ldots, t - 1}$, where $\alpha \in \F_q$ is an element of order at least $n^2$ and $t = n^{\delta}$ is an upper bound on the sparsity of $AB$.
Indeed, he notes that the system of equations $g(1) = \cdots = g(\alpha^{t - 1}) = 0$ is a Vandermonde system of homogeneous linear equations in the at most $t$ non-zero entries $(AB)_{i, j}$ in $AB$, and so its only solution is the solution $(AB)_{i,j} = 0$ for all $1 \leq i, j \leq n$ (i.e., it must be the case that $AB = 0$).
To evaluate $g$ on the $t$ values $1, \alpha, \ldots, \alpha^{t - 1}$ quickly, he uses a known result about fast multipoint polynomial evaluation.

We also note that MMV and its variants have also been studied from angles other than derandomization of Freivalds's algorithm. Notably,~\cite{buhrman2004quantum} gave a $O(n^{5/3})$-time \emph{quantum} algorithm for MMV,~\cite{hon2023verifying} studied the \emph{Boolean} Matrix Multiplication Verification problem, and \cite{journals/algorithmica/GasieniecLLPT17,wu2023correcting} study the problem of \emph{correcting} matrix products. I.e., they study the problem of \emph{computing} $AB$ given matrices $A$, $B$, and $C$ where $\norm{AB - C}_0$ is guaranteed to be small, which K\"{u}nnemann showed is equivalent to OSMM.

Finally, we remark that other recent works including~\cite{carmosino2016nondeterministic,belova2023polynomial,aggarwal2023lattice,belova2023computations} have studied ``barriers to SETH hardness'' akin to~\cref{thm:non-seth-hardness}.

\subsection{Open Questions}
\label{sec:open-questions}

Of course, the main question that we leave open is whether Freivalds's algorithm can be fully derandomized, i.e., whether there is a deterministic $\Ot(n^2)$-time algorithm for MMV on $n \times n$ matrices over finite fields $\F_q$ with $q \leq \poly(n)$ and integer matrices with entries $[-n^c, n^c]$ for constant $c > 0$.
Giving such an algorithm for a natural special case of MMV (such as those described in \cref{sec:MMV-easy}) also seems like a natural step in this direction.
Additionally, it would be interesting to extend our results for MMV in the sparse regime to Output Sensitive Matrix Multiplication. The coding-theoretic techniques that we use seem amenable to this.

\full{
\subsection{Acknowledgments}
\label{sec:acknowledgments}

We thank Amir Nayyeri for many helpful discussions in the early stages of work on this paper, and Mark Iwen~\cite{Iwen-Personal-23} for answering questions about~\cite{journals/ipl/IwenS09}.} We also thank the anonymous reviewers for their helpful comments.

\section{Preliminaries}

We define $\norm{\vec{v}}_0$ (respectively, $\norm{M}_0$) to be the number of non-zero entries (i.e., Hamming weight) in a vector $\vec{v}$ (respectively, matrix $M$). We call a vector $\vec{v}$ (respectively, matrix $M$) \emph{$t$-sparse} if $\norm{\vec{v}}_0 \leq t$ (respectively, if $\norm{M}_0 \leq t$).

\subsection{Matrix Multiplication}

We first give definitions related to matrix multiplication.

\begin{definition}[$\textrm{MM}_R$]
The Matrix Multiplication Problem over a ring $R$ ($\textrm{MM}_R$) is defined as follows. Given matrices $A, B \in R^{n \times n}$ for some $n\in\Z^{+}$, compute their product $AB$.
\end{definition}

\begin{definition}[Rectangular and Square Matrix Multiplication Exponents] \label{def:rmm-exponents}
    For $\alpha, \beta, \gamma \in [0,1]$, we define the \emph{rectangular matrix multiplication exponent} $\omega(\alpha, \beta, \gamma)$ to be 
    the infimum over $\omega' > 0$ such that the product of an $n^\alpha \times n^\beta$ matrix $A$ and an $n^\beta \times n^\gamma$ matrix $B$ can be computed using $O(n^{\omega'})$ arithmetic operations.
    We also define the (square) \emph{matrix multiplication exponent} as $\omega := \omega(1,1,1)$.
\end{definition}

Additionally, we define the \emph{dual matrix multiplication exponent} $\dualMMexp$ as
\begin{equation}
\dualMMexp := \sup \set{\omega' > 0 : \omega(1, 1, \omega') = 2} \ \text{.}
\end{equation}

A recent line of work~\cite{gall2018improved,alman2021refined,duan2022faster,gall2023faster,williams2024new} has shown improved bounds on rectangular matrix multiplication exponents $\omega(1, 1, \beta)$ (including $\omega = \omega(1, 1, 1)$) and the dual matrix multiplication exponent $\dualMMexp$. The current best bounds for all of these quantities---including, to the best of our knowledge, $\omega(1, 1, \beta)$ for all $\beta \in (\dualMMexp, 1]$---appear in~\cite{williams2024new}.
In particular,~\cite{williams2024new} proves the following bounds:
\begin{align}
    \omega &\leq 2.371552 \ \text{,} \label{eq:MMexp-UB} \\
    \dualMMexp &\geq 0.321334 \ \text{.} \label{eq:dualMMexp-LB}
\end{align}

We will use the following upper bound on rectangular matrix multiplication exponents $\omega(1, 1, \beta)$ in terms of $\beta$, $\omega$, and $\dualMMexp$.
\begin{theorem}[{\cite{lotti1983asymptotic,le2012faster}}] \label{thm:rectMM-ub}
Let $\beta \in [0, 1]$. Then
    \[
    \omega(1,1,\beta) \leq
\begin{cases}
    2 & \text{if $0 \leq \beta \leq \omega^{\perp}$ ,} \\
    2 + (\omega - 2) \cdot \dfrac{\beta - \omega^{\perp}}{1 - \omega^{\perp}} & \text{if $\omega^{\perp} < \beta \leq 1$ .}
\end{cases}
\]
    
\end{theorem}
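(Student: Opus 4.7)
The plan is to establish the bound by combining monotonicity (to handle small $\beta$) with convexity of the map $\beta \mapsto \omega(1,1,\beta)$ (to handle the range $\beta \in (\omega^{\perp}, 1]$ via linear interpolation between the two endpoints).

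First I would record two preliminary facts. The function $\beta \mapsto \omega(1,1,\beta)$ is monotonically non-decreasing on $[0,1]$, since padding an $n \times n^{\beta}$ matrix with zero columns reduces multiplication of smaller matrices to multiplication of larger ones. Moreover, $\omega(1,1,\beta) \geq 2$ for every $\beta \in [0,1]$ because reading the $n \times n$ matrix $A$ alone requires $n^2$ operations. With these in hand, the first case $0 \leq \beta \leq \omega^{\perp}$ is essentially definitional: by the definition of $\dualMMexp$ as the supremum of exponents $\omega'$ with $\omega(1,1,\omega') = 2$, one can find $\omega' \geq \beta$ arbitrarily close to (or, by a continuity argument using convexity, equal to) $\dualMMexp$ with $\omega(1,1,\omega') = 2$, and then monotonicity gives $\omega(1,1,\beta) \leq \omega(1,1,\omega') = 2$, which matches the lower bound.

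For the second case $\omega^{\perp} < \beta \leq 1$, the main task is to prove convexity of $\beta \mapsto \omega(1,1,\beta)$ and then apply it. For convexity, given $\beta_1, \beta_2 \in [0,1]$ with $\omega(1,1,\beta_i) = a_i$, and $\lambda \in [0,1]$, I would set $\beta = \lambda \beta_1 + (1-\lambda)\beta_2$, take $N = N_1 N_2$ with $N_1 = N^{\lambda}$ and $N_2 = N^{1-\lambda}$, and decompose an $N \times N$ by $N \times N^{\beta}$ product into block form: view the left matrix as an $N_1 \times N_1$ array of $N_2 \times N_2$ blocks and the right matrix as an $N_1 \times N_1^{\beta_1}$ array of $N_2 \times N_2^{\beta_2}$ blocks (note that $N_1^{\beta_1} \cdot N_2^{\beta_2} = N^{\beta}$ by construction). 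The outer block multiplication uses $O(N_1^{a_1 + o(1)})$ block-multiplications, each costing $O(N_2^{a_2 + o(1)})$ scalar operations, for a total of $O(N^{\lambda a_1 + (1-\lambda)a_2 + o(1)})$ operations. This yields $\omega(1,1,\beta) \leq \lambda a_1 + (1-\lambda) a_2$, i.e.\ convexity.

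Finally I would apply convexity to the endpoints $(\dualMMexp, 2)$ and $(1, \omega)$: writing $\beta = \lambda \cdot \dualMMexp + (1-\lambda) \cdot 1$ with $1 - \lambda = (\beta - \dualMMexp)/(1 - \dualMMexp)$, convexity gives
\[
\omega(1,1,\beta) \;\leq\; \lambda \cdot 2 + (1-\lambda) \cdot \omega \;=\; 2 + (\omega - 2)\cdot \frac{\beta - \dualMMexp}{1 - \dualMMexp},
\]
exactly the claimed bound. The main obstacle is the convexity step: one must set up the block decomposition so that the inner and outer products have the correct aspect ratios, and handle the asymptotic $o(1)$ terms carefully when passing back from fixed exponents $a_i$ to the infimum definition of $\omega(1,1,\cdot)$. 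Once convexity is in place, the rest is routine linear interpolation.
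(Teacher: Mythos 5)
The paper does not actually prove this theorem---it is cited from Lotti--Romani and Le~Gall---but your proposal is correct and follows the standard argument underlying those references: convexity of $\beta \mapsto \omega(1,1,\beta)$ (which the paper itself invokes in the caption of Figure~1) established via a block/tensor-product decomposition, combined with monotonicity and linear interpolation between the endpoints $(\omega^{\perp},2)$ and $(1,\omega)$, with the case $\beta=\omega^{\perp}$ handled by continuity of convex functions on the interior of $[0,1]$. The one step needing extra care is the claim that an $O(N_1^{a_1+o(1)})$-operation algorithm for the outer product ``uses $O(N_1^{a_1+o(1)})$ block multiplications'': an arbitrary arithmetic circuit cannot be lifted to blocks directly, so you must first invoke Strassen's standard fact that the exponent defined via total arithmetic complexity equals the exponent defined via bilinear rank, pass to a bilinear algorithm of rank $O(N_1^{a_1+o(1)})$, and only then tensor it with the inner algorithm.
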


We note that the second case of \cref{thm:rectMM-ub} implies that if $\omega > 2$ then $\omega(1, 1, \beta) < \omega$ for $\beta < 1$.
We will also use the following lower bound on rectangular matrix multiplication exponents.

\begin{lemma} \label{lem:omega_a_1_b}
    For any $\alpha, \beta \in [0, 1]$, $\omega(\alpha, 1, \beta) \geq \omega - 2 + \alpha + \beta$.
\end{lemma}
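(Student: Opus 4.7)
The plan is to prove the lower bound by reducing square matrix multiplication to rectangular matrix multiplication via block decomposition, which gives an upper bound on $\omega$ in terms of $\omega(\alpha,1,\beta)$ that rearranges into the desired inequality.

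First, I would take two arbitrary $n \times n$ matrices $A$ and $B$ and describe the following algorithm for computing $AB$. Partition $A$ into $n^{1-\alpha}$ horizontal blocks $A_1, \ldots, A_{n^{1-\alpha}}$, each of dimensions $n^\alpha \times n$, and partition $B$ into $n^{1-\beta}$ vertical blocks $B_1, \ldots, B_{n^{1-\beta}}$, each of dimensions $n \times n^\beta$. Because $A$ was split only along rows and $B$ only along columns, the $(i,j)$ block of the product $AB$ is exactly the rectangular product $A_i B_j$, and no additions across blocks are needed.

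Next, I would count arithmetic operations. There are $n^{1-\alpha} \cdot n^{1-\beta} = n^{2-\alpha-\beta}$ such block products, each of which is a product of an $n^\alpha \times n$ matrix by an $n \times n^\beta$ matrix and can therefore be computed using $O(n^{\omega(\alpha,1,\beta)+\eps'})$ arithmetic operations for any constant $\eps' > 0$, by the definition of $\omega(\alpha,1,\beta)$ (\cref{def:rmm-exponents}). Multiplying these counts gives a total of $O(n^{2 - \alpha - \beta + \omega(\alpha,1,\beta) + \eps'})$ arithmetic operations to compute $AB$. Since this is an algorithm for square $n \times n$ matrix multiplication, the definition of $\omega$ yields $\omega \leq 2 - \alpha - \beta + \omega(\alpha,1,\beta) + \eps'$ for every $\eps' > 0$, and hence $\omega \leq 2 - \alpha - \beta + \omega(\alpha,1,\beta)$. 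Rearranging gives $\omega(\alpha,1,\beta) \geq \omega - 2 + \alpha + \beta$, as claimed.

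The main (mild) technical point is that $n^\alpha$, $n^\beta$, $n^{1-\alpha}$, $n^{1-\beta}$ are not integers in general. This is handled in the standard way: replace them with $\lceil n^\alpha \rceil$, $\lceil n^\beta \rceil$, $\lceil n^{1-\alpha} \rceil$, $\lceil n^{1-\beta} \rceil$ and pad the blocks with zero rows/columns so that $A$ and $B$ decompose cleanly; this only inflates dimensions by a factor of $1 + o(1)$ and thus introduces only a sub-polynomial factor in the operation count, which is absorbed into the $\eps'$ slack above. Since this rounding issue is the only obstacle and it is routine, the entire argument is short once the block decomposition is set up correctly.
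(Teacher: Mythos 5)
Your proposal is correct and follows essentially the same argument as the paper: partition $A$ into $\lceil n^{1-\alpha}\rceil$ horizontal strips of size $\lceil n^\alpha\rceil \times n$ and $B$ into $\lceil n^{1-\beta}\rceil$ vertical strips of size $n \times \lceil n^\beta\rceil$ (padding with zeros), compute all $O(n^{2-\alpha-\beta})$ block products via the rectangular subroutine, and rearrange the resulting upper bound $\omega \leq 2 - \alpha - \beta + \omega(\alpha,1,\beta)$. Your explicit discussion of the ceiling/padding technicality is handled the same way the paper's parenthetical remark handles it.
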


\begin{proof}
    Consider two matrices $X, Y \in R^{n \times n}$ over some ring~$R$.
    Partition $X$ into $\lceil n^{1 - \alpha} \rceil$ submatrices with dimensions $\lceil n^{\alpha} \rceil \times n$, and $Y$ into $\lceil n^{1 - \beta} \rceil$ submatrices with dimensions $n \times \lceil n^{\beta} \rceil$ (padding $X$ and $Y$ with zeroes if $\lceil n^{\alpha} \rceil$ or $\lceil n^{\beta} \rceil$ does not divide $n$).

    Then for any constant $\eps > 0$, the product $XY$ can be computed using $O(\lceil n^{1 - \alpha} \rceil \cdot \lceil n^{1 - \beta} \rceil \cdot n^{\omega(\alpha, 1, \beta) + \eps}) = O(n^{2 + \omega(\alpha, 1, \beta) - \alpha - \beta + \eps})$ algebraic operations by multiplying each of the $O(n^{1-\alpha} \cdot n^{1 -\beta}) = O(n^{2 - \alpha - \beta})$ pairs of submatrices $X'$ of $X$ and $Y'$ of $Y$ in time $O(n^{\omega(\alpha, 1, \beta) + \eps})$.
    It follows that $\omega \leq 2 + \omega(\alpha, 1, \beta) - \alpha - \beta$, and the result follows by rearranging.
\end{proof}

Finally, we will also use the following fact about equivalences among rectangular matrix multiplication exponents.
\begin{theorem}[{\cite{lotti1983asymptotic}}]\label{thm:rectmult}
For any $\beta \in [0, 1]$, $\omega(1,1,\beta) = \omega(1,\beta,1)= \omega(\beta,1,1)$.
\end{theorem}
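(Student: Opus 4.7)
The plan is to prove the two equalities $\omega(1,1,\beta)=\omega(\beta,1,1)$ and $\omega(1,1,\beta)=\omega(1,\beta,1)$ separately, using two distinct symmetries of matrix multiplication.

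The first equality, $\omega(1,1,\beta)=\omega(\beta,1,1)$, is an elementary consequence of transposition. Since $(AB)^T = B^T A^T$, any straight-line arithmetic program that multiplies an $m\times n$ matrix by an $n\times p$ matrix using $T$ operations can be reindexed (with no change in operation count) into a program that multiplies a $p\times n$ matrix by an $n \times m$ matrix using $T$ operations: transposition is free in the arithmetic model because it is merely a relabelling of inputs and outputs. Applied with $(m,n,p) = (n^\alpha, n^\beta, n^\gamma)$, this shows that $\omega(\alpha,\beta,\gamma)=\omega(\gamma,\beta,\alpha)$ for all $\alpha,\beta,\gamma$, and specializing to $(\alpha,\beta,\gamma)=(1,1,\beta)$ yields the first equality.

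The second equality, $\omega(1,1,\beta)=\omega(1,\beta,1)$, is the classical \emph{cyclic symmetry} of the matrix multiplication exponent and is the main technical content of the theorem. The plan is to pass from arithmetic complexity to the bilinear/tensor-rank formulation. It is a standard fact (due to Strassen) that the arithmetic complexity of matrix multiplication and its bilinear complexity agree up to factors that do not affect the exponent, so one may work with bilinear algorithms and equivalently with the tensor rank of the matrix multiplication tensor. The tensor associated with multiplying an $a\times b$ matrix by a $b\times c$ matrix is the tensor of the trilinear form
\[
T_{a,b,c}(A,B,C) \;=\; \mathrm{tr}(ABC), \qquad A\in R^{a\times b},\ B\in R^{b\times c},\ C\in R^{c\times a}.
\]
Given any rank-$r$ decomposition $\mathrm{tr}(ABC) = \sum_{k=1}^r \ell_k(A)\, m_k(B)\, n_k(C)$ with linear functionals $\ell_k, m_k, n_k$, the cyclic trace identity $\mathrm{tr}(ABC)=\mathrm{tr}(BCA)=\mathrm{tr}(CAB)$ immediately yields rank-$r$ decompositions of $T_{b,c,a}$ and $T_{c,a,b}$ as well, just by renaming the variables. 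Hence the tensors $T_{a,b,c}$, $T_{b,c,a}$, $T_{c,a,b}$ have a common rank, and so $\omega(\alpha,\beta,\gamma)=\omega(\beta,\gamma,\alpha)=\omega(\gamma,\alpha,\beta)$. Setting $(\alpha,\beta,\gamma)=(1,1,\beta)$ gives $\omega(1,1,\beta)=\omega(1,\beta,1)$, completing the proof.

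The main obstacle is not the cyclic trace identity itself but rather the bridge between arithmetic complexity and tensor rank: one must know that the exponent $\omega(\alpha,\beta,\gamma)$ is determined by the bilinear (equivalently, tensor) rank of the corresponding matrix multiplication tensor, and not by the more general class of arbitrary arithmetic circuits. This equivalence is classical, and combining it with the two symmetries above immediately yields the theorem.
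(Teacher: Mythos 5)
The paper does not prove \cref{thm:rectmult}; it is stated as a citation to Lotti--Romani~\cite{lotti1983asymptotic} without proof, so there is no in-paper argument to compare against. Judged on its own terms, your proof is correct and is the standard argument. The transposition step, $\omega(\alpha,\beta,\gamma)=\omega(\gamma,\beta,\alpha)$, is elementary and works directly at the level of arithmetic straight-line programs since relabelling inputs and outputs is free. The cyclic step, $\omega(\alpha,\beta,\gamma)=\omega(\beta,\gamma,\alpha)$, correctly reduces to the invariance of $\mathrm{tr}(ABC)$ under cyclic shifts, hence the invariance of the rank of the matrix-multiplication tensor under cyclic permutation of its three formats; and you correctly flag that the genuinely nontrivial ingredient is the classical Strassen equivalence between the arithmetic-complexity exponent (as $\omega(\cdot,\cdot,\cdot)$ is defined in the paper) and the tensor-rank exponent, which holds because a rank decomposition can be turned, by recursion on tensor powers, into an arithmetic algorithm of the same exponent. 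One small streamlining remark: the orbit of $(1,1,\beta)$ under the cyclic shift alone is already $\{(1,1,\beta),(1,\beta,1),(\beta,1,1)\}$, so the trace-cyclicity argument by itself yields both equalities; keeping the transposition argument is harmless and has the virtue of proving $\omega(1,1,\beta)=\omega(\beta,1,1)$ without invoking the tensor-rank bridge at all.
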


\subsection{Matrix Multiplication Verification}

We next define Matrix Multiplication Verification and the All Zeroes Problem over arbitrary rings~$R$ and with a sparsity parameter $t$.

\begin{definition}[$\textrm{MMV}_R^t$]
    The Matrix Multiplication Verification Problem over a ring~$R$ with sparsity parameter~$t = t(n) \geq 0$ ($\textrm{MMV}_R^t$) is defined as follows. Given matrices $A, B,C \in R^{n \times n}$ for some $n\in\Z^{+}$ such that $\norm{AB - C}_0 \leq t$ as input, decide whether $AB = C$. 
\end{definition}

\begin{definition}[$\textrm{AllZeroes}_R^t$] \label{def:all-zeroes}
    The All Zeroes Problem over a ring~$R$ with sparsity parameter~$t = t(n) \geq 0$ ($\textrm{AllZeroes}_R^t$) is defined as follows. Given matrices $A, B \in R^{n \times n}$ for some $n\in\Z^{+}$ such that $\norm{AB}_0 \leq t$ as input, decide whether $AB = 0$.
\end{definition}

We note that when $t = n^2$, $t$ does not restrict the input.

We use the following reduction from~\cite{kunnemann2018nondeterministic} with the additional observations that it is sparsity-preserving and works over arbitrary rings. We provide its simple proof for completeness. 

\begin{theorem}[{\cite[Proposition 3.1]{kunnemann2018nondeterministic}}]\label{thm:mmv-all-zeroes-reduction}
For all rings $R$, $n \in \Z^+$, and $0 \leq t \leq n^2$, there is a reduction from $\textrm{MMV}_R^t$ on $n \times n$ matrices to $\textrm{AllZeroes}_R^t$ on $2n \times 2n$ matrices that uses $O(n^2)$ arithmetic operations on $R$. 
\end{theorem}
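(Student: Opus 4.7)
The plan is to give a direct algebraic padding construction that encodes $AB - C$ as the product of two $2n \times 2n$ matrices, so that the verification $AB \stackrel{?}{=} C$ is transformed into the All Zeroes question for the padded product, without changing the number of nonzero entries that arise.

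Specifically, given an instance $(A, B, C) \in R^{n \times n} \times R^{n \times n} \times R^{n \times n}$ of $\textrm{MMV}_R^t$, I would define the block matrices
\[
A' \;=\; \begin{pmatrix} A & -I_n \\ 0 & 0 \end{pmatrix}, \qquad B' \;=\; \begin{pmatrix} B & 0 \\ C & 0 \end{pmatrix},
\]
both of dimension $2n \times 2n$ over $R$. A direct block computation gives
\[
A' B' \;=\; \begin{pmatrix} AB - C & 0 \\ 0 & 0 \end{pmatrix}.
\]
Thus $AB = C$ if and only if $A'B' = 0$, and moreover $\norm{A' B'}_0 = \norm{AB - C}_0 \leq t$, so $(A', B')$ is a valid instance of $\textrm{AllZeroes}_R^t$ on $2n \times 2n$ matrices whenever $(A, B, C)$ is a valid $\textrm{MMV}_R^t$ instance. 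The reduction outputs ``YES'' exactly when the oracle does.

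For the running time, constructing $A'$ and $B'$ requires only copying the $3n^2$ entries of $A$, $B$, $C$ into the appropriate blocks, negating the entries of $I_n$ to form $-I_n$ (which is free: the $n$ diagonal entries can be written down directly), and filling the remaining blocks with zeros. This uses $O(n^2)$ arithmetic operations (indeed, $O(n^2)$ assignments and negations), as required.

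There is no real obstacle here: the only thing to double-check is the block-multiplication identity and the fact that sparsity is preserved exactly, both of which are immediate. The construction is essentially the one recorded in \cite{kunnemann2018nondeterministic}; the only additional observations to highlight are that the argument is purely algebraic and hence valid over any ring $R$ (no division or field structure is used), and that $\norm{A'B'}_0 = \norm{AB-C}_0$ so the sparsity bound $t$ transfers verbatim from the input to the output instance.
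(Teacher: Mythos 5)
Your construction is exactly the one the paper uses: the same block matrices $A' = \begin{psmallmatrix} A & -I \\ 0 & 0 \end{psmallmatrix}$ and $B' = \begin{psmallmatrix} B & 0 \\ C & 0 \end{psmallmatrix}$ with product $\begin{psmallmatrix} AB - C & 0 \\ 0 & 0 \end{psmallmatrix}$, together with the same observations that sparsity is preserved exactly and that the argument works over any ring. The proof is correct and matches the paper's.
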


\begin{proof}
Let the matrices $A,B,C$ be an instance of MMV, and define
\[
A' := \begin{bmatrix}
         A & -I \\
         0 & 0 \\ 
     \end{bmatrix} \ \text{,} \quad
B' := \begin{bmatrix}
         B & 0 \\
         C & 0 \\ 
     \end{bmatrix} \ \text{.}
\]
We then have that
\[
A' B' =
     \begin{bmatrix}
         A & -I \\
         0 & 0 \\ 
     \end{bmatrix}
     \cdot
     \begin{bmatrix}
         B & 0 \\
         C & 0 \\ 
     \end{bmatrix}
     =
     \begin{bmatrix}
         AB-C & 0 \\
         0 & 0 \\ 
     \end{bmatrix} \ \text{,}
\]
and therefore that $\norm{A' B'}_0 = \norm{AB - C}_0$. It follows that $A', B'$ is a YES instance of $\textrm{AllZeroes}_R^t$ if and only if $A, B, C$ is a YES instance of $\textrm{MMV}_R^t$, as needed. \\

\end{proof}

\subsection{Coding Theory}
\label{sec:coding-theory}

We next review useful background information about (linear) error-correcting codes $\C \subseteq \F_q^n$. We refer the reader to \cite{grs-essential-coding-theory} for a comprehensive resource on coding theory.
We start by giving a ``dual'' definition of error-correcting codes in terms of parity-check matrices $H$, which will be useful for our work.

\begin{definition} \label{def:code}
Let $k \geq 0$ and $n \geq 1$ be integers satisfying $k \leq n$, let $q$ be a prime power, and let $H \in \F_q^{(n-k) \times n}$ be a matrix with linearly independent rows. Then
\[
\C = \Cperp(H) := \set{\vec{x} \in \F_q^n : H \vec{x} = \vec{0}}
\]
is the linear error-correcting code with parity check matrix $H$.
\end{definition}

We remark that it is also possible to define a code in the primal way using a \emph{generator matrix} $G \in \F_q^{k \times n}$, setting
\[
\C = \C(G) = \set{G^T \vec{x} : \vec{x} \in \F_q^k} \ \text{.}
\]
That is, $\C(G)$ is the linear subspace of $\F_q^n$ spanned by rows of $G$.
A generator matrix of the form $G = (I_k \stbar G')$ is said to be in \emph{systematic form}.
Although we do not directly need this definition, it is related to a deep connection between coding theory and matrices with the ``regularity'' property we need for \cref{thm:fast-random-intro}; see \cref{rem:mds-vandermonde-cauchy}.

The \emph{minimum distance} of a linear code $\C$ is defined as
\[
d = d(\C) := \min_{\substack{\vec{x}, \vec{y} \in \C, \\ \vec{x} \neq \vec{y}}} \norm{\vec{x} - \vec{y}}_0 = \min_{\vec{x} \in \C \setminus \set{\vec{0}}} \norm{\vec{x}}_0 \ \text{.}
\]
Note that for any $\C \subseteq \F_q^n$, $1 \leq d(\C) \leq n$.

A linear error-correcting code $\C$ with parity-check matrix $H \in \F_q^{(n-k) \times n}$ and minimum distance $d = d(\C)$ is called a $[n, k, d]_q$ code. Here $n$ is the \emph{block length} of the code and $k$ is the \emph{dimension} of the code.

A primary goal of coding theory is to study the largest values of $k = k(n)$ and $d = d(n)$ for which $[n, k, d]_q$ codes exist (either for constant $q$ or a family of values $q = q(n)$).
A fundamental result in coding theory called the \emph{Singleton bound} asserts that for any $[n, k, d]_q$ code,
\begin{equation} \label{eq:singleton}
d \leq n - k + 1 \ \text{.}
\end{equation}
See, e.g.,~\cite[Theorem 4.3.1]{grs-essential-coding-theory}.
A code $\C$ for which the preceding inequality is tight (i.e., for which $d = n - k + 1$) is called a \emph{maximum distance separable (MDS)} code.
We will use the fact that MDS codes exist for a wide range of values of $k$ and $n$, and moreover that these codes have efficiently computable parity check matrices $H = H(k, n)$.
In particular, generalized Reed-Solomon (GRS) codes are a family of such MDS codes (see~\cite[Theorem 5.1.1]{hall-coding} and \cite[Claim 5.2.3]{grs-essential-coding-theory}).

\begin{theorem}[{\cite[Theorem 5.1.1]{hall-coding}}] \label{thm:reed-solomon}
Let $q$ be a prime power, and let $k$ and $n$ be integers satisfying $0 < k \leq n \leq q$. Then there exists a $[k, n, n - k + 1]_q$ code $\C$. Moreover, there exists an algorithm to compute a parity check matrix $H = H(k, n) \in \F_q^{(n-k) \times n}$ of such a code $\C$ in $n (n-k) \cdot \poly(\log q)$ time.
\end{theorem}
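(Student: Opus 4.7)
The plan is to construct the parity-check matrix $H$ directly as a Vandermonde matrix, rather than defining the code primally via polynomial evaluations. This avoids having to compute Lagrange multipliers for a dual description and gives the desired running time on the nose.

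First, since $n \leq q$, I enumerate $n$ distinct elements $\alpha_1, \ldots, \alpha_n \in \F_q$ and define
\[
H \in \F_q^{(n-k) \times n}, \qquad H_{ij} := \alpha_j^{i-1} \quad (1 \leq i \leq n-k,\ 1 \leq j \leq n).
\]
The rows of $H$ are the first $n-k$ rows of the $n \times n$ Vandermonde matrix in the distinct $\alpha_j$'s, hence are linearly independent, so $H$ satisfies the hypothesis of \cref{def:code} and defines a code $\C := \Cperp(H)$ of block length $n$ and dimension $k$.

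Next I bound the distance. Any choice of $n-k$ columns of $H$ forms a square Vandermonde matrix in $n-k$ distinct elements of $\F_q$, and is therefore non-singular. Hence any $n-k$ columns of $H$ are linearly independent, so there is no non-zero vector $\vec{x} \in \F_q^n$ with $\norm{\vec{x}}_0 \leq n-k$ satisfying $H\vec{x} = \vec{0}$. This gives $d(\C) \geq n-k+1$, and combining with the Singleton bound (\ref{eq:singleton}) yields equality, so $\C$ is an $[n, k, n-k+1]_q$ MDS code as required.

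Finally, for the running time, I compute the entries of $H$ column by column: for each $j$, the values $\alpha_j^0, \alpha_j^1, \ldots, \alpha_j^{n-k-1}$ are obtained iteratively using $n-k-1$ multiplications in $\F_q$. Each multiplication in $\F_q$ costs $\poly(\log q)$ bit operations, giving total time $O(n(n-k)) \cdot \poly(\log q)$. There is no real obstacle here: the only mild subtlety is ensuring $\F_q$ has enough distinct elements to choose $\alpha_1, \ldots, \alpha_n$, which is exactly the hypothesis $n \leq q$. All other steps are routine, so I expect the main effort to be in cleanly writing out the Vandermonde non-singularity argument and noting that $d(\C) = n-k+1$ matches the Singleton bound.
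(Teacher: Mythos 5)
The paper does not prove this theorem; it simply cites it as~\cite[Theorem 5.1.1]{hall-coding} and then remarks that for GRS codes ``$H$ is the transpose of a Vandermonde matrix.'' Your direct construction of $H$ as (the transpose of) a Vandermonde matrix, proving MDS-ness via the non-singularity of every $(n-k)\times(n-k)$ minor, is precisely the standard argument behind that citation and is correct, including the running-time accounting; the only cosmetic note is that the theorem statement's $[k,n,n-k+1]_q$ should read $[n,k,n-k+1]_q$ under the paper's own notational convention, which you silently and correctly fix.
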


We remark that the parity-check matrix $H$ has $(n-k) \cdot n$ entries, and so the preceding theorem says that $H$ is computable in $\poly(\log q)$ time per entry. For GRS codes, $H$ is the transpose of a Vandermonde matrix. Such matrices are defined as follows.

\begin{definition}[Vandermonde Matrix]
    Let $m$ and $n$ be positive integers.
    Let $x_1, \ldots, x_n \in \F$ be distinct elements of a field $\F$.
    A \emph{Vandermonde matrix} $V \in \F^{n \times m}$ is a matrix with $V_{i,j} := x_i^{j-1}$ for distinct elements $x_1, \ldots, x_n \in \F$. I.e., $V$ has the following form: 
    \[
        \begin{bmatrix}
            1 & x_1 & x_1^2 & \cdots & x_1^{m-1} \\
            1 & x_2 & x_2^2 & \cdots & x_2^{m-1} \\
            \vdots & \vdots & \vdots & \ddots & \vdots \\
            1 & x_n & x_n^2 & \cdots & x_n^{m-1}
        \end{bmatrix} \ \text{.}
    \]
\end{definition}

As a consequence of \cref{thm:reed-solomon}, the definition of minimum distance, and the definition of a parity check matrix we get the following corollary. In particular, the corollary uses the fact that if $H$ is a parity check matrix for a code with minimum distance $d$ then no non-zero vector $\vec{x}$ with $\norm{\vec{x}}_0 < d$ can be such that $H\vec{x} = \vec{0}$.
\begin{corollary} \label{cor:mds-parity-check}
Let $q$ be a prime power, and let $k$ and $n$ be integers satisfying $0 < k \leq n \leq q$. Then there exists a deterministic algorithm that runs in $n (n-k) \cdot \poly(\log q)$ time for computing a matrix $H \in \F_q^{(n-k) \times n}$ such that for all non-zero $\vec{x} \in \F_q^n$ with $\norm{\vec{x}}_0 \leq n - k$, $H\vec{x} \neq \vec{0}$. 
\end{corollary}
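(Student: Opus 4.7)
The plan is to derive the corollary directly from \cref{thm:reed-solomon} together with the standard coding-theoretic fact that linking parity-check matrices to minimum distance. I would proceed in three short steps.

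First, I would invoke \cref{thm:reed-solomon} with the given parameters $k$ and $n$ (using that $n \le q$) to obtain a $[n, k, n-k+1]_q$ MDS code $\C$ and an explicit parity-check matrix $H = H(k,n) \in \F_q^{(n-k) \times n}$ for $\C$, computable in $n(n-k) \cdot \poly(\log q)$ time. This immediately handles the construction, the dimensions of $H$, and the claimed running time.

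Second, I would translate the ``minimum distance $n-k+1$'' property into the desired statement about $\vec{x}$. By the definition of a parity-check matrix (\cref{def:code}), $\C = \{\vec{x} \in \F_q^n : H\vec{x} = \vec{0}\}$. By the definition of minimum distance, every nonzero codeword $\vec{x} \in \C$ satisfies $\norm{\vec{x}}_0 \ge d(\C) = n-k+1$. Taking the contrapositive, any nonzero $\vec{x} \in \F_q^n$ with $\norm{\vec{x}}_0 \le n-k < n-k+1$ cannot lie in $\C$, i.e., must satisfy $H\vec{x} \neq \vec{0}$. This is exactly the desired conclusion.

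Third, I would note that there is essentially no obstacle here: the result is a one-line consequence once \cref{thm:reed-solomon} is in hand, and the minimum-distance argument is the textbook equivalence between the minimum distance of a linear code and the minimum weight of a vector annihilated by its parity-check matrix. The only thing to double-check is the edge case $k = n$, where $n-k = 0$, the condition ``$\norm{\vec{x}}_0 \le 0$'' forces $\vec{x} = \vec{0}$ so the statement holds vacuously, and $H$ is the $0 \times n$ empty matrix, which the running-time bound $n(n-k) \cdot \poly(\log q) = 0$ accommodates trivially (interpreting it as a constant-time null construction).
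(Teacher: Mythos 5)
Your proposal is correct and matches the paper's own (implicit) argument exactly: the paper derives \cref{cor:mds-parity-check} precisely by combining \cref{thm:reed-solomon} with the standard fact that a parity-check matrix of a code of minimum distance $d$ annihilates no non-zero vector of weight less than $d$. Your extra attention to the degenerate case $k = n$ is a harmless bonus not present in the paper.
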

We note that if $q \leq \poly(n)$ then the algorithm in \cref{cor:mds-parity-check} runs in $\Ot(n^2)$ time.
We also again emphasize that by the Singleton bound (\cref{eq:singleton}) the parameters in \cref{cor:mds-parity-check} are optimal in terms of the trade-off between between $k$ and $d$ for fixed $n$.

\subsection{Regular and Super Regular Matrices}
\label{sec:regular-matrices}

We next define and discuss regular matrices.

\begin{definition}[Regular Matrix]
Let $\F$ be a field and $k\in\Z^+$. A matrix $S\in\F^{n\times m}$ is \emph{$k$-regular} matrix, if all of its $k \times k$ submatrices are non-singular. $A$ is \emph{super regular}, if it is $k$-regular for every $1\leq k\leq \min(n,m)$.
\end{definition}

We will use the following simple construction of super regular matrices.

\begin{definition}[Cauchy Matrix]
Let $q$ be a prime power, $n$ and $k$ be positive integers satisfying $k+n\geq q$, and $x_1,\ldots,x_n,y_1,\ldots,y_k$ be distinct elements of $\F_q$. Then $S\in\F_q^{n\times k}$ defined by $S_{i,j}=1/(x_i-y_j)$ is a \emph{Cauchy matrix}.
\end{definition}

\begin{remark} \label{rem:mds-vandermonde-cauchy}
We again recall the correspondence between MDS codes and super regular matrices: a matrix $G = (I_k \stbar S) \in \F_q^{k \times n}$ is a generator matrix of an MDS code if and only if $G'$ is super regular~\cite{roth1989mds}. In fact, as~\cite{roth1989mds} shows, it turns out that $G = (I_k \stbar S)\in \F_q^{k \times n}$ is the generator matrix of a generalized Reed-Solomon (GRS) code if and only if $S$ is a Cauchy matrix! Indeed, this illustrates a close relationship between GRS codes, Vandermonde matrices (which are generator matrices of GRS codes not in systematic form), and Cauchy matrices (which correspond to the ``non-identity part'' of generator matrices of GRS codes in systematic form). 
\end{remark}

For completeness, we include a proof that Cauchy matrices are super regular.

\begin{proposition}\label{prop:cauchy}
Let $q$ be a prime power, $n$ and $k$ be positive integers satisfying $k+n\geq q$, and $S\in\F_q^{n\times k}$ be a Cauchy matrix. Then $S$ is super regular.
\end{proposition}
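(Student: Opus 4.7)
The plan is to reduce the claim to proving that every \emph{square} Cauchy matrix is non-singular, and then to establish this via the classical Cauchy determinant identity. First, I would observe that any $m \times m$ submatrix of $S$, obtained by restricting to rows indexed by some $I \subseteq \{1,\ldots,n\}$ and columns indexed by some $J \subseteq \{1,\ldots,k\}$ with $|I|=|J|=m$, is itself a Cauchy matrix---its entries are $1/(x_i - y_j)$ for $i \in I$, $j \in J$, and the elements $\{x_i\}_{i\in I} \cup \{y_j\}_{j\in J}$ remain pairwise distinct in $\F_q$. Thus super regularity follows if every square Cauchy matrix $S' \in \F_q^{m \times m}$ built from distinct $x_1,\ldots,x_m,y_1,\ldots,y_m \in \F_q$ has non-zero determinant.

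The key step is to prove the Cauchy determinant formula
\[
\det(S') \;=\; \frac{\prod_{1 \leq i < j \leq m}(x_j - x_i)(y_i - y_j)}{\prod_{1 \leq i, j \leq m}(x_i - y_j)} \ \text{.}
\]
I would do this by induction on $m$, with the $m=1$ case being immediate. For the inductive step, the standard trick is to subtract the last column from every other column; the $(i,j)$ entry for $j < m$ then becomes
\[
\frac{1}{x_i - y_j} - \frac{1}{x_i - y_m} \;=\; \frac{y_j - y_m}{(x_i - y_j)(x_i - y_m)} \ \text{,}
\]
which allows one to pull out a factor of $(y_j - y_m)$ from each column $j < m$ and a factor of $1/(x_i - y_m)$ from each row $i$. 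An analogous row operation (subtracting the last row from every other row) produces further common factors and, after expanding along the last column, leaves a smaller $(m-1) \times (m-1)$ Cauchy matrix on the parameters $x_1,\ldots,x_{m-1}$ and $y_1,\ldots,y_{m-1}$, to which the inductive hypothesis applies. Collecting factors yields the displayed formula.

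Once the formula is in hand, non-singularity is immediate: since $x_1,\ldots,x_m,y_1,\ldots,y_m$ are pairwise distinct elements of $\F_q$, every factor $(x_j - x_i)$, $(y_i - y_j)$, and $(x_i - y_j)$ appearing in the numerator and denominator is a non-zero element of the field, so $\det(S') \neq 0$. Applying this to every square submatrix of $S$ completes the proof.

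The main obstacle is purely bookkeeping---tracking the signs and the products of pulled-out factors through the row/column operations---rather than any conceptual difficulty. A cleaner alternative would be to verify the rational-function identity over the field of rational functions $\mathbb{Q}(x_1,\ldots,x_m,y_1,\ldots,y_m)$ and then specialize to $\F_q$, but I would expect the inductive approach to be the most transparent for inclusion in the paper.
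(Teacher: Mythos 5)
Your proposal is correct and follows essentially the same route as the paper: observe that every square submatrix of a Cauchy matrix is itself a Cauchy matrix on distinct parameters, invoke the Cauchy determinant formula, and note that all factors are non-zero in $\F_q$. The only difference is that the paper simply cites the determinant identity (to P\'olya--Szeg\H{o}) rather than re-deriving it by induction as you sketch, which is a matter of exposition rather than substance.
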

\begin{proof}
For $1\leq m\leq \min(n,k)$, consider an arbitrary submatrix $S'\in\F_q^{m\times m}$ of $S$. Note that $S'$ is a Cauchy matrix, too, and $S'=(1/(x'_i-y'_j))_{i,j=1}^{m}$ for distinct $x'_1,\ldots x'_m,y'_1,\ldots,y'_m$. The determinant of $S'$, known as the Cauchy determinant~\cite[Part~VII, Chapter~1, Problem~3]{polya1972problems}, is 
\[
\det(S')=\frac{\prod_{i=2}^m\prod_{j=1}^{i-1}(x'_i-x'_j)(y'_i-y'_j)}{\prod_{i=1}^m\prod_{j=1}^m(x'_i-y'_j)} \;.
\]
Since $x'_1,\ldots,x'_n,y'_1,\ldots,y'_k$ are distinct, $\det(S')\neq 0$, which finishes the proof.
\end{proof}

As a consequence of \cref{prop:cauchy}, we get an efficient construction of (columns of) super regular matrices. 
\begin{corollary}\label{cor:cauchy}
Let $q$ be a prime power, and let $n$, $k$, and $i$ be positive integers satisfying $k+n\geq q$ and $1\leq i\leq k$. Then there exists a deterministic algorithm that runs in time $n \cdot \poly(\log q, \log n)$, and on input $(q,n,k,i)$ outputs the $i$th column $\vec{c}_i$ of a super regular matrix $S = (\vec{c}_1, \ldots, \vec{c}_k) \in \F_q^{n \times k}$.
\end{corollary}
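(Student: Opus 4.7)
The plan is to invoke Proposition~\ref{prop:cauchy} with an explicit, efficient choice of the parameters $x_1, \ldots, x_n, y_1, \ldots, y_k \in \F_q$ defining the Cauchy matrix, and then directly compute only the $n$ entries of the requested column.

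First, I would fix a canonical enumeration of the elements of $\F_q$: under the standard representation of $\F_q = \F_p[z]/(f(z))$ (with $p$ the characteristic and $f$ an irreducible polynomial of degree $\log_p q$), each field element corresponds to a polynomial of degree less than $\log_p q$ over $\F_p$, and these can be totally ordered lexicographically (or, when $q$ is prime, simply by identification with $\{0, 1, \ldots, q-1\}$). Given an index $1 \leq t \leq q$, the $t$th element in this enumeration can be produced from the base-$p$ expansion of $t-1$ in $\poly(\log q)$ time. Let $x_1, \ldots, x_n, y_1, \ldots, y_k$ be the first $n+k$ elements of $\F_q$ in this enumeration; by the assumption relating $n$, $k$, and $q$ (giving at least $n+k$ field elements), these are $n+k$ distinct elements of $\F_q$, so the matrix $S \in \F_q^{n \times k}$ with entries $S_{j,\ell} = 1/(x_j - y_\ell)$ is a Cauchy matrix and hence super regular by Proposition~\ref{prop:cauchy}.

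To output the $i$th column $\vec{c}_i$, I would then: (1) compute the single field element $y_i$ from the canonical enumeration in $\poly(\log q)$ time; (2) for each $j = 1, \ldots, n$, compute $x_j$ from the enumeration, form $x_j - y_i \in \F_q$, and compute its inverse in $\F_q$, setting $(\vec{c}_i)_j := 1/(x_j - y_i)$. Each field operation — addition, subtraction, and inversion via the extended Euclidean algorithm (for $\F_p$ when $q$ is prime, or for polynomials modulo $f(z)$ in the prime-power case) — takes $\poly(\log q)$ time, so the total cost is $n \cdot \poly(\log q) + \poly(\log q, \log n)$, which matches the claimed $n \cdot \poly(\log q, \log n)$ bound. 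Note that we never materialize the full matrix $S$, which is what allows the runtime to be linear in $n$ rather than $nk$.

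The only real obstacle is being precise about the field representation and the cost of arithmetic in $\F_q$; these are standard but worth stating carefully. In the prime case $q = p$ everything is immediate, and in the prime-power case I would assume (as is standard in this model) that an irreducible polynomial $f(z) \in \F_p[z]$ defining $\F_q$ is fixed in advance as part of the representation of $\F_q$, so that neither the enumeration step nor the arithmetic step requires anything more than polynomial-time manipulation of $\poly(\log q)$-bit objects. No new ideas beyond Proposition~\ref{prop:cauchy} and basic finite-field arithmetic are needed.
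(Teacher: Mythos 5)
Your proposal is correct and takes the same approach the paper implicitly intends (the paper states \cref{cor:cauchy} without proof, as an immediate consequence of \cref{prop:cauchy} together with the observation that each entry of a Cauchy matrix can be computed independently in $\poly(\log q)$ time). You also correctly read the constraint as requiring $n+k \leq q$ so that $n+k$ distinct field elements exist; the ``$k+n\geq q$'' in the statement (and in the Cauchy-matrix definition) is evidently a sign typo, consistent with how the corollary is invoked in \cref{thm:randomized}, where the field is extended precisely when $q < k+n$.
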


\subsection{Fine-Grained Complexity}
In the $k$-SAT problem, given a $k$-CNF formula $\varphi$, the task is to check if $\varphi$ has a satisfying assignment. The complement of $k$-SAT, the $k$-TAUT problem, is to decide if all assignments to the variables of a given $k$-DNF formula satisfy it. The celebrated Strong Exponential Time Hypothesis (SETH) is now central to the field of fine-grained complexity.

\begin{definition}[Strong Exponential Time Hypothesis (SETH),~\cite{IP99}]
    For every constant $\eps > 0$ there exists $k \in \Z^+$ such that no deterministic algorithm solves $k$-SAT on formulas with $n$ variables in time $2^{(1 - \eps)n}$.
\end{definition}

The field of fine-grained complexity has leveraged SETH to prove tight bounds on the complexity of a large host of problems (see the excellent surveys by Vassilevska Williams~\cite{V15,V18}). We will also use the following stronger hypothesis.

\begin{definition}[Nondeterministic Strong Exponential Time Hypothesis (NSETH),~\cite{carmosino2016nondeterministic}]
    For every constant $\eps > 0$ there exists $k \in \Z^+$ such that no nondeterministic algorithm solves $k$-TAUT on formulas with $n$ variables in time $2^{(1 - \eps)n}$.
\end{definition}

A classical tool used in studies of algorithms for $k$-SAT is the Sparsification Lemma.
\begin{theorem}[Sparsification Lemma~\cite{impagliazzo2001problems}]\label{sparsification_lemma}
    For every $k \geq 3$ and $\lambda > 0$ there exists a constant $c = c(k, \lambda)$ such that every $k$-SAT formula $\varphi$ with $n$ variables can be expressed as $\varphi = \lor_{i=1}^r \psi_{i}$ where $r \leq 2^{\lambda n}$ and each $\psi_i$ is a $k$-SAT formula with at most $cn$ clauses. Moreover, all $\psi_i$ can be computed in $2^{\lambda n}$-time.
\end{theorem}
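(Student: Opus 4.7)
The plan is to prove the Sparsification Lemma following the recursive branching strategy of Impagliazzo, Paturi, and Zane. Given $\varphi$, the procedure will build a branching tree of $k$-CNFs whose root is $\varphi$, whose leaves are the desired $\psi_1, \ldots, \psi_r$, and whose internal nodes correspond to case-splits on small sets of literals. Correctness will follow because at each node the two children are logically equivalent to the parent, so the disjunction of the leaves equals $\varphi$. The quantitative goal is to ensure each leaf is $cn$-sparse and the tree has at most $2^{\lambda n}$ leaves.

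The key structural tool is a \emph{flower}: for a set of literals $H$ with $|H| < k$, a collection of $t$ clauses $C_1, \ldots, C_t$ of $\varphi$ forms a flower with heart $H$ and $t$ petals if each $C_i$ contains $H$ and the remainders $P_i := C_i \setminus H$ are pairwise disjoint. By the Erd\H{o}s--Ko--Rado Sunflower Lemma, if $\varphi$ contains more than $k! \cdot (t-1)^k$ clauses of width $\leq k$, then it must contain a flower with $t$ petals at some heart of width $< k$, and such a flower can be found in polynomial time by exhaustive search over candidate hearts of size $<k$.

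The branching step, given such a flower, splits $\varphi$ into two subformulas: in branch (a) we assume $\bigwedge_{\ell \in H} \bar\ell$, so each $C_i$ reduces to the shorter clause $P_i$ (which can then be processed at a smaller width, possibly triggering further sparsification); in branch (b) we assume $\bigvee_{\ell \in H} \ell$, so every $C_i$ is satisfied and removable. Together these are logically equivalent to $\varphi$, and we recurse on both. The recursion terminates when no flower of $t$ petals exists at any width $1 \leq i \leq k$, at which point the Sunflower Lemma (contrapositively) bounds the total number of clauses by $c = c(k, t)$ times $n$, giving the sparsity guarantee.

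The remaining task is to bound the number of leaves, which I would do by assigning each subformula a potential equal to a weighted sum of clauses by width, and showing that every branching step decreases this potential by a constant factor in expectation (over a random choice of branch), with the weights tuned so that both the many-unit-clause branch (a) and the one-new-clause branch (b) contribute comparable progress. The main obstacle is the parameter balancing between $t$ and the weights in the potential: $t$ must be large enough to keep each leaf $cn$-sparse (fixing $c$ as a function of $k$ and $t$), yet small enough---and the weights chosen carefully enough---that the branching tree has at most $2^{\lambda n}$ leaves for the given $\lambda$. Once this calibration is done, each local step (flower detection, unit propagation, clause simplification) runs in polynomial time, so the total running time is dominated by the number of leaves and is bounded by $2^{\lambda n}$ as required.
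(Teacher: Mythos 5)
This is a statement the paper imports by citation from Impagliazzo, Paturi, and Zane and does not prove, so there is no in-paper argument to compare against; your sketch follows the standard IPZ branching strategy, which is the right general route. However, as written it has two genuine gaps. First, your termination/sparsity argument is incorrect: the contrapositive of the Erd\H{o}s--Rado sunflower lemma (note: Erd\H{o}s--Rado, not Erd\H{o}s--Ko--Rado, which is a different theorem about intersecting families) says that a family with no $t$-petal sunflower has at most $k!\,(t-1)^k$ sets --- a \emph{constant}, not $cn$. A branching process cannot drive an arbitrary $k$-CNF down to constantly many clauses within $2^{\lambda n}$ leaves, so this cannot be the right stopping condition. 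The actual IPZ argument does not invoke the sunflower lemma at all: it fixes thresholds $\theta_i$ (depending on $k$ and $\lambda$) for each heart size $i$, branches whenever some heart of size $i$ lies in more than $\theta_i$ clauses, and at termination bounds the clause count by observing that each of the $2n$ literals is a size-$1$ heart contained in at most $\theta_1$ clauses, giving the linear bound $cn$. Disjointness of the petals is likewise not needed for soundness of the branching.

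Second, the bound of $2^{\lambda n}$ on the number of leaves --- which is the entire technical content of the lemma --- is explicitly deferred in your last paragraph (``once this calibration is done\ldots''). The potential-function analysis with width-dependent weights, and the delicate interplay between the thresholds $\theta_i$ and those weights, is precisely where the proof is hard; a proposal that names the obstacle but does not resolve it has not proved the statement. If you intend to rely on this lemma, cite it as the paper does; if you intend to prove it, the calibration must be carried out.
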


Now we formally define the notions of fine-grained reductions and SETH-hardness. We start with the definition of fine-grained reductions from~\cite[Definition~6]{V15} with a small modification that specifies the dependence of $\delta$ on $\varepsilon$.
\begin{definition}[Fine-grained reductions]\label{deg:fg}
    Let $P,Q$ be~problems, $p, q \colon \mathbb{Z}_{\geq 0} \to \mathbb{Z}_{\geq 0}$ be non-decreasing functions and ${\delta\colon\R_{>0}\to\R_{>0}}$.
    We say that $(P, p(n))$ \emph{$\delta$-fine-grained reduces} to~$(Q, q(n))$
    and write $(P,p(n)) \le_{\delta} (Q,q(n))$,
    if~for every $\varepsilon>0$ and $\delta=\delta(\eps)>0$, there exists
    an~algorithm~$\mathcal A$ for~$P$ with oracle access to~$Q$,
    a~constant~$d$, a~function $t(n) \colon \mathbb{Z}_{\geq 0} \to \mathbb{Z}_{\geq 0}$, such that on~any instance of~$P$ of size~$n$, the algorithm~$\mathcal A$
    \begin{itemize}
        \item runs in time at~most $d(p(n))^{1-\delta}$;
        \item produces at~most $t(n)$~instances of~$Q$ adaptively: every instance depends on~the previously produced instances
        as~well as~the answers of~the oracle~$Q$;
        \item the sizes $n_i$ of the produced instances satisfy the inequality
        \[\sum_{i=1}^{t(n)}q(n_i)^{1-\varepsilon} \leq d(p(n))^{1-\delta} \, .\]
    \end{itemize}
\end{definition}

It is easy to see that if $(P,p(n)) \leq_\delta (Q,q(n))$, then for every $\eps>0$, an algorithm for $Q$ running in time $O(q(n)^{1-\eps})$ implies an algorithm for $P$ running in time $O(p(n)^{1-\delta})$. Equipped with this definition, we are ready to define SETH-hardness of a problem. SETH-hardness of a problem is a sequence of fine-grained reductions from $k$-SAT for every value of $k$.

\begin{definition}[SETH-hardness]\label{def:seth-hardness}
Let $p \colon \mathbb{Z}_{\geq 0} \to \mathbb{Z}_{\geq 0}$ be a non-decreasing function. We say that a problem $P$ is \emph{$p(n)$-SETH-hard} if there exists a function $\delta\colon\R_{>0}\to\R_{>0}$ and for every $k\in\N$, 
\[
(k\text{-SAT},2^n)\leq_{\delta} (P,p(n)) \,.
\]
\end{definition}
It is now easy to see that if a problem $P$ is $p(n)$-SETH-hard, then any algorithm solving $P$ in time $p(n)^{(1-\varepsilon)}$ implies an algorithm solving $k$-SAT in time $2^{(1-\delta(\eps))n}$ for all~$k$, thus, breaking SETH.

\subsection{Matrix Rigidity and Circuit Lower Bounds}

A celebrated result of Valiant~\cite{V77} from 1977 shows that all linear functions computed by linear circuits of logarithmic depth and linear size can be written as a sum of a sparse and low-rank matrices. Valiant therefore introduced the notion of \emph{rigid} matrices.

\begin{definition}\label{def:rigidity}
Let $\F$ be a field, $A\in\F^{n\times n}$ be a matrix, and $0\leq r\leq n$. The rigidity of $A$ over~$\F$, denoted by $\cR_r^\F(A)$, is the Hamming distance between $A$ and the set of matrices of rank at most~$r$. Formally,
\[
\cR_r^\F(A) := \min_{S\colon \rank(A-S)\leq r}\|S\|_0 \; .
\]
\end{definition}
In other words, a matrix $A$ has rigidity $\cR_r^\F(A)\geq s$ if and only if $A\in\F^{n\times n}$ \emph{cannot} be written as a sum
$$
A = S + L\; ,
$$
where $S\in\F^{n\times n}$ is an $(s-1)$-sparse matrix, and $L\in\F^{n\times n}$ has low rank $\rank(L)\leq r$.

Valiant proved that computing a matrix~$A\in\F^{n\times n}$ of rigidity $\cR_{\eps n}^\F(A)\geq n^{1+\eps}$ for a constant~$\eps$, requires linear circuits of super-linear size and logarithmic depth. The construction of a rigid matrix even in the complexity class $\cc{E}^\NP$ (i.e., an $\cc{E}^\NP$ algorithm that outputs rigid matrices) would give us a long-awaited super-linear circuit lower bound.

\section{Fast Matrix Multiplication Verification for Sparse Matrices}
\label{sec:fast-mmv-algs}

In this section, we prove the main algorithmic results of this work: we present efficient deterministic and randomized algorithms for sparse MMV over the integers and finite fields. Specifically, in \cref{sec:detalg,sec:randalg}, we prove \cref{thm:rmm-deterministic-intro,thm:fast-random-intro}, respectively. In \cref{sec:linealgbar}, we give a simple barrier to extending these techniques to the general case of MMV.

\subsection{Deterministic MMV for Sparse Matrices}\label{sec:detalg}
We present a deterministic algorithm %
that solves MMV over the ring of integers. We do this by first giving an algorithm that solves MMV over finite fields in \cref{thm:rmm-deterministic-formal}, and then extend this result to the ring of integers in \cref{cor:rmm-deterministic-formal-integers}.

We will use the following simple observation relating the sparsity of a matrix with the sparsity of its rows and columns. This observation also appears, e.g., in~\cite{wu2023correcting}.

\begin{lemma} \label{lem:mat-sparse-implies-rowcol-sparse}
Let $M$ be a $t$-sparse matrix. Then $M$ has a non-zero, $\sqrt{t}$-sparse row or column.
\end{lemma}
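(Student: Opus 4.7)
The plan is to argue by contradiction using a simple double-counting of non-zero entries. Implicitly the lemma is about a non-zero $M$ (for $M = 0$ there is no non-zero row or column at all). So assume $M$ is non-zero and $t$-sparse, and suppose towards contradiction that every non-zero row of $M$ has strictly more than $\sqrt{t}$ non-zero entries, and likewise every non-zero column of $M$ has strictly more than $\sqrt{t}$ non-zero entries.

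Since $M \neq 0$, pick any non-zero row $\vec{r}$. By the assumption, $\vec{r}$ has more than $\sqrt{t}$ non-zero entries, so the number of non-zero columns of $M$ is strictly greater than $\sqrt{t}$ (each non-zero entry of $\vec{r}$ lies in a distinct non-zero column). Each such column contributes, again by assumption, strictly more than $\sqrt{t}$ non-zero entries to $M$, and these contributions are disjoint across columns. Summing gives $\|M\|_0 > \sqrt{t}\cdot\sqrt{t} = t$, contradicting $t$-sparsity of $M$. Hence some non-zero row or column of $M$ must have at most $\sqrt{t}$ non-zero entries, i.e., is $\sqrt{t}$-sparse.

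I do not expect any real obstacle here; the only minor care is the use of strict inequalities combined with integer counts, which is fine because if every non-zero row/column has at least $\lfloor\sqrt{t}\rfloor + 1$ non-zero entries then $\|M\|_0 \geq (\lfloor\sqrt{t}\rfloor+1)^2 > t$ still gives the contradiction. This matches how the lemma is used downstream: after applying it to $AB$, one obtains a non-zero vector $\vec{x}$ of Hamming weight at most $k := \lfloor \sqrt{t} \rfloor$ arising as a row or column of $AB$, which is exactly the sparsity threshold at which the parity-check matrix $H$ (from \cref{cor:mds-parity-check}) and the $k$-regular matrix $S$ (from \cref{cor:cauchy}) are guaranteed to detect non-zero inputs.
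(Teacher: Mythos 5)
Your proof is correct and rests on the same elementary double-counting idea as the paper's: if every non-zero row and every non-zero column had more than $\sqrt{t}$ non-zero entries, the total number of non-zeros would exceed $t$. The paper phrases this directly (if no row is $\sqrt{t}$-sparse, then there are fewer than $\sqrt{t}$ non-zero rows, so every column is automatically $\sqrt{t}$-sparse) while you phrase it as a contradiction, but the content is identical; your explicit remark that $M \neq 0$ is implicitly required is a fair observation that the paper leaves unstated.
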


\begin{proof}
If $M$ has a non-zero, $\sqrt{t}$-sparse row then the claim holds. So, suppose that all non-zero rows of $M$ have more than $\sqrt{t}$ non-zero entries. Then, because $\norm{M}_0 \leq t$, $M$ has fewer than $\sqrt{t}$ non-zero rows. It follows that all non-zero columns of $M$ are $\sqrt{t}$-sparse.
\end{proof}

We now prove our main theorem about deterministic MMV.

\begin{theorem}[Fast deterministic MMV for sparse matrices, formal]
\label{thm:rmm-deterministic-formal}
Let $0 \leq \delta \leq 2$ be a constant, and $\F_q$ be a finite field. 
Then for every constant $\eps>0$, there is a deterministic algorithm for $\textrm{MMV}_{\F_q}^{n^\delta}$  that runs in $n^{\omega(1, 1, \delta/2)+\eps}\cdot\poly(\log{q})$ time.
\end{theorem}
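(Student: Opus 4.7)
The plan is to reduce $\textrm{MMV}_{\F_q}^{n^\delta}$ to the All Zeroes Problem via \cref{thm:mmv-all-zeroes-reduction}, so that it suffices to decide whether $AB = 0$ for matrices $A, B \in \F_q^{n \times n}$ promised to satisfy $\|AB\|_0 \leq n^\delta$ (the constant-factor blow-up from $n$ to $2n$ is harmless and can be absorbed by adjusting constants).

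The key observation is \cref{lem:mat-sparse-implies-rowcol-sparse}: if $AB \neq 0$ then $AB$ admits a non-zero row or column of Hamming weight at most $k := \lceil n^{\delta/2} \rceil$. I would then invoke \cref{cor:mds-parity-check} to construct a matrix $H \in \F_q^{k \times n}$ (the parity-check matrix of an $[n, n-k, k+1]_q$ MDS code, e.g.\ a generalized Reed--Solomon code) whose kernel contains no non-zero $k$-sparse vector. The algorithm computes both $HAB$ and $H(AB)^T$ and accepts if and only if both vanish. Correctness is immediate: if $AB = 0$ then both products are zero, while if $AB \neq 0$, then whichever of the two cases of \cref{lem:mat-sparse-implies-rowcol-sparse} occurs supplies a non-zero $k$-sparse vector appearing as a column of $AB$ or of $(AB)^T$, which the defining property of $H$ forces to map to a non-zero column of $HAB$ or $H(AB)^T$, respectively.

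For the running time, I would evaluate the products left-to-right: $HAB = (HA)B$ and $H(AB)^T = (HB^T)A^T$. Each expression is the composition of two multiplications of a $k \times n$ matrix with an $n \times n$ matrix, which takes $O(n^{\omega(\delta/2, 1, 1) + \eps'})$ arithmetic operations; by \cref{thm:rectmult}, $\omega(\delta/2, 1, 1) = \omega(1, 1, \delta/2)$. Constructing $H$ costs $\Ot(n^2) \cdot \poly(\log q)$ by \cref{cor:mds-parity-check}, which is dominated since $\omega(1, 1, \delta/2) \geq 2$. Multiplying the arithmetic-operation count by the $\poly(\log q)$ cost of a single $\F_q$-operation gives the claimed bound, after choosing $\eps' < \eps$.

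The one subtlety, which I expect to be the main obstacle (and which is really bookkeeping), is that \cref{cor:mds-parity-check} requires $n \leq q$. When $q < n$, I would construct $H$ over an extension field $\F_{q^m}$ with $m := \lceil \log_q n \rceil$, so that $q^m \geq n$. Since $\F_q \subseteq \F_{q^m}$, the tests $HAB = 0$ and $H(AB)^T = 0$ carried out over $\F_{q^m}$ are equivalent to the original equalities over $\F_q$, and each $\F_{q^m}$-operation costs $\poly(m \log q) = \poly(\log n) \cdot \poly(\log q)$; the stray $\poly(\log n)$ factor is absorbed into the $n^{\eps}$ slack. The deterministic integer version then follows by reducing modulo a suitable prime of magnitude $\poly(n)$, which is the content of \cref{cor:rmm-deterministic-formal-integers}.
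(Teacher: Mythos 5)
Your proposal is correct and follows essentially the same route as the paper's own proof: reduce to AllZeroes, construct a parity-check matrix $H$ of an MDS code over a small extension field if necessary, test $HAB=0$ and $H(AB)^T=0$ with rectangular matrix multiplication, and invoke \cref{lem:mat-sparse-implies-rowcol-sparse} together with \cref{cor:mds-parity-check} for correctness. The only cosmetic differences are that the paper fixes a specific extension field $\F_r$ with $n \leq r < n^2$ (citing a deterministic $\Ot(n)$-time construction of irreducible polynomials, a small detail your sketch leaves implicit) and expresses $H$'s row count as $n-k$ for $k = n - n^{\delta/2}$ rather than directly as $\lceil n^{\delta/2}\rceil$.
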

\begin{proof}
From \cref{thm:mmv-all-zeroes-reduction}, it is sufficient to construct a deterministic algorithm solving the $\textrm{AllZeroes}_\F^{n^\delta}$ problem with the same guarantee on the running time. If $q< n$, then we consider $\F=\F_r$, a field extension of $\F_q$, where $n\leq r < n^2$. (Such a field can be constructed in deterministic time $\Ot(n)$~\cite{shoup1990new}). If $q\geq n$, then we simply set $\F=\F_q$. We remark that $AB=0$ over $\F_q$ if and only if $AB=0$ over $\F$, and that all arithmetic operations over $\F$ can be performed in time $\poly(\log{q},\log{n})$ (see, e.g., \cite{shoup2009computational}).
In the following, given two matrices $A,B\in\F^{n\times n}$, with the guarantee that $\norm{AB}_0\leq n^\delta$, we will determine if $AB=0$. 

Let $H\in\F^{n^{\delta/2}\times n}$ be the parity check matrix from \cref{cor:mds-parity-check} with $k = n - n^{\frac{\delta}{2}}$ computed in time $n^2 \cdot \poly(\log{q})$. Our algorithm reports that $AB=0$ if and only if both $HAB = 0$ and $H(AB)^T = 0$. 

Since the matrix $H$ is computed in time $n^2 \cdot \poly(\log{q},\log{n})$, and computing the matrix products $HAB$ and $H(AB)^T$ takes time $n^{\omega(1, 1, \delta/2)+\eps}\cdot\poly(\log{q})$ for any constant $\eps>0$, we conclude the desired bound on the running time of the algorithm.

It remains to prove the correctness of the algorithm.
First, if $AB=0$, then $HAB=H(AB)^T=0$ for all matrices $H$, and, therefore,  our algorithm correctly concludes that $AB=0$. 
Now, suppose that $AB \neq 0$.
If a non-zero column $\vec{c}$ of $AB$ has $\norm{\vec{c}}_0\leq n^{\delta/2}$ then $H\vec{c}\neq\vec{0}$ by \cref{cor:mds-parity-check} and so $HAB\neq0$.
Similarly, if a non-zero row $\vec{r}$ of $AB$ has $\norm{\vec{r}}_0\leq n^{\delta/2}$ then $H\vec{r}^T\neq\vec{0}$ and so $H(AB)^T \neq 0$.
Moreover, \cref{lem:mat-sparse-implies-rowcol-sparse} implies that $AB$ has a non-zero, $n^{\delta/2}$-sparse row or column, and so the theorem follows.

\end{proof}

We now show how to extend \cref{thm:rmm-deterministic-formal} to the case of MMV over the integers.
\begin{figure}[t]
    \begin{wbox}
    \begin{enumerate}[(1)]
    \item Let $k := \ceil{\sqrt{t}}$.

    \item Compute an arbitrary prime $p$ that satisfies $n \leq p \leq 2n$.
    
    \item Compute a (parity check) matrix $H \in \F_p^{k \times n}$
    as in \cref{cor:mds-parity-check}.

    \item Output YES if $H(AB -C) = 0$ and $H(AB - C)^T = 0$ (where arithmetic is performed over $\Z$), and output NO otherwise.
\end{enumerate}
    \end{wbox}
\caption{\textsc{Deterministic Algorithm for $\textrm{MMV}_{\Z}^t$.}}
\label{fig:det-mmv-alg}
\end{figure}

\begin{corollary}\label{cor:rmm-deterministic-formal-integers}
Let $0 \leq \delta \leq 2$ be a constant. Then for every $\eps>0$, there is a deterministic algorithm (given in \cref{fig:det-mmv-alg}) for $\textrm{MMV}_\Z^{n^\delta}$ for matrices with entries from $[-M,M]$ that runs in $n^{\omega(1, 1, \delta/2)+\eps}\cdot\poly(\log{M})$ time.
\end{corollary}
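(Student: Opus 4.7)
The plan is to use the algorithm given in \cref{fig:det-mmv-alg}, which mirrors the finite-field algorithm from \cref{thm:rmm-deterministic-formal} but performs the multiplications over $\Z$ using an integer-lifted parity-check matrix. Concretely, first find any prime $p \in [n, 2n]$ by trial division (Bertrand's postulate guarantees existence, and this step costs $\poly(n)$ time). Then invoke \cref{cor:mds-parity-check} over $\F_p$ to build $H \in \F_p^{k \times n}$ with $k = \lceil n^{\delta/2} \rceil$, lift its entries to $[0, p-1] \subset \Z$, and compute the two products $H(AB - C)$ and $H(AB - C)^T$ over $\Z$ using fast rectangular matrix multiplication, organizing them as $(HA)B - HC$ and $(H B^T) A^T - H C^T$ respectively. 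The algorithm then accepts iff both results vanish entrywise.

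For correctness, the only case of interest is $AB \neq C$, where by \cref{lem:mat-sparse-implies-rowcol-sparse} the matrix $AB - C$ must contain a non-zero $n^{\delta/2}$-sparse row or column $\vec{c}$; by symmetry assume $\vec{c}$ is a column. The new ingredient (compared to the finite-field proof) is that we must rule out $H \vec{c} = \vec{0}$ \emph{over $\Z$}, not just modulo $p$. The key lemma is that if $H$ has integer entries and any $k$ of its columns are linearly independent over $\F_p$ (which is exactly the property supplied by \cref{cor:mds-parity-check}), then any $k$ of those columns are linearly independent over $\Q$. This follows from a standard $\gcd$ argument: any rational dependence among integer columns can be scaled to an integer dependence with coprime coefficients, which would then give a non-trivial dependence modulo $p$, contradicting $\F_p$-linear independence. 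Since $\vec{c}$ is a non-zero integer vector supported on at most $k$ coordinates, expanding $H \vec{c} = \sum_{i \in \mathrm{supp}(\vec{c})} c_i H_i$ shows this is a non-trivial $\Q$-linear combination of $\Q$-linearly independent integer columns, hence non-zero over $\Z$.

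For the running time, step~(2) takes $\poly(n)$ time, step~(3) takes $n^2 \cdot \poly(\log p) = \Ot(n^2)$ time by \cref{cor:mds-parity-check}, and step~(4) is dominated by the rectangular products, each of which is an $n^{\delta/2} \times n$ by $n \times n$ multiplication and can be performed with $O(n^{\omega(1,1,\delta/2)+\eps})$ arithmetic operations (using \cref{thm:rectmult}). All intermediate integers stay bounded by $\poly(n, M)$ because $H$'s entries lie in $[0, p) \subseteq [0, 2n)$ and the entries of $A, B, C$ are in $[-M, M]$, so each arithmetic operation on $\Z$ costs $\poly(\log n, \log M)$. Multiplying gives the claimed bound $n^{\omega(1,1,\delta/2)+\eps} \cdot \poly(\log M)$.

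The one subtle point, and the main conceptual obstacle, is the gap between $\F_p$-arithmetic (where \cref{cor:mds-parity-check} lives) and the $\Z$-arithmetic that the algorithm actually performs. One might be tempted to demand $p > \|AB - C\|_\infty$, which would force $p = \poly(n, M)$ and complicate the construction of $H$; the $\gcd$ lemma above lets us avoid this by showing that the lifted $H$ still "separates" sparse integer vectors, so a tiny prime $p \in [n, 2n]$ suffices regardless of how large the magnitude bound $M$ is. Everything else is a straightforward unpacking of \cref{thm:rmm-deterministic-formal} in the integer setting.
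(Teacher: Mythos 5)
Your proposal is correct and follows essentially the same approach as the paper's proof: pick a prime $p \in [n, 2n)$, lift the $\F_p$ parity-check matrix from \cref{cor:mds-parity-check} to $\Z$, perform the rectangular products over the integers, and argue correctness by reducing a putative integer dependence $H\vec{c} = \vec{0}$ modulo $p$ after clearing common factors. Your phrasing of the correctness step as a general lemma ("$\F_p$-linear independence of integer columns implies $\Q$-linear independence via a gcd/scaling argument") is just a reorganization of the paper's argument, which instead factors out powers of $p$ from $\vec{c}$ directly; both lead to the same contradiction with the parity-check guarantee.
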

\begin{proof}
The algorithm follows the high level idea of the algorithm from \cref{thm:rmm-deterministic-formal} with a few modifications that we describe below. We present an algorithm for the $\textrm{AllZeroes}_\Z^{n^\delta}$ problem, and use \cref{thm:mmv-all-zeroes-reduction} to conclude an algorithm for the $\textrm{MMV}_\Z^{n^\delta}$ problem.
Let $p$ be an arbitrary prime satisfying $n\leq p<2n$. Such prime can be found in deterministic time $\Ot(n)$ using the Sieve of Eratosthenes (see, e.g., \cite[Theorem~18.10~(ii)]{modernalgebra}). Let $H\in\F_p^{n^{\delta/2}\times n}$ be the parity check matrix from \cref{cor:mds-parity-check} with $k=n-n^{\delta/2}$ computed in time $n^2\cdot\poly(\log{n})$. Our algorithm reports that $AB=0$ if and only if $HAB=H(AB)^T=0$, where the multiplication is over the integers. 

The running time of the algorithm is dominated by the time required to compute rectangular matrix multiplication, $n^{\omega(1, 1, \delta/2)+\eps}\cdot\poly(\log{M})$ for any $\eps>0$. If $AB=0$, then $HAB=H(AB)^T=0$ for every $H$. Now, if $AB\neq0$, we can conclude that $AB$ contains a non-zero column or row vector of sparsity at most $n^{\delta/2}$. Let $\vec{c}\in\Z^n$ be a nonzero column of $AB$ of sparsity $\norm{\vec{c}}_0\leq n^{\delta/2}$. Assume that $H\vec{c}=\vec{0}$ where the multiplication is over the integers. We factor out the largest power of $p$ dividing all coefficients of $\vec{c}$, and obtain a non-zero vector $\vec{c}'\in\F_p^n$, $\norm{\vec{c}'}_0\leq n^{\delta/2}$ such that $H\vec{c}'=\vec{0}$ where the multiplication is over $\F_p$. This contradicts the guarantee of \cref{cor:mds-parity-check}. Therefore, $H\vec{c}\neq\vec{0}$, and our algorithm detects that $AB\neq C$ in this case. Similarly, in the case when $AB$ contains a non-zero row of sparsity at most $n^{\delta/2}$, the algorithm detects that $AB\neq0$ from $H(AB)^T\neq 0$.
\end{proof}

We again remark that the algorithm from \cref{cor:rmm-deterministic-formal-integers} outperforms the currently best known algorithm from~\cite{kunnemann2018nondeterministic} for $\text{MMV}_\Z^{n^\delta}$ for every $\delta \geq 1.055322$. Indeed, the algorithm from~\cite{kunnemann2018nondeterministic} runs in time $\Ot({n^{1+\delta}})$, while the running time of the algorithm from \cref{cor:rmm-deterministic-formal-integers} is $\Ot\left(n^{\omega(1, 1, \delta/2)+\eps}\right)$ for any $\eps>0$. From Table~1 in~\cite{williams2024new}, the latter running time is asymptotically better than the former one for every $\delta/2>0.527661$.

If the matrix multiplication exponent $\omega=2$, then the MMV problem can be solved in time $n^{2+\eps}$ for every $\eps>0$. Below we show that even if $\omega>2$, then our algorithm solves $\textrm{MMV}_\Z^{n^\delta}$ in better-than-$n^\omega$ time for every constant $\delta<2$.

\begin{corollary}\label{thm:fast_ver}
If $\omega > 2$, then for every $0 \leq \delta < 2$ there exists an $\alpha>0$, and a deterministic algorithm for $\textrm{MMV}_\Z^{n^\delta}$ for matrices with entries from $[-M,M]$ that runs in $n^{\omega-\alpha}\cdot\poly(\log{M})$ time.
\end{corollary}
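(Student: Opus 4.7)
The plan is to derive \cref{thm:fast_ver} directly from \cref{cor:rmm-deterministic-formal-integers} by showing that the rectangular exponent $\omega(1, 1, \delta/2)$ is strictly less than $\omega$ whenever $\delta < 2$ and $\omega > 2$. The algorithm itself will just be the one from \cref{cor:rmm-deterministic-formal-integers}, instantiated with a small enough $\eps$; the entire content of the corollary is the strict inequality $\omega(1,1,\delta/2) < \omega$, which gives us room to absorb the $\eps$ in the exponent of that running time.

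To establish this strict inequality, I will appeal to \cref{thm:rectMM-ub}. Fix any $0 \leq \delta < 2$ and set $\beta := \delta/2 < 1$. There are two cases depending on how $\beta$ compares to the dual exponent $\omega^{\perp}$:
\begin{itemize}
\item If $\beta \leq \omega^{\perp}$, then \cref{thm:rectMM-ub} gives $\omega(1,1,\beta) \leq 2 < \omega$, using the assumption $\omega > 2$.
\item If $\omega^{\perp} < \beta < 1$, then \cref{thm:rectMM-ub} yields
\[
\omega(1,1,\beta) \;\leq\; 2 + (\omega - 2)\cdot \frac{\beta - \omega^{\perp}}{1 - \omega^{\perp}} \;<\; 2 + (\omega - 2) \;=\; \omega,
\]
since $(\beta - \omega^{\perp})/(1 - \omega^{\perp}) < 1$.
\end{itemize}
In either case, define $\alpha' := \omega - \omega(1,1,\delta/2) > 0$.

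Now apply \cref{cor:rmm-deterministic-formal-integers} with $\eps := \alpha'/2$: the resulting deterministic algorithm for $\textrm{MMV}_\Z^{n^\delta}$ on matrices with entries in $[-M,M]$ runs in time
\[
n^{\omega(1,1,\delta/2) + \eps}\cdot \poly(\log M) \;=\; n^{\omega - \alpha'/2}\cdot \poly(\log M),
\]
so setting $\alpha := \alpha'/2 > 0$ finishes the proof. The argument is essentially a one-line consequence of the rectangular matrix multiplication upper bound, and the only ``obstacle'' is simply noting the (completely routine) case split around $\omega^{\perp}$; no new algorithmic idea is needed beyond \cref{cor:rmm-deterministic-formal-integers}.
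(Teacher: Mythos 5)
Your proof is correct and follows essentially the same route as the paper's: both use \cref{thm:rectMM-ub} with the same case split around $\omega^{\perp}$ to get $\omega(1,1,\delta/2) < \omega$, and then absorb the $\eps$ from \cref{cor:rmm-deterministic-formal-integers} into the slack. The only cosmetic difference is that the paper picks explicit formulas for $\alpha$ and $\eps$ in terms of $\omega$, $\omega^{\perp}$, and $\delta$, while you define $\alpha$ implicitly via $\omega - \omega(1,1,\delta/2)$; both are valid since the statement only asserts existence of some $\alpha > 0$.
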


\begin{proof}
The running time of the algorithm from \cref{cor:rmm-deterministic-formal-integers} is $n^{\omega(1, 1, \delta/2)+\eps}$ for any $\eps>0$. 

If $\delta\leq2\omega^{\perp}$, then by \cref{thm:rectMM-ub}, $\omega(1,1,\delta/2)=2<\omega$. Setting $\eps=\alpha=(\omega-2)/2>0$, we have that the running time of the algorithm is bounded from above by
\[
n^{2+\eps}\cdot\poly(\log{M})=n^{\omega-\alpha}\cdot\poly(\log{M})\,.
\]

If $\delta>2\omega^{\perp}$, then by \cref{thm:rectMM-ub}, 
\[
\omega(1,1,\delta/2)\leq 2 + (\omega - 2)\dfrac{\delta/2 - \omega^{\perp}}{1 - \omega^{\perp}}
\,.\]
In this case, setting 
\[
\alpha=\eps=\dfrac{(\omega - 2)(1 - \delta/2)}{2(1 - \omega^{\perp})}>0
\,,\]
we have that the algorithm runs in time 
\[
n^{\omega(1, 1, \delta/2)+\eps}\cdot\poly(\log{M})=n^{\omega-\alpha}\cdot\poly(\log{M})\,. \qedhere
\]
\end{proof}

Interestingly, \cref{thm:fast_ver} reduces the problem of constructing faster-than-$n^\omega$ algorithms for AllZeroes to the following gap problem: Distinguish the case where the product of two matrices has $\Omega(n^{2-\eps})$ non-zero entries and the case where the product is identically zero. As noted in \cref{sec:related-work}, this leads to a surprising situation: on the one hand, for every $\eps > 0$, if the product of the two matrices is $n^{2-\eps}$-sparse, then there is an $n^{\omega-\alpha}$-time deterministic algorithm for $\alpha>0$; on the other hand, if the product is $\Omega(n^{2-\eps})$-dense, then there is randomized algorithm that runs in time $\Ot(n^{1+\eps})$ and reads only $\Ot(n^{1+\eps})$ entries of the input matrices.

\subsection{Randomized MMV for Sparse Matrices}\label{sec:randalg}
In this section, we extend our framework to the setting of randomness-efficient probabilistic algorithms for MMV. 
We combine our results conditioned on sparsity along with \cite{kimbrel1993probabilistic} to get the best known randomized algorithm for MMV in both running time, and random bits used. In \cref{thm:randomized}, we present an algorithm for MMV over finite fields, and in \cref{cor:randomized} we extend it to the case of matrices over the integers.%

\begin{theorem}[Fast randomized MMV for sparse matrices, formal]\label{thm:randomized}
Let $0 \leq \delta \leq 2$ be a constant, and $\F_q$ be a finite field. Then for every $\eps>0$, there is a randomized algorithm for $\textrm{MMV}_{\F_q}^{n^\delta}$ that runs in $n^2\cdot\poly(\log(nq/\eps))$ time, succeeds with probability $1 - \eps$, and uses at most $\ceil{\delta/2 \cdot \log_2(n) + \log_2(1/\eps)}$ bits of randomness.
\end{theorem}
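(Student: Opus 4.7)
The plan is to randomize the deterministic algorithm of \cref{thm:rmm-deterministic-formal} by replacing the full parity-check matrix $H$ with a single uniformly random row of a Cauchy matrix $S$, exploiting the fact that multiplication by a $k$-regular matrix not only preserves non-vanishing of $k$-sparse vectors but sends them to vectors of high Hamming weight. As before, I would first reduce to $\textrm{AllZeroes}_{\F_q}^{n^\delta}$ via \cref{thm:mmv-all-zeroes-reduction}, so it suffices to decide whether $AB=0$ for $A,B\in\F_q^{n\times n}$ with $\norm{AB}_0\leq n^\delta$. Set $k := \lceil n^{\delta/2}\rceil$ and $M := 2^{\lceil\delta/2\cdot\log_2 n + \log_2(1/\eps)\rceil}$, so that $M \geq n^{\delta/2}/\eps \geq (k-1)/\eps$. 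If $q < M+n$, pass to an extension $\F_{q'}$ with $M+n \leq q' \leq \poly(nq/\eps)$, constructed deterministically in $\Ot(M+n)$ time via~\cite{shoup1990new}; observe that $AB=0$ over $\F_q$ iff $AB=0$ over $\F_{q'}$.

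Next, by \cref{prop:cauchy}, any Cauchy matrix $S\in\F_{q'}^{M\times n}$ built from $M+n$ distinct field elements is super regular and hence $k$-regular. The algorithm samples an index $i$ uniformly from $\{1,\ldots,M\}$ using exactly $\lceil\delta/2\cdot\log_2 n+\log_2(1/\eps)\rceil$ random bits, computes only the $i$th row $\vec{s}_i$ of $S$ in time $n\cdot\poly(\log(nq/\eps))$ (via the row-analog of \cref{cor:cauchy}, which holds by symmetry of the definition of a Cauchy matrix), and outputs YES iff both $\vec{s}_i AB = \vec{0}$ and $\vec{s}_i B^T A^T = \vec{0}$. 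Each such product is evaluated left-to-right as $(\vec{s}_i A)B$ or $(\vec{s}_i B^T)A^T$ in $O(n^2)$ arithmetic operations over $\F_{q'}$, yielding the claimed total running time of $n^2\cdot\poly(\log(nq/\eps))$.

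For correctness, if $AB=0$ the algorithm always outputs YES. If $AB\neq 0$, then by \cref{lem:mat-sparse-implies-rowcol-sparse} some non-zero row or column $\vec{v}$ of $AB$ satisfies $\norm{\vec{v}}_0\leq k$; the algorithm detects this via the $\vec{s}_i B^T A^T$ or the $\vec{s}_i AB$ test, respectively. The key claim is that $S\vec{v}$ has at most $k-1$ zero entries: otherwise some set $I$ of $k$ row-indices would satisfy $(S\vec{v})_I = \vec{0}$, and choosing any size-$k$ column-set $T \supseteq \mathrm{supp}(\vec{v})$, the $k\times k$ submatrix $S_{I,T}$ would annihilate the non-zero restriction $\vec{v}|_T$, contradicting $k$-regularity. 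Hence $\Pr_i[(\vec{s}_i \vec{v}) = 0] \leq (k-1)/M \leq \eps$, so the algorithm succeeds with probability at least $1-\eps$.

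The main point requiring care---and the reason the proof is not literally a one-line Freivalds argument---is ensuring the three parameters (promised sparsity, random bits, and field size) are jointly consistent. Bounding the randomness by $\lceil\frac{\delta}{2}\log_2 n+\log_2(1/\eps)\rceil$ fixes $M$ to be roughly $n^{\delta/2}/\eps$, which in turn forces the Cauchy matrix to live over a field of size at least $M+n$; one must then verify that the possible field extension does not increase the random-bit count (it does not, since the randomness lies only in the row index) and does not inflate the per-operation cost beyond $\poly(\log(nq/\eps))$ (it does not, since $q' \leq \poly(nq/\eps)$).
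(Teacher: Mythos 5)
Your proposal is correct and follows essentially the same approach as the paper's proof: reduce to AllZeroes, sample a random row (the paper samples a column of the transpose) of a Cauchy matrix over a possibly-extended field, and argue via $k$-regularity that the image of a $k$-sparse nonzero vector has few zero coordinates, giving failure probability at most $\eps$. The only differences are cosmetic---your $M$ is chosen as a power of two for exact sampling and you handle integrality with $k=\lceil n^{\delta/2}\rceil$ explicitly, whereas the paper uses $k=\lceil n^{\delta/2}/\eps\rceil$ as the number of columns of $S$ directly.
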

\begin{proof}
Let $k=\ceil{n^{\delta/2}/\eps}$. If $q< k+n$, then we consider $\F=\F_r$, a field extension of $\F_q$, where $k+n\leq r < (k+n)^2$, otherwise we set $\F=\F_q$. We note that $\F$ can be constructed in time $q\cdot\poly(\log(kq))=n^2\cdot\poly(\log(nq/\eps))$~\cite{shoup1990new}, all arithmetic operations over $\F$ can be performed in time $\poly(\log|\F|)=\poly(\log(nq/\eps))$~\cite{shoup2009computational}, and that $AB=C$ over $\F$ if and only if $AB=C$ over $\F_q$.

Let $S\in\F^{n\times k}$ be an $n^{\delta/2}$-regular matrix, and $\vec{s}\in\F^n$ be a uniformly random column of $S$. Our algorithm reports that $AB=C$ if and only if $AB\vec{s}=C\vec{s}$ and $(AB)^T\vec{s}=C^T\vec{s}$. In the following, we will show that the running time of the algorithm is $n^2\cdot\poly(\log(nq/\eps))$, the number of random bits used is $\ceil{\delta/2 \cdot \log_2(n) + \log_2(1/\eps)}$, and finally we will prove the correctness of the algorithm.

From \cref{cor:cauchy}, one can compute $\vec{s}$ in deterministic time $n\cdot\poly(\log(nq/\eps))$. Each of $AB\vec{s}$ and $(AB)^T\vec{s}$ can be computed by two vector-matrix multiplications in deterministic ${n^2\cdot\poly(\log(nq/\eps))}$ time.

The described algorithm only uses randomness to generate a uniformly random column index $1\leq i\leq k$, which requires $\ceil{\log_2(k)}=\ceil{\delta/2 \cdot \log_2(n) + \log_2(1/\eps)}$ random bits.

If $AB=C$, then for every vector $\vec{s}$, we have $AB\vec{s}=C\vec{s}$ and $(AB)^T\vec{s}=C^T\vec{s}$, and the algorithm correctly concludes that $AB=C$. For the case when $AB\neq C$, we want the algorithm to detect this with probability at least $1-\eps$. To this end, we will show that for every $n^{\delta/2}$-regular matrix $S\in\F^{n\times k}$, either the matrix $ABS$ or the matrix $(AB)^TS$ has at least $1-\eps$ fraction of non-zero columns. 

By \cref{lem:mat-sparse-implies-rowcol-sparse}, if $\norm{AB-C}_0\leq n^\delta$, then $AB-C$ contains a non-zero row or column of sparsity at most $n^{\delta/2}$.
Let us first consider the case where $AB-C$ has a non-zero row $\vec{r}\in\F^n$ of sparsity at most $n^{\delta/2}$. 
We will show that $\vec{r}S$ has fewer than $n^{\delta/2}$ zero coordinates. This will finish the proof, as the success probability of our algorithm is at least the probability of sampling a column where $\vec{r}S$ is non-zero, which is at least $1-n^{\delta/2}/k\geq 1-\eps$.

Assume, towards a contradiction, that $\vec{r}S$ has at least $n^{\delta/2}$ zero coordinates. Let $I\subseteq[n], |I|= n^{\delta/2}$ be an arbitrary set containing (the indices of) all non-zero coordinates of $\vec{r}$, and let $J\subseteq[n], |J|=n^{\delta/2}$ be a set of indices of some $n^{\delta/2}$ zero coordinates of $\vec{r}S$. Then the matrix $S$ restricted to the rows $I$ and columns $J$ is singular (the linear combination of these rows, given by the non-zero vector~$\vec{r}$, is $\vec{0}$). On the other hand, since $S$ is $n^{\delta/2}$-regular, $S$ restricted to $I$ and $J$ is non-singular, which leads to a contradiction. Similarly, in the case when $AB-C$ has a non-zero column $\vec{c}\in\F^n$ of sparsity at most $n^{\delta/2}$, our algorithm will detect that $AB\neq C$ with probability at least $1-\eps$ by checking $(AB)^T\vec{s}=C^T\vec{s}$.
\end{proof}

Below we show how to extend the algorithm of \cref{thm:randomized} to the case of integer matrices.
\begin{figure}[t]
      \centering
     \begin{wbox}
    \begin{enumerate}[(1)]
    \item Let $k := \ceil{\sqrt{t}/\eps}$.
    \item Compute an arbitrary prime $p$ satisfying $k+n \leq p <2(k+n)$.
    \item Compute a $\ceil{\sqrt{t}}$-regular matrix $S = (\vec{s}_1, \ldots, \vec{s}_k) \in \F^{n \times k}$ as in \cref{cor:cauchy}.
    \item Sample a uniformly random column index $i \sim \set{1, \ldots, k}$.
    \item Output YES if $AB\vec{s}_i = C\vec{s}_i$ and $(AB)^T\vec{s}_i = C^T\vec{s}_i$ (where arithmetic is performed over~$\Z$), and output NO otherwise.
\end{enumerate}
    \end{wbox}
    \caption{\textsc{Randomized Algorithm for $\textrm{MMV}_\Z^{t}$.}}
    \label{fig:rand-mmv-alg}
\end{figure}

\begin{corollary}\label{cor:randomized}
Let $0 \leq \delta \leq 2$ be a constant. Then for every 
$\eps>0$, there is a randomized algorithm (given in \cref{fig:rand-mmv-alg}) for $\textrm{MMV}_\Z^{n^\delta}$ for matrices with entries from $[-M,M]$ that runs in $n (n + 1/\eps) \cdot \poly(\log(nM/\eps))$ time, succeeds with probability $1 - \eps$, and uses at most $\ceil{\delta/2 \cdot \log_2(n) + \log_2(1/\eps)}$ bits of randomness. In particular, when $\eps \geq 1/n$, the algorithm runs in $n^2 \cdot \poly(\log n, \log M)$ time.
\end{corollary}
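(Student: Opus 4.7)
The plan is to follow the same template used to lift \cref{thm:rmm-deterministic-formal} to \cref{cor:rmm-deterministic-formal-integers}, substituting the finite-field Cauchy matrix of \cref{thm:randomized} with one defined modulo a small prime and then controlling the reduction from $\Z$-arithmetic down to $\F_p$-arithmetic. The algorithm is the one already spelled out in \cref{fig:rand-mmv-alg}: sample a single column $\vec{s}_i$ of a $\lceil\sqrt{t}\rceil$-regular Cauchy matrix $S \in \F_p^{n \times k}$ (represented by integers in $[0, p)$) for a suitable prime $p$, and check over $\Z$ whether $AB\vec{s}_i = C\vec{s}_i$ and $(AB)^T \vec{s}_i = C^T \vec{s}_i$.

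For the setup and runtime analysis, I would first find a prime $p$ with $n + k \leq p < 2(n+k)$ deterministically in time $\Ot(n + 1/\eps)$ via the Sieve of Eratosthenes (existence by Bertrand's postulate), so the hypothesis $k + n \leq p$ of \cref{cor:cauchy} is satisfied and a single column $\vec{s}_i$ can be produced in time $n \cdot \poly(\log p) = n \cdot \poly(\log(nM/\eps))$. Since the entries of $\vec{s}_i$ lie in $[0, p) \subseteq [0, 2(n + 1/\eps))$ and $A, B, C$ have entries in $[-M, M]$, each of the four products $A(B\vec{s}_i)$, $C\vec{s}_i$, $B^T(A^T\vec{s}_i)$, $C^T\vec{s}_i$ uses $O(n^2)$ integer operations on numbers of magnitude $\poly(n, M, 1/\eps)$, for a total cost of $n^2 \cdot \poly(\log(nM/\eps))$. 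Combined with the prime-finding step, this yields the claimed $n(n+1/\eps) \cdot \poly(\log(nM/\eps))$ bound; the special case $\eps \geq 1/n$ then folds $\log(1/\eps)$ into $\log n$. Sampling $i \in \{1, \ldots, k\}$ uses exactly $\lceil \log_2 k \rceil = \lceil \delta/2 \cdot \log_2 n + \log_2(1/\eps) \rceil$ random bits.

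The main obstacle will be correctness over $\Z$, since the regularity argument of \cref{thm:randomized} lives in $\F_p$. If $AB = C$, the algorithm is always correct, so suppose $AB \neq C$. By \cref{lem:mat-sparse-implies-rowcol-sparse}, $AB - C$ has either a non-zero, $\lceil\sqrt{t}\rceil$-sparse row $\vec{r}$ or column $\vec{c}$ over $\Z$; I treat the row case (the column case is handled symmetrically by the transpose check), where the algorithm errs only if $\vec{r} \vec{s}_i = 0$ over $\Z$. The subtlety is that $\vec{r}$ itself may vanish modulo $p$, so one cannot invoke the $\F_p$ analysis of \cref{thm:randomized} on $\vec{r}$ directly. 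The fix, as in the proof of \cref{cor:rmm-deterministic-formal-integers}, is to let $p^e$ be the largest power of $p$ dividing every entry of $\vec{r}$ and set $\vec{r}' := \vec{r}/p^e \in \Z^n$. Then $\vec{r}'$ has the same support as $\vec{r}$ (so sparsity $\leq \sqrt{t}$), reduces to a non-zero vector modulo $p$, and satisfies $\vec{r} \vec{s}_i = p^e (\vec{r}' \vec{s}_i)$, so $\vec{r} \vec{s}_i = 0$ over $\Z$ implies $\vec{r}' \vec{s}_i \equiv 0 \pmod p$. Applying the regularity argument to $\vec{r}' \bmod p$, if $\vec{r}' S$ over $\F_p$ had at least $\lceil\sqrt{t}\rceil$ zero coordinates, then choosing $I \supseteq \mathrm{supp}(\vec{r}')$ of size $\lceil\sqrt{t}\rceil$ and $J$ indexing $\lceil\sqrt{t}\rceil$ such zero coordinates would exhibit a singular $\lceil\sqrt{t}\rceil \times \lceil\sqrt{t}\rceil$ submatrix of $S$, contradicting $\lceil\sqrt{t}\rceil$-regularity of the Cauchy matrix. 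Hence $\Pr_i[\vec{r} \vec{s}_i = 0 \text{ over } \Z] \leq \Pr_i[\vec{r}' \vec{s}_i \equiv 0 \pmod p] \leq \sqrt{t}/k \leq \eps$, giving success probability at least $1 - \eps$ as required.
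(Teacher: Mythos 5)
Your proposal is correct and follows the same overall template as the paper's proof: sample a random column of a Cauchy matrix over $\F_p$ for a prime $p$ of size $\Theta(k+n)$, test $AB\vec{s}_i = C\vec{s}_i$ and $(AB)^T\vec{s}_i = C^T\vec{s}_i$ over $\Z$, and use $k$-regularity of $S$ to bound the failure probability by $\sqrt{t}/k \leq \eps$. The one place you diverge is the lift of the regularity argument from $\F_p$ to $\Z$: you import the $p^e$-factoring trick from the deterministic \cref{cor:rmm-deterministic-formal-integers} (extract the largest power of $p$ from $\vec{r}$, then reduce $\vec{r}'$ modulo $p$ and reuse the $\F_p$ argument), whereas the paper's proof stays entirely over $\Z$: it notes that $\vec{r}_I S[I,J] = 0$ over $\Z$ with $\vec{r}_I \neq 0$ forces $\det S[I,J] = 0$, while $k$-regularity over $\F_p$ gives $\det S[I,J] \not\equiv 0 \pmod p$, hence $\det S[I,J] \neq 0$ over $\Z$ --- a direct contradiction with no factoring needed. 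Both derivations are valid; the paper's is a hair shorter, yours reuses the exact mechanism already on hand from the deterministic case. One small slip: you state the Sieve of Eratosthenes step costs $\Ot(n + 1/\eps)$, but since $k+n$ can be as large as $\Theta(n/\eps)$ (for $\delta$ near $2$), the sieve actually costs $\Ot(n/\eps)$; this still fits within the claimed $n(n + 1/\eps)\cdot\poly(\log(nM/\eps))$ budget, so the final bound is unaffected.
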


\begin{proof}
Let $k=\ceil{n^{\delta/2}/\eps}$, and $p$ be an arbitrary prime~$p$ satisfying $k+n\leq p< 2(k+n)$. Let $S\in\F_p^{n\times k}$ be an $n^{\delta/2}$-regular matrix, and $\vec{s}\in\F_p^n$ be a uniformly random column of $S$. Our algorithm reports that $AB=C$ if and only if $AB\vec{s}=C\vec{s}$ and $(AB)^T\vec{s}=C^T\vec{s}$, where the multiplication is over $\Z$. 

Using the Sieve of Eratosthenes (see, e.g., \cite[Theorem~18.10~(ii)]{modernalgebra}), one can find a prime~$p$ satisfying $k+n\leq p<2(k+n)$ in deterministic time $\Ot(k+n)\leq (n/\eps) \cdot \poly(\log(n/\eps))$. From \cref{cor:cauchy}, $s$ can be computed in deterministic time $n\cdot \poly(\log(n/\eps))$. Finally $AB\vec{s}$ and $(AB)^T\vec{s}$ can be computed in time $n^2\cdot\poly(\log(nM/\eps))$. The algorithm uses $\ceil{\log_2(k)}=\ceil{\delta/2 \cdot \log_2(n) + \log_2(1/\eps)}$ random bits to generate a uniformly random column index $1\leq i\leq k$.

If $AB=C$, then for every vector $\vec{s}$, we have $AB\vec{s}=C\vec{s}$ and $(AB)^T\vec{s}=C^T\vec{s}$, and the algorithm correctly concludes that $AB=C$. Assume that $AB\neq C$. Since $\norm{AB-C}_0\leq n^\delta$, $AB-C$ contains a non-zero row or column of sparsity at most $n^{\delta/2}$. We will assume that $AB-C$ has a non-zero row $\vec{r}\in\Z^n$ of sparsity at most $n^{\delta/2}$ (the other case is analogous to this one). 
Assume, towards a contradiction, that $\vec{r}S$ has at least $n^{\delta/2}$ zero coordinates. Let $I\subseteq[n], |I|= n^{\delta/2}$ be an arbitrary set containing (the indices of) all non-zero coordinates of $\vec{r}$, and let $J\subseteq[n], |J|=n^{\delta/2}$ be a set of indices of some $n^{\delta/2}$ zero coordinates of $\vec{r}S$. Then the matrix $S$ restricted to the rows $I$ and columns $J$ is singular over the integers. On the other hand, since $S$ is $n^{\delta/2}$-regular, $S$ restricted to $I$ and $J$ is non-singular over $\F_p$, and, thus, over $\Z$. This leads to a contradiction. Therefore, $\vec{r}S$ has at most $n^{\delta/2}$ zero coordinates, and $ABS$ contains at least $k-n^{\delta/2}\geq (1-\eps)k$ non-zero columns. 
\end{proof}

\subsection{A Barrier for Linear Algebraic Approaches to MMV} \label{sec:linealgbar}

In this section, we give a barrier for algorithms for the AllZeroes problem that are based on ``zero tests.'' On an input instance $A, B$, such zero tests compute the matrix product $L (A B) R$ for fixed $L \in \F^{n^{\alpha} \times n}$ and $R \in \F^{n \times n^{\beta}}$ with $\alpha, \beta \in [0, 1]$ and report ``$AB = 0$'' if and only if $L (A B) R = 0$.
In fact, we consider algorithms based on $k$ such tests, i.e., algorithms that report ``$AB = 0$'' if and only if $L_i (A B) R_i = 0$ for fixed $L_i \in \F^{n^{\alpha_i} \times n}$ and $R_i \in \F^{n \times n^{\beta_i}}$ with $\alpha_i, \beta_i \in [0, 1]$ for $i = 1, \ldots, k$. This approach captures a large class of algorithms for MMV, including our deterministic algorithm presented in \cref{fig:det-mmv-alg}.

In \cref{thm:lin-alg-barrier}, we show a barrier to getting a fast algorithm using this approach to MMV.
The idea behind the barrier is as follows. 
Treating the $n^2$ entries $C_{i, j}$ of $C = AB$ as variables, the $i$th zero test $L_i AB R_i$ for fixed $L_i \in \F^{n^{\alpha_i} \times n}, R_i \in F^{n \times n^{\beta_i}}$ corresponds to a system of $n^{\alpha_i + \beta_i}$ homogeneous linear equations. So, for $AB = C = 0$ to be the unique solution to this system of equations, we must have $\sum_{i=1}^k n^{\alpha_i + \beta_i} \geq n^2$.

\begin{theorem}\label{thm:lin-alg-barrier}
    Let $\F$ be a field, $k$ be an integer, and for every $1\leq i\leq k$, $L_i\in\F^{n^{\alpha_i} \times n}, R_i\in\F^{n \times n^{\beta_i}}$ be matrices such that $\rank(L_i)=n^{\alpha_i}$ and $\rank(R_i)=n^{\beta_i}$ for $0\leq\alpha_i,\beta_i\leq1$. Assume that for all $A,B\in\F^{n\times n}$, $AB=0$ if and only if $L_i AB R_i = 0$ for all $1\leq i\leq k$. Then 
    \begin{align*}
    &\sum_{i=1}^k n^{\alpha_i + \beta_i}\geq n^2 \,,\text{ and}\\
    &\sum_{i=1}^k n^{\omega(1, 1, \min(\alpha_i, \beta_i))} \geq \sum_{i=1}^k n^{\omega(\alpha_i,1,\beta_i)}\geq n^{\omega}\,.
    \end{align*}
\end{theorem}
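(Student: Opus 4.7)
The key observation is that the hypothesis on the $L_i, R_i$ is really a statement about the kernel of a system of linear equations. First, since every matrix $C \in \F^{n \times n}$ can be written as a product $AB$ of $n \times n$ matrices (take $A = I_n$, $B = C$), the assumption $AB = 0 \iff \forall i\, L_i AB R_i = 0$ is equivalent to
\begin{equation*}
\forall C \in \F^{n \times n} \colon \quad C = 0 \iff L_i C R_i = 0 \text{ for all } 1 \leq i \leq k.
\end{equation*}
Viewing the $n^2$ entries $C_{i,j}$ as variables, each constraint $L_i C R_i = 0$ imposes $n^{\alpha_i} \cdot n^{\beta_i} = n^{\alpha_i + \beta_i}$ homogeneous linear equations, one per entry of the product. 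The combined system therefore consists of $N := \sum_{i=1}^{k} n^{\alpha_i + \beta_i}$ linear equations in $n^2$ unknowns.

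For the unique solution to be $C = 0$, the coefficient matrix of this system must have rank equal to $n^2$. Since the rank is trivially at most the number of equations, we obtain
\begin{equation*}
\sum_{i=1}^k n^{\alpha_i + \beta_i} \;\geq\; n^2,
\end{equation*}
which is the first stated inequality.

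For the inner part of the second inequality, fix $i$ and assume without loss of generality that $\alpha_i \le \beta_i$, so $\min(\alpha_i, \beta_i) = \alpha_i$. By \cref{thm:rectmult}, $\omega(1,1,\alpha_i) = \omega(\alpha_i,1,1)$, and by monotonicity of $\omega(\alpha_i, 1, \cdot)$ (a rectangular multiplication of a larger matrix takes at least as many operations), $\omega(\alpha_i,1,1) \geq \omega(\alpha_i,1,\beta_i)$. Hence $n^{\omega(1,1,\min(\alpha_i,\beta_i))} \geq n^{\omega(\alpha_i,1,\beta_i)}$ for every $i$, and the inequality survives summation.

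For the outer part, I apply \cref{lem:omega_a_1_b}, which gives $\omega(\alpha_i, 1, \beta_i) \geq \omega - 2 + \alpha_i + \beta_i$, so that $n^{\omega(\alpha_i,1,\beta_i)} \geq n^{\omega - 2} \cdot n^{\alpha_i + \beta_i}$. Summing over $i$ and then applying the first inequality proved above yields
\begin{equation*}
\sum_{i=1}^{k} n^{\omega(\alpha_i, 1, \beta_i)} \;\geq\; n^{\omega - 2} \sum_{i=1}^{k} n^{\alpha_i + \beta_i} \;\geq\; n^{\omega - 2} \cdot n^{2} \;=\; n^{\omega},
\end{equation*}
as required. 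I do not expect a substantial obstacle: the whole argument is essentially a dimension count combined with the standard lower bound \cref{lem:omega_a_1_b} on rectangular exponents; the only subtlety is the WLOG reduction $\alpha_i \le \beta_i$ in handling $\min(\alpha_i, \beta_i)$, which requires only \cref{thm:rectmult} and monotonicity.
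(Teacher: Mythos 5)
Your proof is correct and follows essentially the same route as the paper: counting the equations imposed by the constraints $L_i C R_i = 0$ to get the first inequality, then applying \cref{lem:omega_a_1_b} and chaining through the first inequality for the second. You supply two small details the paper leaves implicit---namely that the hypothesis applies to every $C$ (since $C = I_n \cdot C$) and the monotonicity argument (via \cref{thm:rectmult}) justifying $n^{\omega(1,1,\min(\alpha_i,\beta_i))} \geq n^{\omega(\alpha_i,1,\beta_i)}$---which are welcome clarifications but not a different approach.
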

\begin{proof}
Let $C = AB$, and let us view each of the $n^2$ entries $C_{i,j}$ of~$C$ as a variable. Then each constraint $L_i C R_i=0$ induces a homogeneous system of $n^{\alpha_i + \beta_i}$ linear equations over these variables, and the constraints together induce a system of $\sum_{i=1}^k n^{\alpha_i + \beta_i}$ such equations.
In order for a homogeneous system of linear equations to have $C = 0$ (the all-zeroes solution) as its only solution, the number of equations needs to be at least $n^2$. Hence, we have that $\sum_{i=1}^k n^{\alpha_i + \beta_i}\geq n^2$.
Furthermore, by \cref{lem:omega_a_1_b}, $\omega(\alpha_i,1,\beta_i)\geq\omega-2+\alpha_i+\beta_i$. This implies that
\[
\sum_{i=1}^k n^{\omega(1, 1, \min(\alpha_i, \beta_i))}
\geq \sum_{i=1}^k n^{\omega(\alpha_i,1,\beta_i)}
\geq  \sum_{i=1}^k n^{\omega-2+\alpha_i+\beta_i} 
\geq n^{\omega-2} \cdot \sum_{i=1}^k n^{\alpha_i+\beta_i}
\geq n^\omega\,,
\]
which finishes the proof.
\end{proof}

From \cref{thm:lin-alg-barrier}, we get the following corollary, which says that any algorithm for MMV using ``zero tests'' and computing the matrix each of the matrix products $L_i AB R_i$ independently in $\Omega(n^{\omega(1, 1, \min(\alpha_i, \beta_i))})$ time runs in $\Omega(n^{\omega})$ time. We note that $\Omega(n^{\omega(1, 1, \min(\alpha_i, \beta_i))})$ lower bounds the minimum worst-case cost of multiplying (1) a $n^{\alpha_i} \times n$ and an $n \times n$ matrix and (2) an $n \times n$ matrix and an $n \times n^{\beta_i}$ matrix.
Any parenthesization of $L_i AB R_i$ (e.g., $L_i (A (B R_i))$) requires computing a matrix product either of form (1) or (2).

\begin{corollary} \label{cor:lin-alg-algo-barrier}
Let $\alg$ be an algorithm for $\text{MMV}_{\F}$ that behaves in the following way.
On input $A, B \in \F^{n \times n}$, $\alg$ computes matrices $L_1, R_1, \ldots, L_k, R_k$ with $L_i \in \F^{n^{\alpha_i} \times n}, R_i \in \F^{n \times n^{\beta_i}}$ for $0 \leq \alpha_i, \beta_i \leq 1$ such that $AB=0$ if and only if $L_i AB R_i = 0$ for all $1\leq i\leq k$. It then computes each product $L_i AB R_i$ independently using $\Omega(n^{\omega(1, 1, \min(\alpha_i, \beta_i))}) \geq \Omega(n^{\omega(\alpha_i, 1, \beta_i)})$ time.
Then $\alg$ runs in $\Omega(n^{\omega})$ time.
\end{corollary}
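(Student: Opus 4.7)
The corollary follows almost immediately from Theorem~\ref{thm:lin-alg-barrier}. The plan is to verify that the matrices $L_1, R_1, \ldots, L_k, R_k$ produced by $\alg$ meet the hypotheses of the theorem, and then to combine the theorem's conclusion with the per-product runtime lower bound assumed in the statement.

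For the first step, the defining property of $\alg$---namely that $AB = 0$ if and only if $L_i AB R_i = 0$ for all $i$---is precisely the key hypothesis of Theorem~\ref{thm:lin-alg-barrier}. The only minor gap is that the theorem additionally requires $L_i$ and $R_i$ to have full row and column rank, respectively, which is not a priori guaranteed by the definition of $\alg$. I would close this gap by a standard WLOG reduction: if $\rank(L_i) = r_i < n^{\alpha_i}$, replace $L_i$ by an $r_i \times n$ matrix $L_i'$ with the same row-span (so that $L_i AB R_i = 0 \iff L_i' AB R_i = 0$), and symmetrically for $R_i$. Writing $r_i = n^{\alpha_i'}$ with $\alpha_i' \leq \alpha_i$ (and analogously $\beta_i' \leq \beta_i$), the reduced matrices satisfy the theorem's hypotheses, so $\sum_{i=1}^k n^{\omega(1, 1, \min(\alpha_i', \beta_i'))} \geq n^{\omega}$. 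Since $\omega(1, 1, \cdot)$ is non-decreasing (which follows from the fact that one can always pad a narrower multiplicand with zero columns), the same lower bound holds with the original exponents $\alpha_i, \beta_i$ on the left-hand side.

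For the second step, I invoke the hypothesis that $\alg$ computes each $L_i AB R_i$ \emph{independently} in time $\Omega(n^{\omega(1, 1, \min(\alpha_i, \beta_i))})$. Since these computations do not share work, the individual lower bounds sum, so the total runtime of $\alg$ is at least
\[
\sum_{i=1}^k \Omega\!\left(n^{\omega(1, 1, \min(\alpha_i, \beta_i))}\right) \;\geq\; \Omega(n^{\omega}),
\]
using the bound established in the previous paragraph. This gives the claimed $\Omega(n^{\omega})$ lower bound. The only nontrivial step is the WLOG rank reduction; once that is in hand, the result is a direct combination of Theorem~\ref{thm:lin-alg-barrier} with summation of independent per-product costs, so I do not anticipate any serious obstacle.
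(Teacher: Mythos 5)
Your proof is correct and follows the same route the paper intends: the corollary is stated in the paper without an explicit proof as an immediate consequence of Theorem~\ref{thm:lin-alg-barrier} plus the hypothesis that the $k$ products are computed independently, so the total time is the sum of the individual lower bounds. The one place you go beyond what the paper spells out is the WLOG rank-reduction step, and you are right to flag it: the corollary statement, unlike Theorem~\ref{thm:lin-alg-barrier}, does not assume $L_i$ and $R_i$ have full rank, so strictly speaking one must replace each $L_i$ by a full-row-rank matrix with the same row span (and symmetrically for $R_i$) before invoking the theorem, and then use monotonicity of $\omega(1,1,\cdot)$ to pass back to the original exponents. This is a correct and clean way to close a small gap the paper leaves implicit.
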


We note that \cref{cor:lin-alg-algo-barrier} does \emph{not} rule out more clever ways of computing the matrix products $L_i AB R_i$, such as exploiting the structure of $L_i, R_i$ or computing multiple products simultaneously. However, a stronger version of \cref{cor:lin-alg-algo-barrier} with a more subtle proof does seem to rule out some of these algorithm variants too.

\section{On the non-SETH-hardness of Matrix Multiplication}

In this section, we show a barrier for proving a lower bound for Matrix Multiplication (and, thus, for MMV) under SETH. More specifically, we show that such a result would imply either super-linear circuit lower bounds, or an explicit construction of rigid matrices.

\subsection{A Nondeterministic Algorithm for Multiplying Non-Rigid Matrices}

We will use the following algorithm for multiplying sparse matrices.

\begin{theorem}[{\cite[Theorem~3.1]{yuster2005fast}}]\label{thm:fast_sparse}
    Let $\F$ be a field, $0\leq s\leq 2$, and $A, B \in \F^{n \times n}$ be $n^s$-sparse matrices. If $\omega>2$, then there is a deterministic algorithm that computes $AB$ using 
    \[
    n^{(2s(\omega-2)+2-\omega\dualMMexp)/(\omega-1-\dualMMexp)+o(1)}
    \]
    arithmetic operations over $\F$.
\end{theorem}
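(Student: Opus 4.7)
The plan is to follow the Yuster--Zwick approach of splitting the product $AB = \sum_{k=1}^n A_{*,k} B_{k,*}$ according to the density of individual columns of $A$ and rows of $B$. Fix a threshold $\tau = n^t$, to be chosen later, and call an index $k \in [n]$ \emph{heavy} if $\norm{A_{*,k}}_0 > \tau$ or $\norm{B_{k,*}}_0 > \tau$; otherwise call $k$ \emph{light}. Since $\norm{A}_0, \norm{B}_0 \leq n^s$, the number of heavy indices is at most $h \leq 2n^{s-t}$.

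For the heavy contribution, collect the heavy columns of $A$ into a matrix $A_H \in \F^{n \times h}$ and the corresponding heavy rows of $B$ into $B_H \in \F^{h \times n}$, and observe that $\sum_{k \text{ heavy}} A_{*,k} B_{k,*} = A_H B_H$. Compute this product using fast rectangular matrix multiplication in $n^{\omega(1, s-t, 1) + o(1)} = n^{\omega(1, 1, s-t) + o(1)}$ operations, where we have invoked \cref{thm:rectmult}. For each light $k$, compute the outer product $A_{*,k} B_{k,*}$ explicitly and accumulate it into the running total; this costs at most $\tau \cdot \norm{B_{k,*}}_0$ operations by sparsity of $A_{*,k}$. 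Summing over all light $k$ and using $\sum_k \norm{B_{k,*}}_0 \leq n^s$ gives a total light cost of at most $n^{s+t}$.

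The total running time is $n^{\omega(1, 1, s-t) + o(1)} + n^{s+t}$, which we optimize by choosing $t$ to balance the two summands. In the relevant regime $s - t \in [\omega^\perp, 1]$, \cref{thm:rectMM-ub} yields $\omega(1, 1, s-t) \leq 2 + (\omega - 2)(s - t - \omega^\perp)/(1 - \omega^\perp)$. Setting this upper bound equal to $s + t$ and solving (the system is well-defined since $\omega > 2$ implies $\omega - 1 - \omega^\perp > 0$, using $\omega^\perp < 1$) gives, after routine algebra,
\[
s + t \;=\; \frac{2s(\omega - 2) + 2 - \omega\omega^\perp}{\omega - 1 - \omega^\perp},
\]
which matches the claimed exponent.

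The main obstacle is the boundary-case analysis: one must verify that the optimizing $t$ lies in the admissible range $[0, s]$ and that the corresponding $s - t$ actually lies in $[\omega^\perp, 1]$ so that the linear upper bound on $\omega(1, 1, \cdot)$ applies. For values of $s$ small enough that the optimum falls outside this regime (so that $\omega(1, 1, s-t) = 2$ on the flat portion from \cref{thm:rectMM-ub}), the heavy product costs only $n^{2 + o(1)}$ and one picks $t$ to balance against this; a direct check confirms that the resulting exponent still matches the advertised bound, and that the overall running time is never worse than what naive dense multiplication achieves. Everything else is routine.
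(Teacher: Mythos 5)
The paper states this theorem as a citation to Yuster and Zwick (their Theorem~3.1) and does not prove it, so there is no in-paper proof to compare against; you have reconstructed the external argument.

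Your heavy/light decomposition is the right approach, and the balance algebra in the main regime is correct: setting $s+t = 2 + (\omega-2)\frac{s-t-\omega^\perp}{1-\omega^\perp}$ and solving does yield $s+t = \frac{2s(\omega-2)+2-\omega\omega^\perp}{\omega-1-\omega^\perp}$, and the bound $\omega(1,s-t,1)=\omega(1,1,s-t)$ from \cref{thm:rectmult} is used correctly. However, the boundary-case discussion has a genuine gap. You flag that one must check $s - t^* \in [\omega^\perp, 1]$ but then only address the low end. On the high end ($s - t^* > 1$, which happens for $s > (1+\omega)/2$), the heavy block has $\min(2n^{s-t},n) = n$ columns so the heavy cost is $n^\omega$; one must then observe that in this regime $s+t^* > \omega$ (which follows from plugging $s-t^*>1$ into the balance identity), so the total is still $n^{s+t^*}$, matching the advertised exponent. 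This works out, but you assert rather than verify it. On the low end, your claim that ``a direct check confirms that the resulting exponent still matches the advertised bound'' is actually \emph{false}: when $s - t^* < \omega^\perp$ the heavy product costs $n^{2+o(1)}$, yet the advertised exponent $E(s) = \frac{2s(\omega-2)+2-\omega\omega^\perp}{\omega-1-\omega^\perp}$ satisfies $E(s) < 2$ precisely when $s < 1 + \omega^\perp/2$ (one checks $E(1+\omega^\perp/2)=2$ and $E$ is increasing in $s$). For such $s$ the claimed bound is strictly below $n^2$ and is unachievable even for reading the input or writing a possibly dense output; the original Yuster--Zwick theorem includes an extra $+\,n^{2+o(1)}$ term that the paper's restatement drops. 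So the theorem as quoted is only correct for $s \ge 1 + \omega^\perp/2$, which happens to be all the paper needs (its bound \cref{eq:sbound} is used only for $\gamma \ge 2$, which corresponds to exactly this range), but your proof attempt should either restrict $s$ accordingly or add the $n^{2+o(1)}$ term rather than assert the unmodified exponent holds.
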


For every $\gamma\geq2$, we define the largest matrix sparsity and the largest dimension of rectangular matrices, such that matrix product can be computed in time $n^{\gamma}$.
\begin{definition}
   For every $\gamma \geq 2$, let 
   \begin{align*}
   r(\gamma)&:=\sup\set{r > 0 : \omega(1, 1, r) \leq \gamma} \,,\\
   s(\gamma)&:=\sup\set{s>0 : \text{product of $n^s$-sparse matrices can be computed in $n^\gamma$ field operations}}\,.
   \end{align*}
\end{definition}

Assuming $\omega>2$, for all $\gamma>2$, from \cref{thm:rectMM-ub} and \cref{thm:fast_sparse} we have that
\begin{align}
r(\gamma) &\geq \frac{(\gamma - 2)(1 - \dualMMexp)}{\omega - 2}+ \dualMMexp
\,,\label{eq:rbound}\\
s(\gamma) &\geq \frac{\gamma(\omega-1-\dualMMexp)+\omega\dualMMexp-2}{2(\omega-2)}
\,.\label{eq:sbound}
\end{align}

We now present a simple \emph{non-deterministic} algorithm that efficiently multiplies non-rigid matrices.

\begin{lemma}\label{lem:nondet}
Let $\F_q$ be a finite field, $\gamma\geq 2$ be a constant, and let $A,B\in\F_q^{n\times n}$ be two matrices satisfying 
\begin{align*}
\cR_{n^{r(\gamma)}}^{\F_q}(A) \leq n^{s(\gamma)} \,, \quad
\cR_{n^{r(\gamma)}}^{\F_q}(B) \leq n^{s(\gamma)}
\,.
\end{align*}
Then for every $\alpha>0$, there is a non-deterministic algorithm that computes $AB$ in time $n^{\gamma+\alpha}\cdot\poly(\log{q})$.     
\end{lemma}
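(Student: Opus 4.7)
The plan is to nondeterministically guess a low-rank-plus-sparse decomposition for each of $A$ and $B$, verify it, and then compute $AB$ from the pieces. By the non-rigidity assumption, there exist matrices $L_A, L_B \in \F_q^{n \times n}$ of rank at most $n^{r(\gamma)}$ and matrices $S_A, S_B \in \F_q^{n \times n}$ of sparsity at most $n^{s(\gamma)}$ with $A = L_A + S_A$ and $B = L_B + S_B$, and each $L_X$ admits a rank factorization $L_X = U_X V_X$ with $U_X \in \F_q^{n \times n^{r(\gamma)}}$ and $V_X \in \F_q^{n^{r(\gamma)} \times n}$. The algorithm first guesses $U_A, V_A, S_A, U_B, V_B, S_B$ of these dimensions (with $S_A, S_B$ restricted to at most $n^{s(\gamma)}$ nonzero entries), and then verifies $A = U_A V_A + S_A$ and $B = U_B V_B + S_B$ by forming $U_X V_X$ explicitly and comparing entrywise. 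Each such verification costs $O(n^{\omega(1, r(\gamma), 1) + \alpha/4})$ field operations, which by \cref{thm:rectmult} equals $O(n^{\omega(1, 1, r(\gamma)) + \alpha/4})$ and is at most $O(n^{\gamma + \alpha/4})$ by the definition of $r(\gamma)$.

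Once the decomposition is verified, I would expand
\[
AB = U_A V_A U_B V_B + U_A V_A S_B + S_A U_B V_B + S_A S_B
\]
and compute the four terms separately. For the low-rank--low-rank term, parenthesizing as $U_A (V_A U_B) V_B$ reduces the problem to three rectangular products, each of which has at least one dimension equal to $n^{r(\gamma)}$; after padding and appealing to \cref{thm:rectmult}, each can be done in $O(n^{\omega(1, 1, r(\gamma)) + \alpha/8})$ operations. For the cross terms I would parenthesize $U_A V_A S_B$ as $U_A (V_A S_B)$ and $S_A U_B V_B$ as $(S_A U_B) V_B$; here I simply treat $S_A, S_B$ as arbitrary $n \times n$ matrices and observe that each of the two inner rectangular products again has a dimension equal to $n^{r(\gamma)}$, so by \cref{thm:rectmult} the cost is once more $O(n^{\omega(1, 1, r(\gamma)) + \alpha/8})$. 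The sparse--sparse term $S_A S_B$ is handled by \cref{thm:fast_sparse}, which together with the definition of $s(\gamma)$ gives a deterministic algorithm running in $O(n^{\gamma + \alpha/8})$ field operations. Summing the four resulting $n \times n$ matrices contributes an additional $O(n^2)$, and each field operation over $\F_q$ costs $\poly(\log q)$ bit-level time, yielding total nondeterministic running time $O(n^{\gamma + \alpha} \cdot \poly(\log q))$, as desired.

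The subtlety I expect to take the most care with is the handling of the cross terms $L_A S_B$ and $S_A L_B$. A naive approach that iterates over the $n^{s(\gamma)}$ nonzeros of $S_B$ and scatters scaled columns of $L_A$ costs $\Theta(n^{1 + s(\gamma)})$, which can exceed $n^\gamma$ (for instance, when $\gamma = \omega$ one has $s(\omega) = 2$ and $1 + s(\omega) > \omega$). The key observation is that, once the rank factorization of $L_A$ is in hand, one can \emph{ignore} the sparsity of $S_B$ and compute $V_A \cdot S_B$ via rectangular matrix multiplication with small dimension $n^{r(\gamma)}$, which by definition of $r(\gamma)$ costs at most $n^\gamma$. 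Thus the sparsity threshold $s(\gamma)$ is used only to control the purely sparse term $S_A S_B$ through \cref{thm:fast_sparse}, while the low-rank threshold $r(\gamma)$ handles the other three terms uniformly; the balance works out precisely because $r(\gamma)$ and $s(\gamma)$ are defined as the largest values admitting $n^\gamma$-time algorithms in their respective regimes.
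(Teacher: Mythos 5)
Your proof is correct and follows essentially the same approach as the paper: nondeterministically guess a low-rank-plus-sparse decomposition of each matrix together with a rank factorization of the low-rank part, verify the guess via one rectangular multiplication per matrix, expand $AB$ into four cross terms, and bound the three terms involving a low-rank factor by rectangular matrix multiplication while handling the sparse--sparse term with the fast sparse multiplication algorithm (\cref{thm:fast_sparse}). Your closing remark about why the cross terms must be handled via rectangular multiplication rather than by exploiting sparsity is a useful observation, but it reflects the same computation the paper carries out implicitly rather than a different route.
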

\begin{proof}
    First, we non-deterministically guess decompositions of the non-rigid matrices $A$ and $B$ into the sum of a low-rank matrix and a sparse matrix, and then guess a rank factorization of the two low-rank matrices. More specifically, we non-deterministically guess a tuple $(L_A,R_A,S_A,L_B,R_B,S_B)$ such that 
    \begin{itemize}
    \item $A=L_A\cdot R_A+S_A$ and $B=L_B\cdot R_B+S_B$;
    \item $L_A,L_B\in\F_q^{n\times n^{r(\gamma)}}$, and $R_A,R_B\in\F_q^{n^{r(\gamma)}\times n}$;
    \item $S_A,S_B\in\F_q^{n\times n}$, $S_A$ and $S_B$ are $n^{s(\gamma)}$-sparse.
    \end{itemize}
    By the definition of rigidity (\cref{def:rigidity}), such decompositions exist for all matrices $A,B$ satisfying the premise of the lemma. To verify the decomposition, we need to
    \begin{itemize}
    \item compute $L_A\cdot R_A$ and $L_B \cdot R_B$ in deterministic $n^{\gamma+\alpha}\cdot\poly(\log{q})$ time;
    \item verify that $S_A$ and $S_B$ are $n^{s(\gamma)}$-sparse in deterministic $n^2\cdot\poly(\log{q})$ time;
    \item check that $A=L_A\cdot R_A+S_A$ and $B=L_B\cdot R_B+S_B$ in deterministic $n^2\cdot\poly(\log{q})$ time.
    \end{itemize}

    Now we will present a deterministic algorithm that runs in time $n^{\gamma+\alpha}\cdot\poly(\log{q})$ and computes the product 
    \[
    AB=(L_A\cdot R_A+S_A)(L_B\cdot R_B+S_B)=L_A\cdot R_A \cdot L_B\cdot R_B + L_A \cdot R_A \cdot S_B + S_A \cdot L_B \cdot R_B + S_A \cdot S_B
    \,.
    \]
    Recall from \cref{thm:rectmult} that $\omega(r(\gamma),1,1) = \omega(1,r(\gamma), 1) = \omega(1,1,r(\gamma))$. Then the first three terms in the sum above can be computed using rectangular matrix multiplication in time 
    \[
    n^{\omega(1,1,r(\gamma))+\alpha}\cdot\poly(\log{q}) \leq n^{\gamma+\alpha} \cdot\poly(\log{q})\,.
    \]
    Finally, since $S_A$ and $S_B$ are $n^{s(\gamma)}$-sparse, the term $S_A \cdot S_B$ can be computed in time $n^{\gamma+\alpha}\cdot\poly(\log{q})$.
    This finishes the proof of the lemma.
\end{proof}

\subsection{The Main SETH-Hardness Barrier Result}

We are now ready to present the main result of this section: a barrier to proving SETH-hardness of Matrix Multiplication (and hence also of Matrix Multiplication Verification). Below, by $\QP=\DTIME[n^{\poly(\log n)}]$ we denote the class of deterministic algorithms running in quasi-polynomial time.

\begin{theorem} \label{thm:non-seth-hardness}
Let $q$ be a prime power. If Matrix Multiplication over $\F_q$ is $n^\gamma\cdot\poly(\log{q})$-SETH-hard for a constant $\gamma>2$, then one of the following holds:
        \begin{itemize}
            \item
            NSETH is false. %
            \item There is a $\QP^{\NP}$ algorithm ${\cal A}$ such that for infinitely many values of $N$, ${ \cal A}(1^N)$ outputs a matrix $A\in\F_q^{M\times M}$ satisfying 
            \[
            \cR_{M^{r(\gamma')}}^{\F_q}(A)\geq M^{s(\gamma')}
            \] 
            for any $\gamma'<\gamma$, where $M=N^{\Theta(1)}$.
            Specifically, the running time of the algorithm $\cal{A}$ on input $1^N$ is $N^{O(\log\log{N})}$.
        \end{itemize}
\end{theorem}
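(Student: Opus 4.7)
The plan is a case analysis on the hypothesized SETH-hardness reductions from $k$-SAT to MM. Fix $\gamma' < \gamma$ and choose $\eps>0$ together with some $\alpha>0$ so that $\gamma'+\alpha \leq \gamma(1-\eps)$; let $\delta = \delta(\eps)$ be supplied by Definition~\ref{def:seth-hardness}. For every $k$, the hypothesized reduction $\mathcal{R}_k$ on an $n$-variable $k$-SAT formula runs in deterministic time $d\cdot 2^{n(1-\delta)}$, queries MM on instances of dimensions $n_1,\ldots,n_t$ satisfying $\sum_i n_i^{\gamma(1-\eps)} \leq d\cdot 2^{n(1-\delta)}$, and then outputs a verdict from the answers. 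Applying the Sparsification Lemma (Theorem~\ref{sparsification_lemma}), we may restrict attention to sparse $k$-CNFs, of which only $2^{O(n \log n)}$ exist on $n$ variables.

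Define the algorithm $\mathcal{A}$ on input $1^N$ as follows: set $n = \Theta(\log N)$, enumerate all sparse $k$-CNFs on $n$ variables, run $\mathcal{R}_k$ on each, and for every produced MM instance $(A_i,B_i)$ whose dimension $M := n_i$ is at least $N^{\Omega(1)}$ (so that $M = N^{\Theta(1)}$), query the $\NP$ oracle to decide whether $\cR^{\F_q}_{M^{r(\gamma')}}(A_i) \geq M^{s(\gamma')}$. Non-rigidity is in $\NP$: guess a decomposition $A_i = L + S$ with $\rank(L) \leq M^{r(\gamma')}$ (certified by a rank factorization $L = L_1 L_2$) and $\norm{S}_0 < M^{s(\gamma')}$, and verify in polynomial time; thus rigidity is $\NP$-oracle decidable by complementation. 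The total runtime is $2^{O(n \log n)} \cdot \poly(N) = N^{O(\log \log N)}$, placing $\mathcal{A}$ in $\QP^{\NP}$. If $\mathcal{A}$ returns a rigid matrix for infinitely many $N$, we obtain the second conclusion of the theorem.

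Otherwise, for all but finitely many $n$, every sparse $k$-CNF on $n$ variables causes $\mathcal{R}_k$ to produce only small or $(r(\gamma'),s(\gamma'))$-non-rigid matrices. We then build a nondeterministic algorithm $\mathcal{B}$ for $k$-TAUT: on input $\varphi$, apply the Sparsification Lemma, run $\mathcal{R}_k$ deterministically on each sparse subformula, and handle each MM query with cubic multiplication when $n_i$ is below the cutoff and with Lemma~\ref{lem:nondet} at parameter $\gamma'$ otherwise, costing nondeterministic time $n_i^{\gamma'+\alpha} \leq n_i^{\gamma(1-\eps)}$. Finally run $\mathcal{R}_k$'s post-processing deterministically and accept iff it declares $\varphi$ UNSAT. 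Because the nondeterminism is confined to guessing rank-plus-sparse decompositions that are deterministically verified, every accepting path feeds $\mathcal{R}_k$ the correct MM outputs, making $\mathcal{B}$ sound. Summing per-query costs yields nondeterministic runtime $O(2^{n(1-\delta)})$; since the argument applies for every $k$, NSETH is refuted.

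The main obstacle is parameter juggling: we must ensure $\gamma'+\alpha \leq \gamma(1-\eps)$ so Lemma~\ref{lem:nondet}'s per-query budget fits inside the reduction's, while letting $\gamma'$ approach $\gamma$ to match the theorem's rigidity conclusion. A secondary subtlety is confirming that when $\mathcal{A}$ fails, $\mathcal{R}_k$ nonetheless produces large matrices for infinitely many inputs; if not, cubic MM applied to the tiny instances already refutes SETH (hence NSETH) directly, so the nondeterministic branch can be bypassed.
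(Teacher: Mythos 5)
Your proposal is correct and follows essentially the same strategy as the paper's proof: enumerate sparse $k$-CNFs (via the Sparsification Lemma), simulate the hypothesized reduction $R_k$, use the $\NP$ oracle to test the produced MM instances for rigidity, and in the complementary case build a nondeterministic $k$-TAUT algorithm that answers each MM query either by brute force (small dimension) or by Lemma~\ref{lem:nondet} (large, non-rigid). The parameter condition you state, $\gamma'+\alpha \leq \gamma(1-\eps)$, is exactly what the paper arranges by picking $\eps_0 = (\gamma-\gamma')/(2\gamma)$, and your soundness observation (that nondeterministic guesses are deterministically verified, so accepting paths feed $R_k$ correct MM answers) is the key point the paper relies on as well. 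The only places you are looser than the paper are bookkeeping: the sparsification parameter $\lambda$ must be chosen strictly smaller than $\delta(\eps)$ so that the $2^{\lambda n}$ blow-up still leaves a $2^{(1-\Omega(1))n}$ bound, and the ``small'' cutoff must be fixed (the paper uses dimension $\leq 2^{\delta_0 n/12}$) so that cubic multiplication on all small queries stays within budget; also, the closing remark that ``cubic MM on tiny instances refutes SETH directly'' is unnecessary, since the paper's single nondeterministic algorithm already covers the mixed case of small and large-non-rigid queries. None of these affect correctness.
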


\begin{proof}
    The high-level idea of the proof is the following. Assume there is a fine-grained reduction~$R_k$ from $k$-SAT to $\textrm{MM}_{\F_q}$ for every $k$. For every $k$-SAT formula on $n$ variables, $R_k$ adaptively produces a (possibly exponentially large) sequence of MM instances. If for all large enough~$n$, all the produced matrices are non-rigid, then from \cref{lem:nondet} we have an efficient non-deterministic algorithm for all encountered instances of MM, and, thus, an efficient non-deterministic algorithm for $k$-TAUT. This refutes NSETH.
    On the other hand, if for infinitely many values of~$n$, at least some of the produced instances of MM are rigid, then we have an algorithm that brute forces all $k$-SAT instances, applies $R$ to them, and then uses the $\NP$ oracle to verify if they are rigid. We will show that this algorithm finds rigid matrices $M\in\F_q^{N\times N}$ in time $N^{O(\log\log N)}$ with an $\NP$ oracle. Below we formalize this argument.

    Assume that $\textrm{MM}_{\F_q}$ is $n^\gamma$-SETH-hard (see \cref{def:seth-hardness}). Then there exists a function $\delta\colon\R_{>0}\to\R_{>0}$ such that for every $k\in\N$, $(k\text{-SAT},2^n)\leq_{\delta} (\textrm{MM}_{\F_q},n^\gamma)$. Let us fix $\eps_0=(\gamma-\gamma')/(2\gamma)\in(0,1)$ and $\delta_0=\delta(\eps_0)\in(0,1)$, where $\delta$ is the function from the definition of SETH-hardness. For every $k$, let $R_k$ be the fine-grained reduction from $k$-SAT to $\textrm{MM}_{\F_q}$ guaranteed by the $n^\gamma$-SETH-hardness of $\textrm{MM}_{\F_q}$. An $n^{\gamma(1-\eps_0)}$-time algorithm for $\textrm{MM}_{\F_q}$ would imply (via $R_k$) a $2^{n(1-\delta_0)}$-time algorithm for $k$-SAT.
    In particular, the reduction $R_k$ runs in time $O(2^{n(1-\delta_0)})$.
    
    For every $k$, let us consider the following algorithm ${\cal A}_k$ equipped with an $\NP$ oracle. On input~$1^N$, if $N$ is not a power of two, the algorithm halts. Otherwise, the algorithm sets $n=\log_2{N}$ and $\lambda=\delta_0/4$. Let $c=c(k,\lambda)$ be the constant from the Sparsification Lemma (\cref{sparsification_lemma}). The algorithm ${\cal A}_k$ enumerates all $k$-SAT formulas with $n$ variables and at most $cn$ clauses, applies the fine-grained reduction~$R_k$ to each of them, solves all queried instances of $\textrm{MM}_{\F_q}$ by the trivial cubic-time algorithm, and this way adaptively obtains all MM instances produced by~$R_k$. Let us denote these instances of MM by $(A_1,B_1),\ldots,(A_T,B_T)$. For each instance $(A_i, B_i)$ where $A_i,B_i\in\F_q^{M\times M}$, if $M>2^{\delta_0n/12}$ the algorithm checks if it satisfies
    \[
    \cR_{M^{r(\gamma')}}^{\F_q}(A_i),\cR_{M^{r(\gamma')}}^{\F_q}(B_i)\geq M^{s(\gamma')} \,.
    \]
    Since the problem of checking rigidity is trivially in $\coNP$, ${\cal A}_k$ checks this using the $\NP$ oracle. If ${\cal A}_k$ finds a rigid matrix, it outputs the first such found matrix, otherwise it outputs nothing.

    Now we bound the running time of the algorithms ${\cal A}_k$. Enumeration of all $k$-SAT instances with $n$ variables and $cn$ clauses takes time $\binom{O(n^k)}{\leq cn}=2^{O(n\log{n})}$. The running time of the Sparsification Lemma is $2^{\lambda n}=2^{\delta_0 n/4}$. The running time of each execution of the fine-grained reduction is $O(2^{n(1-\delta_0)})$, and the time needed to perform each matrix multiplication is at most $O(2^{3n(1-\delta_0)})$. This leads to the total running time of $2^{O(n\log{n})}=N^{O(\log\log{N})}$.

    If one of the constructed algorithms ${\cal A}_k$ outputs matrices on inputs $1^N$ infinitely often, then we are done, as we have a $\QP^\NP$ algorithm outputting rigid matrices infinitely often. Indeed, the algorithms ${\cal A}_k$ output only rigid matrices, and the dimension of each such matrix is $M>2^{\delta_0 n / 12}=N^{\Theta(1)}$.
    
    In the following, we assume that for each $k$, there exists $N_k$ such that the algorithm ${\cal A}_k$ does not output matrices on inputs $1^N$ for all $N\geq N_k$. Equivalently, there exists $n_k$ such that all $k$-SAT formulas with $n\geq n_k$ variables and $cn$ clauses, when fed to the reduction~$R_k$, result in instances $(A_i,B_i)$ of $\textrm{MM}_{\F_q}$ with $A_i,B_i\in\F_q^{M\times M}$ either of size $M\leq2^{\delta_0 n/12}$ or satisfying 
    \begin{align}\label{eq:rig}
        \cR_{M^{r(\gamma')}}^{\F_q}(A_i)< M^{s(\gamma')}\text{\;\; and \;\;}
        \cR_{M^{r(\gamma')}}^{\F_q}(B_i)< M^{s(\gamma')}\; .
    \end{align}
    We will use this to design a non-deterministic algorithm that solves $k$-TAUT in time $O(2^{n(1-\delta_0)})$ for every $k$, which refutes NSETH.

    Given a $k$-DNF formula $\varphi$, we first apply the Sparsification Lemma to (the negation of) $\varphi$ with $\lambda=\delta_0/4$. This gives us $2^{\lambda n}$ $k$-CNF formulas $\varphi_i$ such that $\varphi$ is a tautology if and only each $\varphi_i$ is unsatisfiable. Therefore, solving $k$-TAUT for the negation of each $\varphi_i$ is sufficient for solving the original instance $\varphi$. Now we apply the reduction~$R_k$ to each $\varphi_i$. Recall that the fine-grained reduction runs in time $2^{(1-\delta_0)n}$, and, in particular, creates at most $2^{n(1-\delta_0)}$ instances of MM, each of size at most $2^{n(1-\delta_0)}$. Therefore, the total number $T$ of instances of $\textrm{MM}_{\F_q}$ we obtain is at most $T\leq2^{\lambda n}\cdot 2^{n(1-\delta_0)}=2^{(1-3\delta_0/4)n}$. We solve each instance of size $M\leq 2^{\delta n /12}$ by a trivial cubic-time algorithm. The overall running time to solve all such small instances is at most $T\cdot 2^{3\delta_0 n /12} \leq 2^{n(1-\delta_0/2)}$.
    
    Now we use the assumption that the algorithm ${\cal A}_k$ does not find rigid matrices for all but finitely many inputs. Therefore, for every $n\geq n_k$, every instance $(A_i, B_i)$, $A_i,B_i\in\F^{M\times M}$ where $M> 2^{\delta_0 n/12}$ satisfies \cref{eq:rig}.
    We apply \cref{lem:nondet} with $\alpha=(\gamma-\gamma')$ to compute $A_i\cdot B_i$ in non-deterministic time $M^{\gamma'+\alpha}=M^{(\gamma+\gamma')/2}=M^{\gamma(1-\eps_0)}$. By the guarantee of the SETH-hardness reduction~$R_k$, we have that a non-deterministic algorithm solving $\textrm{MM}_{\F_q}$ in time $M^{\gamma(1-\eps_0)}$ implies a non-deterministic algorithm for $k$-TAUT with running time $2^{n(1-\delta_0)}$. This refutes NSETH, and finishes the proof of the theorem.
\end{proof}

\subsection{Comparison with Known Rigidity Results} 
\label{sec:rigidity-comparison}

While the rigidity parameters that come out of $n^{\gamma}$-SETH-hardness of matrix multiplication via \cref{thm:non-seth-hardness} are not strong enough for Valiant's program for proving circuit lower bounds, they would improve on all currently known constructions of rigid matrices. The main problem in this area is to find an \emph{explicit} matrix (i.e., a matrix from a complexity class where we cannot prove circuit lower bounds: $\P,\NP$, or even $\cc{E}^\NP$) of high rigidity for linear rank $r=\Theta(n)$.

The best known rigidity lower bound for matrices constructible in polynomial time (i.e., from the class~$\P$) is $\cR_r^{\F}(A)\geq \Omega\left(\frac{n^2}{r}\cdot\log{\frac{n}{r}}\right)$~\cite{friedman1993note,pudlak1994some,shokrollahi1997remark}, and it falls short of achieving the rigidity parameters needed for Valiant's program (that requires rigidity $\Omega(n^{1+\eps})$ for $r=\Theta(n)$ for a constant~$\eps>0$). Goldreich and Tal~\cite{goldreich2016matrix} give a matrix $A\in\F^{n\times n}$ constructible in time $2^{O(n)}$ (i.e., in the class~$\cc{E}$) of rigidity $\cR_r^\F(A)\geq\Omega \left(\frac{n^3}{r^2\log(n)}\right)$ for any $r\geq\sqrt{n}$, which improves on the previous constructions for $r=o\left(\frac{n}{\log(n)\log\log(n)}\right)$. Alman and Chen, and Bhangale et al.~\cite{alman2022efficient,bhangale2020rigid} give a matrix $A\in\F_2^{n\times n}$ achieving very high rigidity $\cR_r^{\F_2}(A)\geq\Omega(n^2)$ for sub-polynomial rank $r=2^{\eps\log(n)/\log\log(n)}$ in $\P^\NP$.

We now compare the rigidity bounds implied by \cref{thm:non-seth-hardness} from $n^\gamma$-SETH-hardness for various values of $\gamma>2$ to the known lower bounds on rigidity. For this, we use the bounds from \cref{eq:rbound,eq:sbound} and the values of $\omega$ and $\dualMMexp$ from \cref{eq:MMexp-UB,eq:dualMMexp-LB}. The rigidity bound implied by \cref{thm:non-seth-hardness} from $n^\gamma$-SETH-hardness for $\gamma  \geq 2.37$ would give us a matrix~$A$ of rigidity $\cR_r^{\F}(A)\geq n^{1.68}$ for rank as high as $r=n^{0.997}$. The rigidity bound obtained from $\gamma \geq 2.24$  would improve upon the best known construction in $\cc{E}$~\cite{goldreich2016matrix} for every $r\geq n^{0.751}$. The rigidity bound obtained from any $\gamma \geq 2.17$ would already improve upon the best known constructions in $\P$~\cite{friedman1993note,pudlak1994some,shokrollahi1997remark} for every $r\geq n^{0.63}$.  Finally, the bound obtained from any $\gamma \geq 2$ would give us rigidity for higher rank than the known $\P^{\NP}$ constructions~\cite{alman2022efficient,bhangale2020rigid}.

\section{Reductions Between Variants of MMV}
\label{sec:reductions}

In this section, we describe variants of MMV and study their relative difficulty compared to (standard) MMV and each other under deterministic reductions using $O(n^2)$ ring operations. (For clarity and conciseness, in this section we assume that basic arithmetic operations over arbitrary rings take constant time, and so refer to these reductions simply as $O(n^2)$-time deterministic operations.)
See \cref{fig:web-of-reductions} for a summary of these variants and reductions between them.

\begin{figure}[t]
    \centering
    \begin{tikzpicture}[
        problem/.style={rectangle, rounded corners, draw=black, fill=black!5, very thick, minimum size=5mm},
    ]
        \node[problem] (mmv) {\MMV};
        \node[problem] (all zeroes) [above=of mmv]{\AZ};
        \node[problem] (inverse verification) [above=of all zeroes]{\IV};
        \node[problem] (symmetric mmv) [below=of mmv] {\SymMMV};
        \node[problem] (strongly symmetric mmv) [left=of symmetric mmv] {\StrongSymMMV};
        \node[problem] (k all zeroes) [right=of all zeroes] {\kAZ};
        \node[problem] (pairwise orthogonality) [above=of strongly symmetric mmv] {Mono-All-Pairs-OV};
        \node[problem] (kmmv) [below=of k all zeroes] {\kMMV};
        \node[problem] (matrix product sparsity) [above=of k all zeroes] {\MPS};
        \node[problem] (mm) [right=of matrix product sparsity] {\MM};

        \draw[->, very thick] (strongly symmetric mmv.east) to (symmetric mmv.west);
        \draw[<->, very thick, red] (symmetric mmv.north) to (mmv.south);
        \draw[<->, very thick, red] (all zeroes.north) to (inverse verification.south);
        \draw[<->, very thick] (all zeroes.south) to (mmv.north);
        \draw[->, very thick] (all zeroes.east) to[out=0, in=180] (matrix product sparsity.west);
        \draw[->, very thick] (mmv.east) to[out=0, in=180] (kmmv.west);
        \draw[<->, very thick, red] (kmmv.north) to (k all zeroes.south);
        \draw[->, very thick] (matrix product sparsity.east) to (mm.west);
        \draw[->, very thick] (kmmv.east) to[out=0, in=270] (mm.south);
        \draw[->, very thick] (all zeroes.east) to[out=0, in=180] (k all zeroes.west);
        \draw[->, very thick, red] (pairwise orthogonality.east) to[out=0, in=180] (mmv.west);
    \end{tikzpicture}
    \caption{A diagram of reductions among $\MMV$ and related problems on $n \times n$ matrices. Arrows represent deterministic $O(n^2)$-time reductions (and double-headed arrows denote equivalences under such reductions).
    {\color{red} Red} arrows indicate new (non-trivial) reductions.
    }
    \label{fig:web-of-reductions}
\end{figure}

\subsection{Problems Equivalent to MMV}

In this section we present two problems (special cases of MMV) that are in fact equivalent to (standard) MMV under deterministic $O(n^2)$-time reductions.
These problems are: (1) the Inverse Verification Problem, and (2) the Symmetric MMV problem.

\subsubsection{Inverse Verification}
\label{sec:inverse-verification}

It is known that matrix multiplication is $O(n^2)$-time reducible to \emph{matrix inversion}~\cite[Proposition 16.6]{algebraic-complexity-theory-1997}, and in fact that the problems are equivalent under $O(n^2)$-time reductions.
Here we use a very similar reduction to relate the verification analogs of these problems.
We first define the Inverse Verification Problem.

\begin{definition}[$\IV_R$]
The Inverse Verification Problem over a ring $R$ ($\IV_R$) is defined as follows.
Given matrices $A, B \in R^{n \times n}$ for some $n \in \Z^+$ as input, decide whether $B = A^{-1}$.
\end{definition}

We next give the reduction.

\begin{theorem} \label{thm:all_zeroes_to_inverse_verif}
For any ring $R$, there is an $O(n^2)$-time reduction from $\AZ_R$ on $n \times n$ matrices to $\IV_R$ on $3n \times 3n$ matrices.
\end{theorem}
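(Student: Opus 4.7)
The plan is to use a standard block-matrix identity that turns multiplication into inversion, but applied to the verification setting. Given an instance $(A,B) \in R^{n \times n} \times R^{n \times n}$ of $\AZ_R$, I will construct two $3n \times 3n$ block-upper-triangular matrices with identity diagonal blocks, whose product is the identity if and only if $AB = 0$.

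Concretely, I propose setting
\[
M := \begin{pmatrix} I_n & A & 0 \\ 0 & I_n & B \\ 0 & 0 & I_n \end{pmatrix}, \qquad N := \begin{pmatrix} I_n & -A & 0 \\ 0 & I_n & -B \\ 0 & 0 & I_n \end{pmatrix},
\]
and outputting the $\IV_R$ instance $(M, N)$. Each of $M$ and $N$ is built in $O(n^2)$ time by copying the entries of $A$ and $B$ (and negating in the second case), placing them in the appropriate blocks and filling the rest with $0$'s or $1$'s on the diagonal.

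The key step is the direct computation of $MN$. Block-multiplying gives
\[
MN = \begin{pmatrix} I_n & 0 & -AB \\ 0 & I_n & 0 \\ 0 & 0 & I_n \end{pmatrix},
\]
because the $A$ and $-A$ terms in the $(1,2)$ block cancel and the $B, -B$ terms in the $(2,3)$ block cancel, while the only surviving off-diagonal contribution comes from multiplying the $A$ in the first row of $M$ with the $-B$ in the last column of $N$. Since $M$ is upper-triangular with $I_n$'s on the diagonal, it is invertible over $R$; thus $N = M^{-1}$ if and only if $MN = I_{3n}$, which by the above formula holds if and only if $AB = 0$. This proves both directions of the reduction, and correctness combined with the $O(n^2)$-time construction finishes the proof.

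There is essentially no obstacle here: the argument is a routine block-matrix calculation and the only mild subtlety is noting that the identity $N = M^{-1}$ is equivalent to the one-sided identity $MN = I_{3n}$, which for unitriangular matrices over any (not necessarily commutative) ring $R$ follows because $M$ has a genuine two-sided inverse given by back-substitution.
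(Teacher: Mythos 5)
Your construction is exactly the one in the paper: the same block-unitriangular matrices $M$ and $N$, the same computation $MN = \bigl(\begin{smallmatrix} I & 0 & -AB \\ 0 & I & 0 \\ 0 & 0 & I \end{smallmatrix}\bigr)$, and the same conclusion. Correct, and your extra remark about the two-sided invertibility of unitriangular matrices over a noncommutative ring is a nice touch the paper leaves implicit.
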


\begin{proof}
Let $A, B \in R^{n \times n}$ be the input instance of $\AZ_R$, and define the block matrices
\[
    A' = \begin{bmatrix}
        I & A & 0 \\
        0 & I & B \\
        0 & 0 & I
    \end{bmatrix}, \ \ \ \ 
    B' = \begin{bmatrix}
        I & -A & 0 \\
        0 & I & -B \\
        0 & 0 & I
    \end{bmatrix} \ \text{.}
\]
It is straightforward to check that 
\[
    A'B' =
    \begin{bmatrix}
        I & 0 & -AB \\
        0 & I & 0 \\
        0 & 0 & I
    \end{bmatrix} \ \text{,}
\]
and so $A'B' = I_{3n}$ if and only if $AB = 0$.
\end{proof}

\subsubsection{Symmetric MMV}
$\IV$ is not the only variant (special case) of $\MMV$ that is equivalent to $\MMV$.
Suppose that the input matrices $A$ and $B$ in an $\MMV$ instance are symmetric.
One might hope this would make $\MMV$ easier, however we next show that this is not the case.

\begin{definition}[$\SymMMV_R$]
The Symmetric Matrix Multiplication Verification Problem over a ring $R$ ($\SymMMV_R$) is defined as follows.
Given symmetric matrices $A, B \in R^{n \times n}$ and a matrix $C \in R^{n \times n}$ for some $n \in \Z^+$ as input, decide whether $AB = C$.
\end{definition}
Note that this definition does not require that $C$ be a symmetric matrix.

\begin{theorem} \label{thm:mmv_to_symMmv}
For any ring $R$, there is an $O(n^2)$-time reduction from $\MMV_R$ on $n \times n$ matrices to $\SymMMV_R$ on $3n \times 3n$ matrices. 
\end{theorem}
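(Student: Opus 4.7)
The plan is to reduce $\MMV_R$ on $n \times n$ matrices to $\SymMMV_R$ on $3n \times 3n$ matrices by constructing, from an input triple $(A, B, C)$, symmetric matrices $A', B' \in R^{3n \times 3n}$ and a matrix $C' \in R^{3n \times 3n}$ (not necessarily symmetric) with the property that $A'B' = C'$ if and only if $AB = C$. The standard way to symmetrize a matrix $M$ is to place $M$ and $M^T$ in the two off-diagonal blocks of a $2n \times 2n$ matrix. A direct $2n \times 2n$ version of the reduction along these lines produces diagonal blocks of $A'B'$ involving $(BA)^T$, which we cannot verify without solving another $\MMV$ instance. Using the extra block of slack afforded by the $3n$ dimension, we can instead arrange the supports of $A$ inside $A'$ and of $B$ inside $B'$ so that exactly one off-diagonal block of $A'B'$ is non-zero and equals $AB$.

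Concretely, I would take $A'$, $B'$, and $C'$ to be the $3n \times 3n$ block matrices
\[
A' := \begin{bmatrix} 0 & A & 0 \\ A^T & 0 & 0 \\ 0 & 0 & 0 \end{bmatrix}, \quad B' := \begin{bmatrix} 0 & 0 & 0 \\ 0 & 0 & B \\ 0 & B^T & 0 \end{bmatrix}, \quad C' := \begin{bmatrix} 0 & 0 & C \\ 0 & 0 & 0 \\ 0 & 0 & 0 \end{bmatrix},
\]
where every block has size $n \times n$. First I would verify directly from the block layout that $(A')^T = A'$ and $(B')^T = B'$: the non-zero blocks of each are placed at mirror positions with transposed contents. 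Next, I would compute $A'B'$ block-by-block using $(A'B')_{i,j} = \sum_{k=1}^{3} A'_{i,k} B'_{k,j}$. A short case analysis shows that the only $(i,k,j)$ for which both $A'_{i,k}$ and $B'_{k,j}$ are non-zero is $(i,k,j) = (1, 2, 3)$, giving $(A'B')_{1,3} = A \cdot B$ and $(A'B')_{i,j} = 0$ for every other $(i,j)$. Hence $A'B' = C'$ if and only if $AB = C$, which is exactly the desired correctness.

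Finally, the map $(A, B, C) \mapsto (A', B', C')$ uses only $O(n^2)$ ring operations: every entry of $A'$, $B'$, and $C'$ is either $0$ or a (possibly transposed) entry of $A$, $B$, or $C$, so the construction amounts to copying $O(n^2)$ scalars into prescribed positions. The main (and essentially only) conceptual step is choosing the block layout so that the supports of $A$ and $B$ in $A'$ and $B'$ are staggered in a way that avoids spurious cross terms; once the layout is fixed, the rest is a direct block computation with no real obstacle.
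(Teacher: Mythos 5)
Your construction is correct, and it is a genuinely different (and cleaner) reduction than the one in the paper, though both produce $3n\times 3n$ instances in $O(n^2)$ time. You symmetrize $A$ and $B$ by placing them, together with their transposes, in \emph{staggered} off-diagonal block positions: $A'$ is supported on blocks $(1,2)$ and $(2,1)$, $B'$ on blocks $(2,3)$ and $(3,2)$, so the block-support patterns overlap in only the single index $k=2$, and the product collapses to $(A'B')_{1,3}=AB$ with all other blocks zero. In contrast, the paper uses a denser construction with identity blocks on the diagonal,
\[
A' = \begin{bmatrix} I & 0 & A \\ 0 & I & -A \\ A^T & -A^T & I \end{bmatrix}, \quad B' = \begin{bmatrix} I & 0 & B^T \\ 0 & I & B^T \\ B & B & I \end{bmatrix},
\]
which yields a product $A'B'$ in which $AB$ appears (with signs) in four of the nine blocks and the remaining blocks are fixed combinations of $A$, $B$, $A^T$, $B^T$, and $I$; one then sets $C'$ to match. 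The two approaches have the same cost and generality. Yours is leaner: the block layout makes it immediately obvious that exactly one cross term survives, and $C'$ contains just $C$ in one block. What the paper's version arguably buys is that $A'$, $B'$ carry identity blocks on the diagonal and are not dominated by zero blocks, which can be cosmetically nicer (e.g.\ if one worries about degenerate-looking instances), but nothing in the definition of $\SymMMV_R$ requires that, so your reduction is fully valid. You also correctly diagnose why a naive $2n\times 2n$ attempt fails: it produces $(BA)^T$ in a diagonal block of $A'B'$, which one cannot write into $C'$ without solving another MMV instance; the extra block of dimension is exactly what lets you avoid that.
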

\begin{proof}

Let $A, B, C \in R^{n \times n}$ be the input instance of $\MMV_R$, and define matrices $A', B', C'$ as follows:
\[
    A' = \begin{bmatrix}
        I & 0 & A \\
        0 & I & -A \\
        A^T & -A^T & I
    \end{bmatrix}, \quad
    B' = \begin{bmatrix}
        I & 0 & B^T \\
        0 & I & B^T \\
        B & B & I
    \end{bmatrix}, \quad
    C' = \begin{bmatrix}
        I + C & C & B^T + A \\
        -C & I - C & B^T - A \\
        A^T + B & -A^T + B & I
    \end{bmatrix} \ \text{.}
\]
It is easy to check that $A'$ and $B'$ are both symmetric, and, because adding and taking the tranpose of $n \times n$ matrices takes $O(n^2)$ time, $A', B', C'$ are all computable in $O(n^2)$ time.

Moreover, it is easy to check that
\[
A' B' =     \begin{bmatrix}
        I + AB & AB & B^T + A \\
        -AB & I - AB & B^T - A \\
        A^T + B & -A^T + B & I
    \end{bmatrix} \ \text{,}
\]
and hence that $A' B' = C'$ if and only if $AB = C$.
\end{proof}

\subsection{Problems no Harder than MMV}
\label{sec:MMV-easy}
Next, in this section, we present two problems that are $O(n^2)$-time reducible to MMV, but that are not obviously equivalent to MMV and are not obviously solvable in less time than MMV. The problems are: (1) the Monochromatic All-Pairs Orthogonal Vectors Problem, and (2) the Strong Symmetric MMV Problem. We propose these problems (as special cases of MMV) as good targets for making algorithmic progress on MMV.

\subsubsection{Monochromatic All Pairs Orthogonal Vectors}

\begin{definition}[$\MCAPO_R$]
The Monochromatic All Pairs Orthogonal Vectors Problem over a ring $R$ ($\MCAPO_R$) is defined as follows. Given vectors $\vec{a}_1, \ldots, \vec{a}_n \in R^n$ for some $n \in \Z^+$, decide whether for all $i, j \in [n], i \neq j$, $\iprod{\vec{a}_i, \vec{a}_j} = 0$.
\end{definition}

We next show that $\MCAPO$ reduces to $\MMV_{\Z}$.

\begin{lemma}\label{lem:monochrom-all-pairs-orth-reduces-to-mmv}
There is a $O(n^2)$-time reduction from $\MCAPO_R$ on $n$ vectors of length $n$ to $\MMV_{R}$ on $n \times n$ matrices.
\end{lemma}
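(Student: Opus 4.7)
The plan is to reduce $\MCAPO_R$ to $\MMV_R$ by forming the Gram-type matrix of pairwise inner products and handling the diagonal entries separately. Specifically, given vectors $\vec{a}_1, \ldots, \vec{a}_n \in R^n$, let $A \in R^{n \times n}$ be the matrix whose $i$th row is $\vec{a}_i$. Then $(A A^T)_{i,j} = \iprod{\vec{a}_i, \vec{a}_j}$, so the $\MCAPO$ instance is a YES instance if and only if $A A^T$ is a diagonal matrix.

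The only obstacle is that we do not a priori know what the diagonal entries of $AA^T$ should be, and so we cannot directly hand an instance to $\MMV_R$ (which requires the target matrix $C$ as input). To get around this, I will compute the diagonal entries explicitly: set $C$ to be the diagonal matrix with $C_{i,i} := \iprod{\vec{a}_i, \vec{a}_i}$. Computing these $n$ inner products costs $n \cdot O(n) = O(n^2)$ ring operations, and writing $A^T$ likewise takes $O(n^2)$ time. I then output the $\MMV_R$ instance $(A, A^T, C)$.

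For correctness, observe that $AA^T$ and $C$ automatically agree on the diagonal by construction, so $AA^T = C$ holds iff $(AA^T)_{i,j} = 0$ for every $i \neq j$, which is exactly the $\MCAPO$ condition. Hence the reduction is correct, runs in $O(n^2)$ time, and produces an $\MMV_R$ instance on $n \times n$ matrices, as required. No step here looks delicate; the main conceptual point is simply that the diagonal of $AA^T$ is cheap to compute directly, so it can be ``hard-coded'' into $C$ while $\MMV_R$ takes care of verifying the quadratically many off-diagonal conditions.
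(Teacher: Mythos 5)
Your reduction is correct and is essentially identical to the paper's: the paper takes $A = (\vec{a}_1, \ldots, \vec{a}_n)$ with the $\vec{a}_i$ as columns and outputs the instance $(A^T, A, C)$ with $C = \diag(\iprod{\vec{a}_i,\vec{a}_i})$, which is just the transpose of your setup. Both arguments compute the diagonal of the Gram matrix directly in $O(n^2)$ time and let $\MMV_R$ verify the off-diagonal vanishing.
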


\begin{proof}
Let $\vec{a}_1, \ldots, \vec{a}_n \in R^n$ be the input instance of $\MCAPO$, 
and define $A := (\vec{a}_1,\ldots,\vec{a}_n) \in R^{n \times n}$. Additionally, compute
    \[
        C = \diag(\iprod{\vec{a}_1, \vec{a}_1}, \iprod{\vec{a}_2, \vec{a}_2}, \ldots, \iprod{\vec{a}_n, \vec{a}_n}) =
        \begin{bmatrix}
           \iprod{\vec{a}_1, \vec{a}_1} & 0 & \cdots & 0 \\
            0 & \iprod{\vec{a}_2, \vec{a}_2} & \cdots & 0 \\
            \vdots & \vdots & \ddots & 0 \\
            0 & 0 & \cdots & \iprod{\vec{a}_n, \vec{a}_n}
        \end{bmatrix} \ \text{,}
    \]
and output the $\MMV_{R}$ instance $A^T, A, C$.
It is straightforward to check that $A^T A = C$ if and only if for all $i, j \in [n], i \neq j$, $\iprod{\vec{a}_i, \vec{a}_j} = 0$. Moreover, computing each of $A$, $A^T$, and $C$ requires $O(n^2)$ time, as needed.
\end{proof}

We note that, interestingly, while it is relatively straightforward to reduce $\MCAPO_R$ to $\MMV_R$ in $O(n^2)$ time, it is not clear that such a reduction in the other direction exists. In contrast, the bichromatic analogue of $\MCAPO$ (in which given $\vec{a}_1, \ldots, \vec{a}_n, \vec{b}_1, \ldots, \vec{b}_n$ the goal is to decide whether $\iprod{\vec{a}_i, \vec{b}_j} = 0$ for all $1 \leq i, j \leq n$) is equivalent to $\AZ$.

\subsubsection{Strong Symmetric MMV}
For our second problem that is no harder than $\MMV$, we consider the special case of $\SymMMV$ when $C$ must also be symmetric.
We call this problem $\StrongSymMMV$ and define it as follows.
\begin{definition}[$\StrongSymMMV_R$]
The Strong Symmetric Matrix Multiplication Verification Problem over a ring $R$ ($\StrongSymMMV_R$) is defined as follows. Given symmetric matrices $A, B, C \in R^{n \times n}$ for some $n \in \Z^+$ as input, decide whether $AB = C$.
\end{definition}

We note that the reduction in \cref{thm:mmv_to_symMmv} does \emph{not} work as a reduction to $\StrongSymMMV$. On the other hand, it is not clear how to solve $\StrongSymMMV$ more efficiently than $\MMV$. However, finding faster algorithms for $\StrongSymMMV$ seems like a good starting point for finding faster algorithms for general $\MMV$. 

We make several structural observations to this end.
First, we note that the product of two symmetric matrices is itself symmetric if and only if the two matrices commute.
That is, for symmetric matrices $A$ and $B$, $AB=(AB)^T$ if and only if $AB=BA$.
Second, since symmetric matrices (over a field $\F$) are always diagonalizable, we know that two diagonalizable matrices commute if and only if they are simultaneously diagonalizable (see \cite[Theorem 1.3.21]{matrix-analysis-horn-johnson}).%
\footnote{We say that matrices $A, B \in \F^{n \times n}$ are \textit{simultaneously diagonalizable} if there exists some matrix $P \in \F^{n \times n}$ such that $P A P^{-1}$ and $P B P^{-1}$ are both diagonal matrices.}

While it seems like it may be possible to exploit these properties to find a faster algorithm, there are some caveats.
The added restriction that $C$ is symmetric still does not guarantee that $AB$ is itself symmetric, meaning the above properties ($A$ and $B$ commuting and being simultaneously diagonalizable) are not guaranteed, so we cannot assume them to be true.
That is, if $AB$ is symmetric and therefore $A$ and $B$ commute and are simultaneously diagonalizable, it does not guarantee that $AB=C$.
So, although these observations about symmetric matrices $A$ and $B$ may be useful, they do not obviously lead to a faster algorithm for $\StrongSymMMV$.

\subsection{Problems at Least as Hard as MMV but no Harder than MM}
Finally, we present two problems to which MMV is deterministically $O(n^2)$-time reducible, and that are deterministically $O(n^2)$-time reducible to MM. These problems are: (1) the $k$-MMV problem, and (2) the Matrix Product Sparsity Problem (MPS). We also present the $k$-$\AZ$ problem, and show that it is equivalent to $k$-$\MMV$.

\subsubsection{Generalized \texorpdfstring{$\AZ$}{AllZeroes} Equivalence}

So far, we have only discussed verifying the product of \emph{two} matrices. But, it is also natural to ask about the complexity of verifying the product of $k > 2$ matrices.
Accordingly, we define the following problems.

\begin{definition}[$\kMMV_R$]
Let $k \geq 2$ be a fixed integer.
The $k$-Matrix Multiplication Verification Problem over a ring $R$ ($\kMMV_R$) is defined as follows. Given matrices $A_1, \ldots, A_k, C \in R^{n \times n}$ for some $n \in \Z^+$ as input, decide whether $\prod_{i=1}^k A_i = C$.
\end{definition}

\begin{definition}[$\kAZ_R$]
Let $k \geq 2$ be a fixed integer.
The $k$-All Zeroes Problem over a ring $R$ ($\kAZ_R$) is defined as follows. Given matrices $A_1, \ldots, A_k \in R^{n \times n}$ for some $n \in \Z^+$ as input, decide whether $\prod_{i=1}^k A_i = 0$.
\end{definition}

We note that $2$-$\MMV_R$ and $2$-$\AZ_R$ are ``standard'' $\MMV_R$ and $\AZ_R$, respectively.

It turns out that many algorithms (including ours) for $\MMV$ also work for $k$-$\MMV$, and in fact we can generalize certain results about the complexity of $\MMV$.
Additionally, it is easy to reduce $\MMV$ to $k$-$\MMV$ for any $k \geq 2$, but the reduction in the opposite direction is unclear.
We first show that $k$-$\MMV$ and $k$-$\AZ$ are equivalent.

\begin{theorem} \label{thm:k_MMV_to_k_AllZeroes}
    Let $k$ be some fixed positive integer, and let $R$ be a ring.
    There is a $O(n^2)$-time reduction from $k$-$\MMV_R$ on $n \times n$ matrices to $k$-$\AZ_R$ on $2n \times 2n$ matrices. 
\end{theorem}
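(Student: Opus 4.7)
The plan is to generalize the block-matrix construction used for \cref{thm:mmv-all-zeroes-reduction} so that a product of $k$ input matrices, rather than just two, can be ``folded'' into a single $k$-$\AZ_R$ instance on $2n \times 2n$ matrices.

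Concretely, given an instance $(A_1, \ldots, A_k, C)$ of $k$-$\MMV_R$, I would define
\[
    A'_1 := \begin{bmatrix} A_1 & -C \\ 0 & 0 \end{bmatrix}, \qquad
    A'_i := \begin{bmatrix} A_i & 0 \\ 0 & I \end{bmatrix} \text{ for } 2 \leq i \leq k-1, \qquad
    A'_k := \begin{bmatrix} A_k & 0 \\ I & 0 \end{bmatrix},
\]
and output $(A'_1, \ldots, A'_k)$ as the $k$-$\AZ_R$ instance. The intuition is that $A'_k$ routes an identity block up into the $(1,2)$ block position, the middle matrices propagate this block unchanged via their lower-right $I$, and $A'_1$ meets it with $-C$ to produce the difference $A_1 \cdots A_k - C$ in the top-left.

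To formalize correctness, I would prove by downward induction on $j$ that the tail product satisfies
\[
    A'_j A'_{j+1} \cdots A'_k = \begin{bmatrix} A_j A_{j+1} \cdots A_k & 0 \\ I & 0 \end{bmatrix} \quad \text{for } 2 \leq j \leq k.
\]
The base case $j = k$ is immediate, and the inductive step reduces to a direct block multiplication using the fact that $A'_{j-1}$ has an $I$ in its lower-right block for $3 \leq j \leq k$. Multiplying the $j = 2$ case on the left by $A'_1$ then yields
\[
    \prod_{i=1}^k A'_i = \begin{bmatrix} A_1 A_2 \cdots A_k - C & 0 \\ 0 & 0 \end{bmatrix},
\]
which vanishes if and only if $\prod_{i=1}^k A_i = C$.

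For the running time, each $A'_i$ is a $2n \times 2n$ block matrix whose entries are entries of $A_i$, entries of $-C$, or the constants $0$ and $1$, so each can be written down using $O(n^2)$ ring operations. Since $k$ is a fixed constant, the total cost is $O(k n^2) = O(n^2)$, as required. I do not foresee any serious obstacle: once the asymmetric block shapes for $A'_1$ (which absorbs $-C$) and $A'_k$ (which threads $I$ into the $(2,1)$ slot) are chosen correctly, the induction through the middle matrices is mechanical, and the $k = 2$ specialization recovers (a minor variant of) \cref{thm:mmv-all-zeroes-reduction}.
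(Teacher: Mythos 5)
Your construction is correct and is essentially the same as the paper's: both use $2\times 2$ block matrices where the middle factors are $\mathrm{diag}(A_i, I)$ and the difference $\prod_i A_i - C$ is assembled in the top-left block; you merely place $-C$ in $A'_1$ and the threading $I$ in $A'_k$, whereas the paper does the mirror image ($I$ in $A'_1$, $-C$ in $A'_k$). Your explicit downward induction on the tail product is a slightly more careful write-up of the same one-line block computation.
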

\begin{proof}
    Let $A_1, \ldots, A_n, C \in R^{n \times n}$ be the input instance of $k$-$\MMV_R$. We output
    \begin{align*}
    A_1' &:=  \begin{bmatrix}
        A_1 & I \\
        0 & 0
    \end{bmatrix} \ \text{,} \\
    A_i' &:=  \begin{bmatrix}
        A_i & I \\
        0 & 0
    \end{bmatrix} \text{ for $1 < i < k$, and} \\
    A_k' &:= \begin{bmatrix}
        A_k & 0 \\
        -C & 0
    \end{bmatrix}
    \end{align*}
    as our $k$-$\AZ_R$ instance.
    
    Note that 
    \[
    \prod_{i=1}^k A_i' =
    \begin{bmatrix}
        A_1 & I \\
        0 & 0
    \end{bmatrix}
    \cdot
    \prod_{i=2}^{k-1}
    \begin{bmatrix}
        A_i & 0 \\
        0 & I
    \end{bmatrix}
    \cdot
    \begin{bmatrix}
        A_k & 0 \\
        -C & 0
    \end{bmatrix}
    =
    \begin{bmatrix}
        \prod_{i=1}^k A_i - C & 0 \\
        0 & 0
    \end{bmatrix} \ \text{,}
    \]
    and that this product equals the all-zeroes matrix if and only if $\prod_{i=1}^k A_i = C$.
\end{proof}

We also note that the reduction in the other direction (from $k$-$\AZ_R$ to $k$-$\MMV_R$) follows immediately by setting $C = 0$.

\subsubsection{Relation to Orthogonal Vectors and \MPSlong{}}

We noted previously how both the monochromatic and bichromatic versions of All-Pairs-OV reduce to $\MMV$, and how the bichromatic version is equivalent to $\AZ$.
However, essentially all variations of Orthogonal Vectors can be generalized by the (monochromatic or bichromatic) \emph{counting version} of the Orthogonal Vectors Problem ($\SOV$), in which the goal is to decide whether the number of orthogonal pairs is less than or equal to an input parameter $t$.
It is natural then to study how $\SOV$ is related to $\MMV$.
To do so, we introduce an equivalent matrix formulation of $\SOV$ that we call the \emph{Matrix Product Sparsity Problem}, the goal of which is to decide whether $\norm{AB}_0 \leq t$ for some input matrices $A, B$ and number $t$.

\begin{definition}[$\MPS_R$]
The Matrix Product Sparsity Problem over a ring $R$ ($\MPS_R$) is defined as follows. Given matrices $A, B \in R^{n \times n}$ for some $n \in \Z^+$ and $t \in \Z^+$ as input, decide whether $\norm{AB}_0 \leq t$.
\end{definition}

It immediately follows that $\MPS$ is intermediate in difficulty to $\MMV$ and $\MM$.
\begin{lemma}
Let $R$ be a ring. There is an $O(n^2)$-time reduction from $\MMV_R$ on $n \times n$ matrices to $\MPS_R$ on $n \times n$ matrices, and from $\MPS_R$ on $n \times n$ matrices to $\MM_R$ on $n \times n$ matrices.
\end{lemma}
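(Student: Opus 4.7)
The plan is to handle the two reductions separately; both are essentially one-line constructions that leverage results already established in the paper.

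For the first reduction ($\MMV_R \le \MPS_R$), I would compose the known sparsity-preserving reduction from $\MMV_R$ to $\AZ_R$ (\cref{thm:mmv-all-zeroes-reduction}) with the trivial observation that $\AZ_R$ is the threshold-zero case of $\MPS_R$. Concretely, given an $\MMV_R$ instance $A,B,C \in R^{n\times n}$, I construct the $2n \times 2n$ matrices
\[
A' := \begin{bmatrix} A & -I \\ 0 & 0 \end{bmatrix}, \qquad
B' := \begin{bmatrix} B & 0 \\ C & 0 \end{bmatrix},
\]
so that $A'B' = \begin{bmatrix} AB - C & 0 \\ 0 & 0 \end{bmatrix}$ and hence $\norm{A'B'}_0 = \norm{AB-C}_0$. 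I then output the $\MPS_R$ instance $(A', B', t=0)$: the answer is YES iff $\norm{A'B'}_0 \le 0$ iff $AB = C$. Constructing $A'$ and $B'$ requires only $O(n^2)$ ring operations (copying entries and negating $I$), so the whole reduction runs in $O(n^2)$ time, although the target MPS instance is on matrices of dimension $2n$ rather than $n$; this factor-of-two blowup is absorbed by the standard padding convention.

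For the second reduction ($\MPS_R \le \MM_R$), I would simply invoke the $\MM_R$ oracle on the input $(A,B)$ to obtain the product $D = AB$, and then count the number of nonzero entries of $D$ by scanning it once. If this count is at most $t$, output YES; otherwise output NO. The call to $\MM_R$ returns an $n \times n$ matrix, the entrywise scan uses $O(n^2)$ ring operations (just zero-tests and a counter), and no additional work is required, so the overhead beyond the single $\MM$ oracle call is $O(n^2)$.

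Neither step has a real obstacle: the first is a direct application of \cref{thm:mmv-all-zeroes-reduction} together with the observation that ``sparsity at most zero'' means ``identically zero,'' and the second uses the fact that once $AB$ is known explicitly, the sparsity predicate is verifiable in time linear in the matrix size. The only mild subtlety to flag in the writeup is that the first reduction increases the matrix dimension from $n$ to $2n$, which is standard and harmless for an $O(n^2)$-time reduction.
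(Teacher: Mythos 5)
Your proposal is correct and takes essentially the same route as the paper: reduce $\MMV_R$ to $\AZ_R$ via \cref{thm:mmv-all-zeroes-reduction} and observe that $\AZ_R$ is the $t=0$ case of $\MPS_R$, then for $\MPS_R \le \MM_R$ compute $AB$ and scan for nonzeros. You also correctly flag that the first reduction doubles the dimension, which is implicit in the paper's appeal to the $\MMV_R$--$\AZ_R$ equivalence.
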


\begin{proof}
The reduction from $\MMV_R$ goes through $\AZ_R$, to which $\MMV_R$ is equivalent. Indeed, $\AZ_R$ is the special case of $\MPS_R$ when $t = 0$.
For the reduction from $\MPS_R$ to $\MM_R$, given an instance $A, B, t$ of $\MPS_R$ as input, compute $AB$ and then count the number of non-zero entries in $AB$ in $O(n^2)$ time. 
\end{proof}

\bibliographystyle{alpha}
\bibliography{mmv}

\newcommand{\etalchar}[1]{$^{#1}$}
\begin{thebibliography}{BKM{\etalchar{+}}23}

\bibitem[ABB{\etalchar{+}}23]{aggarwal2023lattice}
Divesh Aggarwal, Huck Bennett, Zvika Brakerski, Alexander Golovnev, Rajendra
  Kumar, Zeyong Li, Spencer Peters, Noah Stephens-Davidowitz, and Vinod
  Vaikuntanathan.
\newblock Lattice problems beyond polynomial time.
\newblock In {\em STOC}, 2023.

\bibitem[ABFK24]{abboud2024time}
Amir Abboud, Karl Bringmann, Nick Fischer, and Marvin K{\"u}nnemann.
\newblock The time complexity of fully sparse matrix multiplication.
\newblock In {\em SODA}, 2024.

\bibitem[AC19]{alman2022efficient}
Josh Alman and Lijie Chen.
\newblock Efficient construction of rigid matrices using an {NP} oracle.
\newblock In {\em FOCS}, 2019.

\bibitem[AGHP90]{alon-1990}
Noga Alon, Oded Goldreich, Johan H{\aa}stad, and Ren{\'e} Peralta.
\newblock Simple construction of almost $k$-wise independent random variables.
\newblock In {\em FOCS}, 1990.

\bibitem[AV21]{alman2021refined}
Joah Alman and Virginia {Vassilevska Williams}.
\newblock A refined laser method and faster matrix multiplication.
\newblock In {\em SODA}, 2021.

\bibitem[BCS97]{algebraic-complexity-theory-1997}
Peter Bürgisser, Michael Clausen, and Amin Shokrollahi.
\newblock {\em Algebraic Complexity Theory}, volume 315.
\newblock Springer, 1997.

\bibitem[BGK{\etalchar{+}}23]{belova2023polynomial}
Tatiana Belova, Alexander Golovnev, Alexander~S. Kulikov, Ivan Mihajlin, and
  Denil Sharipov.
\newblock Polynomial formulations as a barrier for reduction-based hardness
  proofs.
\newblock In {\em SODA}, 2023.

\bibitem[BHPT20]{bhangale2020rigid}
Amey Bhangale, Prahladh Harsha, Orr Paradise, and Avishay Tal.
\newblock Rigid matrices from rectangular {PCPs} or: Hard claims have complex
  proofs.
\newblock In {\em FOCS}, 2020.

\bibitem[BKM{\etalchar{+}}23]{belova2023computations}
Tatiana Belova, Alexander~S. Kulikov, Ivan Mihajlin, Olga Ratseeva, Grigory
  Reznikov, and Denil Sharipov.
\newblock Computations with polynomial evaluation oracle: ruling out
  superlinear {SETH}-based lower bounds.
\newblock {\em arXiv:2307.11444}, 2023.

\bibitem[BS04]{buhrman2004quantum}
Harry Buhrman and Robert Spalek.
\newblock Quantum verification of matrix products.
\newblock {\em arXiv:quant-ph/0409035}, 2004.

\bibitem[CGI{\etalchar{+}}16]{carmosino2016nondeterministic}
Marco~L. Carmosino, Jiawei Gao, Russell Impagliazzo, Ivan Mihajlin, Ramamohan
  Paturi, and Stefan Schneider.
\newblock Nondeterministic extensions of the strong exponential time hypothesis
  and consequences for non-reducibility.
\newblock In {\em ITCS}, 2016.

\bibitem[CKSU05]{cohn2005group}
Henry Cohn, Robert Kleinberg, Bal{\'a}zs Szegedy, and Christopher Umans.
\newblock Group-theoretic algorithms for matrix multiplication.
\newblock In {\em FOCS}, 2005.

\bibitem[DWZ23]{duan2022faster}
Ran Duan, Hongxun Wu, and Renfei Zhou.
\newblock Faster matrix multiplication via asymmetric hashing.
\newblock In {\em FOCS}, 2023.

\bibitem[Fre79]{freivalds1979fast}
R{\=u}si{\c{n}}{\v{s}} Freivalds.
\newblock Fast probabilistic algorithms.
\newblock In {\em MFCS}, 1979.

\bibitem[Fri93]{friedman1993note}
Joel Friedman.
\newblock A note on matrix rigidity.
\newblock {\em Combinatorica}, 13:235--239, 1993.

\bibitem[GLL{\etalchar{+}}17]{journals/algorithmica/GasieniecLLPT17}
Leszek Gasieniec, Christos Levcopoulos, Andrzej Lingas, Rasmus Pagh, and
  Takeshi Tokuyama.
\newblock Efficiently correcting matrix products.
\newblock {\em Algorithmica}, 79(2):428--443, 2017.

\bibitem[GRS22]{grs-essential-coding-theory}
Venkatesan Guruswami, Atri Rudra, and Madhu Sudan.
\newblock {\em Essential Coding Theory}.
\newblock Online Draft, 2022.

\bibitem[GT16]{goldreich2016matrix}
Oded Goldreich and Avishay Tal.
\newblock Matrix rigidity of random {Toeplitz} matrices.
\newblock In {\em STOC}, 2016.

\bibitem[Hal]{hall-coding}
Jonathan~I. Hall.
\newblock Notes on coding theory.
\newblock Available at
  \url{https://users.math.msu.edu/users/halljo/classes/codenotes/GRS.pdf}.

\bibitem[HJ12]{matrix-analysis-horn-johnson}
Roger~A. Horn and Charles~R. Johnson.
\newblock {\em Matrix Analysis}.
\newblock Cambridge University Press, 2012.

\bibitem[HTW23]{hon2023verifying}
Wing-Kai Hon, Meng-Tsung Tsai, and Hung-Lung Wang.
\newblock Verifying the product of generalized boolean matrix multiplication
  and its applications to detect small subgraphs.
\newblock In {\em WADS}, 2023.

\bibitem[IP99]{IP99}
Russell Impagliazzo and Ramamohan Paturi.
\newblock The complexity of {$k$-SAT}.
\newblock In {\em CCC}, 1999.

\bibitem[IPZ98]{impagliazzo2001problems}
Russell Impagliazzo, Ramamohan Paturi, and Francis Zane.
\newblock Which problems have strongly exponential complexity?
\newblock In {\em FOCS}, 1998.

\bibitem[IS09]{journals/ipl/IwenS09}
Mark~A. Iwen and Craig~V. Spencer.
\newblock A note on compressed sensing and the complexity of matrix
  multiplication.
\newblock {\em Information Processing Letters}, 109(10):468--471, 2009.

\bibitem[Iwe23]{Iwen-Personal-23}
Mark~A. Iwen, 2023.
\newblock Personal Communication.

\bibitem[JMV15]{jahanjou2015local}
Hamid Jahanjou, Eric Miles, and Emanuele Viola.
\newblock Local reductions.
\newblock In {\em ICALP}, 2015.

\bibitem[KS93]{kimbrel1993probabilistic}
Tracy Kimbrel and Rakesh~Kumar Sinha.
\newblock A probabilistic algorithm for verifying matrix products using
  {$O(n^2)$} time and {$\log_2 n + O(1)$} random bits.
\newblock {\em Information Processing Letters}, 45(2):107--110, 1993.

\bibitem[K{\"u}n18]{kunnemann2018nondeterministic}
Marvin K{\"u}nnemann.
\newblock On nondeterministic derandomization of {Freivalds'} algorithm:
  Consequences, avenues and algorithmic progress.
\newblock In {\em ESA}, 2018.

\bibitem[KW14]{korec2014deterministic}
Ivan Korec and Ji{\v{r}}{\'\i} Wiedermann.
\newblock Deterministic verification of integer matrix multiplication in
  quadratic time.
\newblock In {\em SOFSEM}, 2014.

\bibitem[{Le }12]{le2012faster}
Fran{\c{c}}ois {Le Gall}.
\newblock Faster algorithms for rectangular matrix multiplication.
\newblock In {\em FOCS}, 2012.

\bibitem[{Le }24]{gall2023faster}
Fran{\c{c}}ois {Le Gall}.
\newblock Faster rectangular matrix multiplication by combination loss
  analysis.
\newblock In {\em SODA}, 2024.

\bibitem[LR83]{lotti1983asymptotic}
Grazia Lotti and Francesco Romani.
\newblock On the asymptotic complexity of rectangular matrix multiplication.
\newblock {\em Theoretical Computer Science}, 23(2):171--185, 1983.

\bibitem[LU18]{gall2018improved}
Fran{\c{c}}ois {Le Gall} and Florent Urrutia.
\newblock Improved rectangular matrix multiplication using powers of the
  {Coppersmith-Winograd} tensor.
\newblock In {\em SODA}, 2018.

\bibitem[PR94]{pudlak1994some}
Pavel Pudl{\'a}k and Vojtech R{\"o}dl.
\newblock Some combinatorial-algebraic problems from complexity theory.
\newblock {\em Discrete Mathematics}, 1(136):253--279, 1994.

\bibitem[PS72]{polya1972problems}
George P{\'o}lya and Gabor Szeg{\"o}.
\newblock {\em Problems and Theorems in Analysis: Series, integral calculus,
  theory of functions}.
\newblock Springer, 1972.

\bibitem[RL89]{roth1989mds}
Ron~M. Roth and Abraham Lempel.
\newblock {On MDS codes via Cauchy matrices}.
\newblock {\em IEEE Transactions on Information Theory}, 35(6):1314--1319,
  1989.

\bibitem[Sho90]{shoup1990new}
Victor Shoup.
\newblock New algorithms for finding irreducible polynomials over finite
  fields.
\newblock {\em Mathematics of computation}, 54(189):435--447, 1990.

\bibitem[Sho09]{shoup2009computational}
Victor Shoup.
\newblock {\em A computational introduction to number theory and algebra}.
\newblock Cambridge University Press, 2009.

\bibitem[SSS97]{shokrollahi1997remark}
Mohammad~Amin Shokrollahi, Daniel~A. Spielman, and Volker Stemann.
\newblock A remark on matrix rigidity.
\newblock {\em Information Processing Letters}, 64(6):283--285, 1997.

\bibitem[Val77]{V77}
Leslie~G. Valiant.
\newblock Graph-theoretic arguments in low-level complexity.
\newblock In {\em MFCS}, 1977.

\bibitem[{Vas}15]{V15}
Virginia {Vassilevska Williams}.
\newblock Hardness of easy problems: Basing hardness on popular conjectures
  such as the strong exponential time hypothesis (invited talk).
\newblock In {\em IPEC}, 2015.

\bibitem[{Vas}18]{V18}
Virginia {Vassilevska Williams}.
\newblock On some fine-grained questions in algorithms and complexity.
\newblock In {\em ICM}, 2018.

\bibitem[vzGG13]{modernalgebra}
Joachim von~zur Gathen and Jürgen Gerhard.
\newblock {\em Modern Computer Algebra}.
\newblock Cambridge University Press, 3 edition, 2013.

\bibitem[WW23]{wu2023correcting}
Yu-Lun Wu and Hung-Lung Wang.
\newblock Correcting matrix products over the ring of integers.
\newblock {\em arxiv:2307.12513}, 2023.

\bibitem[WXXZ24]{williams2024new}
Virginia~Vassilevska Williams, Yinzhan Xu, Zixuan Xu, and Renfei Zhou.
\newblock New bounds for matrix multiplication: from alpha to omega.
\newblock In {\em SODA}, 2024.

\bibitem[YZ05]{yuster2005fast}
Raphael Yuster and Uri Zwick.
\newblock Fast sparse matrix multiplication.
\newblock {\em ACM Transactions On Algorithms}, 1(1):2--13, 2005.

\bibitem[Zwi02]{journals/jacm/Zwick02}
Uri Zwick.
\newblock All pairs shortest paths using bridging sets and rectangular matrix
  multiplication.
\newblock {\em Journal of the {ACM}}, 49(3):289--317, 2002.

\end{thebibliography}

\end{document}